\documentclass[final]{article}
\usepackage[utf8]{inputenc}
\usepackage{natbib}
\bibliographystyle{apalike}
\usepackage[top=0.8in, bottom=1in, left=1.25in, right=1.25in]{geometry}
\usepackage{scrextend}
\usepackage[utf8]{inputenc} 
\usepackage[T1]{fontenc}    
\usepackage[english]{babel}
\usepackage{lmodern}
\usepackage[bitstream-charter]{mathdesign}
\usepackage[table]{xcolor}
 \usepackage{hyperref}
 \definecolor{burgundy}{rgb}{0.5, 0.0, 0.13}
\definecolor{camel}{rgb}{0.76, 0.6, 0.42}
\definecolor{chamoisee}{rgb}{0.63, 0.47, 0.35}
\definecolor{grey1}{RGB}{128,128,128}
 \hypersetup{colorlinks=true,linkcolor=burgundy,citecolor=chamoisee,urlcolor=burgundy,linktoc=page}
\usepackage{url}            
\usepackage{booktabs}       
\usepackage{nicefrac}       
\usepackage{microtype}      
\usepackage{dsfont}         
\renewcommand{\epsilon}{\varepsilon}
\renewcommand{\phi}{\varphi}

\newcommand\crule[3][black]{\textcolor{#1}{\rule{#2}{#3}}}

\title{\vspace*{0cm}\bf Random Geometric Graph: Some recent developments and perspectives}
\newcommand{\qu}[1]{\textcolor{red}{#1}}

\usepackage{ulem}

\usepackage{amsthm}
\usepackage{forest,adjustbox}
\usetikzlibrary{ chains, scopes} 

\author{}

\date{}

\usepackage{enumitem}   
\usepackage{amsmath}
\newtheorem{theorem}{Theorem}
\newtheorem{conjecture}{Conjecture}
\newtheorem{Lemma}{Lemma}
\newtheorem{proposition}{Proposition}

\newtheorem{definition}{Definition}
\usepackage{mathtools}
\DeclarePairedDelimiter\ceil{\lceil}{\rceil}
\DeclarePairedDelimiter\floor{\lfloor}{\rfloor}
\usepackage{diagbox}
\usepackage{subcaption}
\usepackage{algorithm,refcount}
\usepackage[noend]{algorithmic}
\usepackage{tikz}
\usetikzlibrary{%
  arrows,%
  calc,%
  shapes.geometric,%
  shapes.misc,%
  shapes.symbols,%
  shapes.arrows,%
  automata,%
  through,%
  positioning,%
  scopes,%
  decorations.shapes,%
  decorations.text,%
  decorations.pathmorphing,%
  shadows}
\usetikzlibrary{positioning}
\newcolumntype{P}[1]{>{\centering\arraybackslash}p{#1}}
\usepackage{multirow}
\usepackage{makecell}
\usepackage{mdframed}

\usepackage{savesym}
\savesymbol{checkmark}
\usepackage{dingbat}
\usepackage{wasysym}
\usepackage{appendix}

\begin{document}

\maketitle

{\bf Quentin Duchemin and Yohann De Castro}
\bigskip

\begin{tabular}{m{1.3cm}m{11cm}}
{\bf Keywords} &Random Geometric Graphs~$\bullet$ Concentration inequality for U-statistics~$\bullet$ Random matrices~$\bullet$ Non-parametric estimation~$\bullet$ Spectral clustering~$\bullet$ Coupling~$\bullet$ Information inequalities
\end{tabular}

\begin{abstract}

The Random Geometric Graph (RGG) is a random graph model for network data with an underlying spatial representation. Geometry endows RGGs with a rich dependence structure and often leads to desirable properties of real-world networks such as the small-world phenomenon
and clustering. Originally introduced to model wireless communication networks, RGGs are now very popular with applications ranging from network user profiling to protein-protein interactions in biology. RGGs are also of purely theoretical interest since the underlying geometry gives rise to challenging mathematical questions. Their resolutions involve results from probability, statistics, combinatorics or information theory, placing RGGs at the intersection of a large span of research communities.
 \\
This paper surveys the recent developments in RGGs from the lens of high dimensional settings and non-parametric inference. We also explain how this model differs from classical community based random graph models and we review recent works that try to take the best of both worlds. As a by-product, we expose the scope of the mathematical tools used in the proofs.
\end{abstract}

\section{Introduction}

\subsection{Random graph models}

Graphs are nowadays widely used in applications to model real world complex systems. Since they are high dimensional objects, one needs to assume some structure on the data of interest to be able to efficiently extract information on the studied system. To this purpose, a large number of models of random graphs have been already introduced. The most simple one is the Erdös-Renyi model~$G(n,p)$ in which each edge between pairs of~$n$ nodes is present in the graph with some probability~$p \in (0,1)$. One can also mention the scale-free network model of Barabasi and Albert \citep{barabasi09} or the small-world networks of Watts and Strogatz \citep{watts98}. We refer to \cite{Channarond15} for an introduction to the most famous random graph models. On real world problems, it appears that there often exist some relevant variables accounting for the heterogeneity of the observations. Most of the time, these explanatory variables are unknown and carry a precious information on the system studied. To deal with such cases, latent space models for network data emerged (see \cite{smith19}). Ones of the most studied latent models are the {\it community based random graphs} where each node is assumed to belong to one (or multiple) community while the connection probabilities between two nodes in the graph depend on their respective membership. The well-known Stochastic Block Model has received increasing attention in the recent years and we refer to \cite{abbe18} for a nice introduction to this model and the statistical and algorithmic questions at stake. In the previous mentioned latent space models the intrinsic geometry of the problem is not taken into account. However, it is known that the underlying spatial structure of network is an important property since geometry affects drastically the topology of networks (see \cite{Barthelemy11} and  \cite{smith19}). To deal with embedded complex systems, spatial random graph models have been studied such as the Random Geometric Graph (RGG). This paper surveys the recent developments in the theoretical analysis of RGGs through the prism of modern statistics and applications.

The theoretical analysis of random graph models is interesting by itself since it often involves elegant and important information theoretic, combinatorial or probabilistic tools. In the following, we adopt this mindset trying to provide a faithful picture of the state of the art results on RGGs focusing mainly on high dimensional settings and non-parametric inference while underlining the main technical tools used in the proofs. We want to illustrate how the theory can impact real data applications. To this end, we will essentially be focused on the following questions: 
\begin{itemize}
\item {\bf Detecting Geometry in RGGs.} Nowadays real world problems often involve high-dimensional feature spaces. A first natural work is to identify the regimes where the geometry is lost in the dimension (see Eq.\eqref{eq:geometry-lost} for a formal definition). Several recent papers have made significant progress towards the resolution of this question that can be formalized as follows. Given a graph of~$n$ nodes, a latent geometry of dimension~$d=d(n)$ and edge density~$p=p(n)$, for what triples~$(n,d,p)$ is the model indistinguishable from~$G(n,p)$?
\item {\bf Non-parametric estimation in RGGs.} By considering other rules for connecting latent points, the RGG model can be naturally extended to cover a larger class of networks. In such a framework, we will wonder what can be learned in an adaptive way from graphs with an underlying spatial structure. We will address non-parametric estimation in RGGs and its extension to growth model.
\item {\bf Connections between RGGs and community based latent models.} Until recently, community and geometric based random graph models have been mainly studied separately. Recent works try to investigate graph models that account for both cluster and spatial structures. We present some of them and we sketch interesting research directions for future works.
\end{itemize}

\subsection{Brief historical overview of RGGs}

The RGG model was first introduced by \cite{gilbert61}  to model the communications between radio stations. Gilbert’s original model was defined as follows: pick points in~$\mathds R^2$ according to a Poisson Point Process of intensity one and join two if their distance is less than some parameter~$r>0$.
The Gilbert model has been intensively studied and we refer to \cite{walters11} for a nice survey of its properties including connectivity, giant component, coverage or chromatic number. The most closely related model is the Random Geometric Graph where~$n$ nodes are independently and identically distributed on the space. A lot of results are actually transferable from one model to the other as presented in \cite[Section 1.7]{penrose03}. In this paper we will focus on the~$n$ points i.i.d. model which is formally defined in the next subsection (see Definition~\ref{def:rgg}). The Random Geometric Graph model was extended to other latent spaces such as the hypercube~$[0,1]^d$, the Euclidean sphere or compact Lie group \cite{meliot2019}. A large body of literature has been devoted to studying the properties of low-dimensional Random Geometric Graphs \cite{penrose03}, \cite{dall02}, \cite{bollobas01}. RGGs have found applications in a very large span of fields. One can mention wireless networks \cite{haenggi09}, \cite{Mao12}, gossip algorithms \cite{gang14}, consensus 
\cite{estrada16}, 
spread of a virus \cite{Preciado09}, protein-protein interactions \cite{desmond08}, citation networks \cite{xie16}.
One can also cite an application to motion planning in \cite{solovey18}, a problem which consists in finding a collision-free path for a robot in a workspace cluttered with static obstacles. The ubiquity of this random graph model to faithfully represent real world networks has motivated a great interest for its theoretical study.

\subsection{Outline}

In Section~\ref{sec:models}, we formally define the RGG and several variant models that will be useful for this article. In Sections~\ref{sec:detecting-geometry}, \ref{sec:non-parametric} and \ref{sec:MRGG}, we describe recent results related to high-dimensional statistic, non-parametric estimation and temporal prediction. Note that in these three sections, we will be working with the~$d$-dimensional sphere~$\mathds S^{d-1}$ as latent space.~$\mathds S^{d-1}$ will be endowed with the Euclidean metric~$\|\cdot\|$ which is the norm induced by the inner product~$\langle \cdot,\cdot \rangle : (x,y) \in \left( \mathds S^{d-1}\right)^2 \mapsto \sum_{i=1}^d x_iy_i$. The choice of this latent space is motivated by both recent theoretical developments in this framework \cite{bubeck2016}, \cite{decastro20}, \cite{allen18}, \cite{issartel21} and by applications
\cite{Pereda19}, \cite{perry20}. 
We further discuss in Section~\ref{sec:community-RGG} recent works that investigate the connections between community based random graph models and RGGs. Contrary to the previous sections, our goal is not to provide an exhaustive review of the literature in Section~\ref{sec:community-RGG} but rather to shed light on some pioneering papers.

\begin{center}
\bgroup
\def\arraystretch{1.1}%
\begin{tabular}{c|c|c}
{\bf Section} & {\bf Questions tackled } & {\bf Model}\\\hline \hline
\ref{sec:detecting-geometry} & Geometry detection & RGG on~$\mathds S^{d-1}$\\\hline
\ref{sec:non-parametric} & Non-parametric estimation & TIRGG on~$\mathds S^{d-1}$\\\hline
\ref{sec:MRGG} & \makecell{Non-parametric estimation \\ \& Temporal prediction} & MRGG 
 on~$\mathds S^{d-1}$\\ \hline
 \ref{sec:community-RGG} & \multicolumn{2}{|c}{\makecell{Connections between community\\based models and RGGs}}
\end{tabular}
\captionof{table}{Outline of the paper. Models are defined in Section~\ref{sec:models}.}
\label{table:outline}
\egroup
\end{center}

\section{The Random Geometric Graph model and its variants}
\label{sec:models}

\begin{minipage}{0.42\linewidth}
The questions that we tackle here can require some additional structure on the model. In this section, we define the variants of the RGG that will be useful for our purpose. Figure~\ref{venn-diagram} shows the connections between these different models.

\end{minipage}
\begin{minipage}{0.27\linewidth}
\centering
\includegraphics[scale=0.4]{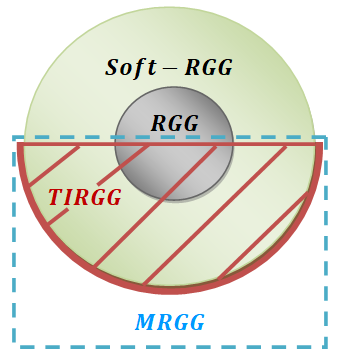}
\label{venn-diagram}
\end{minipage}
\begin{minipage}{0.31\linewidth}
\captionof{figure}{Venn diagram of the different random graph models.}
\end{minipage}

\subsection{(Soft-) Random Geometric Graphs}
\label{general-RGG}

\begin{definition}(Random Geometric Graph: RGG) \\ \label{def:rgg}
Let~$(\mathcal X,\rho)$ be a metric space, and~$m$ be a Borel probability measure on~$\mathcal X$. Given a positive real number~$r>0$, the Random Geometric Graph with~$n\in \mathds N \backslash \{0\}$ points and level~$r>0$ is the random graph~$G$ such that
\begin{itemize}
\item the~$n$ vertices~$X_1,\dots,X_n$ of~$G$ are chosen randomly in~$\mathcal X$ according to the probability measure~$m^{\otimes n}$ on~$\mathcal X^n$.
\item for any~$i,j \in [n]$ with~$i \neq j$, an edge between~$X_i$ and~$X_j$ is present in~$G$ if and only if~$\rho(X_i,X_j)\leq r$.
\end{itemize}
We denote~$\mathrm{RGG}(n,m,(\mathcal X,\rho))$ the distribution of such random graphs.
\end{definition}

Motivated by wireless {\it ad hoc} networks, Soft-RGGs have been more recently introduced (see \cite{Penrose16}). In such models, we are given some function~$H:\mathds R_+ \to [0,1]$ and two nodes at distance~$\rho$ in the graph are connected with probability~$H(\rho)$.

\pagebreak[3]

\begin{definition}(Soft Random Geometric Graph: Soft-RGG) \\ \label{def:soft-rgg}
Let~$(\mathcal X,\rho)$ be a metric space,~$m$ be a Borel probability measure on~$\mathcal X$ and consider some function~$H:\mathds R_+ \to [0,1]$. The Soft (or probabilistic) Random Geometric Graph with~$n\in \mathds N \backslash \{0\}$ points with connection function~$H$ is the random graph~$G$ such that
\begin{itemize}
\item the~$n$ vertices~$X_1,\dots,X_n$ of~$G$ are chosen randomly in~$\mathcal X$ according to the probability measure~$m^{\otimes n}$ on~$\mathcal X^n$.
\item for any~$i,j \in [n]$ with~$i\neq j$, we draw an edge between nodes~$X_i$ and~$X_j$ with probability~$H\left( \rho(X_i,X_j) \right)$.
\end{itemize}
We denote Soft-RGG$(n,m,(\mathcal X,\rho))$ the distribution of such random graphs.
\end{definition}Note that the RGG model with level~$r>0$ is a particular case of the Soft-RGG model where the connection function~$H$ is chosen as~$\rho \mapsto \mathds 1_{\rho \leq r}$. The obvious next special case to consider of Soft-RGG is the so-called percolated RGG introduced in \cite{muller15} which is obtained by retaining each edge of a RGG of level~$r>0$ with probability~$p \in (0,1)$ (and discarding it with probability~$1-p$). This reduces to consider the connection function~$H:\rho \mapsto p \times \mathds 1_{\rho\leq r}$. Particular common choices of connection function are the {\it Reyleigh fading} activation functions which take the form\[H^{Rayleigh}(\rho)= \exp\left[-\zeta\left(  \frac{\rho}{r}\right)^{\eta}\right], \quad \zeta>0, \eta>0.\]  We refer to \cite{Dettmann16} and references therein for a nice overview of Soft-RGGs in particular the most classical connection functions and the question of connectivity in the resulting graphs.

\subsection{Translation Invariant Random Geometric Graphs}

One possible non-parametric generalization of the (Soft)-RGG model is given by the~$W$ random graph model (see for example \cite{diaconis2007graph}) based on the notion of graphon. In this model, given latent points~$x_1,\dots , x_n$ uniformly and independently sampled in $[0, 1]$, the probability to draw an edge between~$i$ and~$j$ is~$\Theta_{i,j}:=W(x_i, x_j )$ where~$W$ is a symmetric function from~$[0, 1]^2$ onto~$[0, 1]$, referred to as a graphon. Hence, the adjacency matrix~$A$ of this graph satisfies \[\forall i,j \in [n], \quad  A_{i,j} \sim \mathrm{Ber}(\Theta_{i,j}),\] where for any~$p \in [0,1]$,~$\mathrm{Ber}(p)$ is the Bernoulli distribution with parameter~$p$. \\
{\bf Remark.} Let us point out that graphon models can also be defined by replacing the latent space $[0,1]$ by the Euclidean sphere $\mathds S^{d-1}:=\{ x \in \mathds R^d\, |\, \|x\|_2=1\}$ in which case latent points are sampled independently and uniformly on $\mathds S^{d-1}$.\\
This model has been widely studied in the literature (see \cite{lovasz12}) and it is now well-known that, by construction, graphons are defined on an equivalent class {\it up to a measure preserving homomorphism}. More precisely, two graphons~$U$ and~$W$ define the same probability distribution if and only if there exist measure preserving maps~$\phi, \psi: [0, 1] \to [0, 1]$ such that~$U (\phi(x), \phi(y)) = W (\psi(x), \psi(y))$ almost everywhere. Hence it can be challenging to have a simple description from an observation given by sampled graph—since one has to deal with all possible composition of a bivariate function by any measure preserving homomorphism. Such difficulty arises in \cite{wolfe13} or in \cite{klopp19} that use respectively Maximum Likelihood and least-square estimators to approximate the graphon~$W$ from the adjacency matrix~$A$. In those works, the error measures are based on the so-called {\it cut-distance} that is defined as an infimum over all
measure-preserving transformations. This statistical issue motivates the introduction of (Soft)-RGGs with latent metric spaces for which the distance is invariant by translation (or
conjugation) of pairs of points. This natural assumption leads to consider that the latent space has some group structure, namely it is a compact Lie group or some compact symmetric space.

\begin{definition}(Translation Invariant Random Geometric Graph: TIRGG) \\ \label{def:tirgg}
Let~$(S,\gamma)$ be a compact Lie group with an invariant Riemannian metric~$\gamma$ normalized
so that the range of~$\gamma$ equals~$[0,\pi]$.
Let~$m$ be the uniform probability measure on~$S$ and let us consider some map~$\mathbf p:[-1,1] \to [0,1]$, called the envelope function. The Translation Invariant Random Geometric Graph with~$n\in \mathds N \backslash \{0\}$ points is the random graph~$G$ such that
\begin{itemize}
\item the~$n$ vertices~$X_1,\dots,X_n$ of~$G$ are chosen randomly in~$S$ according to the probability measure~$m^{\otimes n}$ on~$S^n$.
\item for any~$i,j \in [n]$ with~$i \neq j$, we draw an edge between nodes~$X_i$ and~$X_j$ with probability~$\mathbf p\left( \cos \gamma(X_i,X_j)\right)$.
\end{itemize}
\end{definition}

In Section~\ref{sec:non-parametric}, we present recent results regarding non-parametric estimation in the TIRGG model with~$S:=\mathds S^{d-1}$ the Euclidean sphere of dimension~$d$ from the observation of the adjacency matrix. A related question was addressed in \cite{klopp19} where the authors derived sharp rates of convergence for the~$L^2$ loss for the Stochastic Block Model (which belongs to the class of graphon models). Let us point out that a general approach to control the~$L^2$ loss between the probability matrix and a eigenvalue-tresholded version of the adjacency matrix is the USVT method introduced by \cite{chatterjee15}, which was further investigated by \cite{xu18}. In  Section~\ref{sec:non-parametric}, another line of work is presented to estimate the envelope function~$\mathbf p$ where the difference between the adjacency matrix
and the {\it matrix of probabilities}~$\Theta$ is controlled in operator norm. The cornerstone of the proof is the convergence of the spectrum of the matrix of probabilities towards the spectrum of some integral operator associated with the envelope function~$\mathbf p$. Based on the analysis of \cite{Gine}, the proof of this convergence includes in particular matrix Bernstein inequality from \cite{tropp15} and concentration inequality for order 2 U-statistics with bounded kernels that was first studied by \cite{arcones1993} and remains an active field of research (see \cite{GineZinn}, \cite{houdre2002} or \cite{Lugosi15}).

\subsection{Markov Random Geometric Graphs}

In the following, we will refer to {\it growth models} to denote random graph models in which a node is added at each new time step in the network and is connected to other vertices in the graph according to some probabilistic rule that needs to be specified. In the last decade, growth models for random graphs with a spatial structure have gained an increased interest. One can mention \cite{Jordan15}, \cite{Papadopoulos12} and \cite{zuev15} where geometric variants of the preferential attachment model are introduced with one new node entering the graph at each time step. More recently, \cite{xie15} and \cite{xie16b} studied a growing variant of the RGG model. Note that in the latter works, the birth time of each node is used in the connection function while nodes are still sampled independently in~$\mathds R^2$. Still motivated by non-parametric estimation, the TIRGG model can be extended to a growth model by considering a Markovian sampling scheme of the latent positions. Considering a Markovian latent dynamic can be relevant to model customer behavior for item recommendation or to study bird migrations where animals have regular seasonal
movement between breeding and wintering grounds \citep[cf.][]{ducheminreliable}.

\begin{definition}(Markov Random Geometric Graph: MRGG) \\ \label{def:mrgg}
Let~$(S,\gamma)$ be a compact Lie group with an invariant Riemannian metric~$\gamma$ normalized
so that the range of~$\gamma$ equals~$[0,\pi]$.
Let us consider some map~$\mathbf p:[-1,1] \to [0,1]$, called the envelope function. The Markov Random Geometric Graph with~$n\in \mathds N \backslash \{0\}$ points is the random graph~$G$ such that
\begin{itemize}
\item the sequence of~$n$ vertices~$(X_1,\dots,X_n)$ of~$G$ is a Markov chain on~$S$.
\item for any~$i,j \in [n]$ with~$i \neq j$, we draw an edge between nodes~$X_i$ and~$X_j$ with probability~$\mathbf p\left( \cos \gamma(X_i,X_j)\right)$.
\end{itemize}
\end{definition}

In Section~\ref{sec:MRGG}, we shed light on a recent work from \cite{decastroduchemin20} that achieves non-parametric estimation in MRGGs on the Euclidean sphere of dimension~$d$. The theoretical study of such graphs becomes more challenging because of the dependence induced by the latent Markovian dynamic. Proving the consistency of the non-parametric estimator of the envelope function~$\mathbf p$ proposed in Section~\ref{sec:MRGG} requires in particular a new concentration inequality for U-statistics of order 2 of uniformly ergodic Markov chains. By solving link prediction problems, \cite{decastroduchemin20} also reveal that MRGGs are convenient tools to extract temporal information on growing graphs with an underlying spatial structure.

\subsection{Other model variants}

\paragraph{Choice of the metric space.}
The Euclidean Sphere or the unit square in~$\mathds R^d$ are the most studied latent spaces in the literature for RGGs. By the way, \cite{allen18} offers an interesting comparison of the different topological properties of RGGs working on one or the other of these two spaces. Nevertheless, one can find variants such as in \cite{araya20} where Euclidean Balls are considered. 
More recently, some researchers left the Euclidean case to consider negatively curved--i.e. hyperbolic--latent spaces. Random graphs with an hyperbolic latent space seem promising to faithfully model real world networks. Actually, \cite{krioukov10} showed that the RGG built on the hyperbolic geometry is a scale-free network, that is the proportion of node of degree~$k$ is of order~$k^{-\gamma}$ where~$\gamma~$ is between 2 and 3. The scale-free property is found in the most part of real networks as highlighted by \cite{xie15}.

\paragraph{Different degree distributions.}
It is now well-known that the average degree of nodes in random graph models is a key property for their statistical analysis. Let us highlight some important regimes in the random graph community that will be useful in this paper. The dense regime corresponds to the case where the expected normalized degree of the nodes ({\it i.e.,} degree divided by $n$) is independent of the number of nodes in the graph. The other two important regimes are the relatively sparse and the sparse regimes where the average degree of nodes scales respectively as~$\log(n)/n$ and~$1/n$ with the number of nodes~$n$. A direct and important consequence of these definitions is that in the (relatively) sparse regime, the envelope function $\mathbf p$ from Definitions~\ref{def:tirgg} and~\ref{def:mrgg} depends on $n$ while it remains independent of $n$ in the dense regime. Similarly, in the (relatively) sparse regime, the radius threshold $r$ from Definition~\ref{def:rgg} (resp. the connection function $H$ from Definition~\ref{def:soft-rgg}) depend on $n$ contrary to the dense regime.

\section{Detecting geometry in RGGs}
\label{sec:detecting-geometry}

To quote \cite{bollobas01}, {\it "One of the main aims of the theory of random graphs is to determine when a given property is likely to appear."} In this direction, several works tried to identify structure in networks through testing procedure, see for example \cite{bresler18}, \cite{ghoshdastidar20} or \cite{gao17}. Regarding RGGs, most of the results have been established in the low dimensional regime~$d \leq 3$ \cite{ostilli15}, \cite{Penrose16}, \cite{penrose03}, \cite{Barthelemy11}. \cite{goel05} proved in particular that all monotone graph properties (i.e. property preserved when adding edges to the graph) have a sharp threshold for RGGs that can be distinguished from the one of Erdös-Rényi random graphs in low-dimensions. However, applications of RGGs to cluster analysis and the interest in the statistics of high-dimensional data sets have motivated the community to investigate the properties of RGGs in the case where~$d \to \infty.$ If the ambitious problem of  recognizing if a graph can be realized as a geometric graph is known to be NP-hard \cite{BREU19983}, one can take a step back and wonder if a given RGG still carries some spatial information as~$d \to \infty$ or if geometry is lost in high-dimensions (see Eq.\eqref{eq:geometry-lost} for a formal definition), a problem known as geometry detection. In the following, we present some recent results related to geometry detection in RGGs with latent space the Euclidean sphere $\mathds S^{d-1}$ and we highlight several interesting directions for future research.

\paragraph{Notations} 
Given two sequences~$(a_n)_{n \in \mathds N}$ and~$(b_n)_{n\in \mathds N}$ of positive numbers, we write $a_n=\mathcal O_n(b_n)$ or $b_n = \Omega_n(a_n)$ if the sequence~$(a_n/b_n)_{n\geq 0}$ is bounded and we write~$a_n=o_n(b_n)$ or $b_n=\omega_n(a_n)$ if $a_n/b_n \underset{n\to+\infty}{\rightarrow} 0$. We further denote $a_n = \Theta(b_n)$ if $a_n=\mathcal O_n(b_n)$ and $b_n=\mathcal O_n(a_n)$.
In the following, we will denote~$G(n,p,d)$ the distribution of random graphs of size~$n$ where nodes~$X_1, \dots, X_n$ are sampled uniformly on~$\mathds S^{d-1}$ and where distinct vertices~$i\in[n]$ and~$j\in[n]$ are connected by an edge if and only if~$\langle X_i, X_j\rangle \geq t_{p,d}$. The threshold value~$t_{p,d}\in [-1,1]$ is such that~$\mathds P\left(\langle X_1, X_2\rangle \geq t_{p,d}\right) =p$. Note that~$G(n,p,d)$ is the distribution of RGGs on~$(\mathds S^{d-1},\|\cdot\|)$ sampling nodes uniformly with connection function~$H:t \mapsto \mathds 1_{t\leq \sqrt{2-2t_{p,d}}}.$ In the following, we will also use the notation~$G(n,d,p)$ to denote a graph sampled from this distribution. We also introduce some definitions of standard information theoretic tools with Definition~\ref{def:TV}.

\begin{definition} \label{def:TV}
Let us consider two probability measures~$P$ and~$Q$ defined on some measurable space~$(E,\mathcal E)$. The total variation distance between~$P$ and~$Q$ is given by\[\mathrm{TV}(P,Q) := \sup_{A\in \mathcal E} |P(A)-Q(A)|.\]
Assuming further that~$P\ll Q$ and denoting~$ dP/dQ$ the density of~$P$ with respect to~$Q$,
\begin{itemize}
\item the~$\chi^2$-divergence between~$P$ and~$Q$ is defined by \[\chi^2(P,Q): = \int_E \left( \frac{dP}{dQ}-1\right)^2dQ .\]
\item the Kullback-Leibler divergence between~$P$ and~$Q$ is defined by \[\mathrm{KL}(P,Q): = \int_E  \log\left( \frac{dP}{dQ}\right)dP .\]
\end{itemize}
\end{definition}
Considering that both $p$ and $d$ depend on $n$, we will say in this paper that {\it geometry is lost} if the distributions $G(n,p)$ and $G(n,p,d)$ are indistinguishable, namely if
\begin{equation} \label{eq:geometry-lost}\mathrm{TV}(G(n, p), G(n, p, d))\to0\quad \text{as}~n\to \infty.\end{equation}

\subsection{Detecting geometry in the dense regime}

\cite{devroye2011} is the first to consider the case where~$d \to \infty$ in RGGs. In this paper, the authors proved that the number of cliques in the dense regime in~$G(n,p,d)$ is close to the one of~$G(n,p)$ provided~$d\gg \log n$ in the asymptotic~$d \to \infty$. This work allowed them to show the convergence of the total variation (see Definition~\ref{def:TV}) between RGGs and Erdös-Renyi graphs as~$d \to \infty$ for fixed~$p$ and~$n$. \cite{bubeck2016} closed the question of geometry detection in RGGs in the dense regime showing that a phase transition occurs when~$d$ scales as~$n^3$ as stated by Theorem~\ref{bubeck16-thm1}.

\begin{theorem} \label{bubeck16-thm1}  \cite[Theorem 1]{bubeck2016}
\begin{enumerate}[label=(\roman*)]
\item Let~$p\in (0,1)$ be fixed, and assume that~$d/n^3\to 0$. Then \[\mathrm{TV} (G(n, p), G(n, p, d))\to1\quad \text{as}~n\to \infty.\]
\item Furthermore, if~$d/n^3\to \infty$, then  \[\underset{p \in (0,1)}{\sup} \mathrm{TV}(G(n, p), G(n, p, d))\to 0\quad \text{as}~n\to \infty.\]
\end{enumerate}
\end{theorem}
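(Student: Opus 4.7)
The plan is to prove the two directions by quite different methods: for (i), I exhibit a low-degree graph statistic whose distribution differs sharply between the two models; for (ii), I build a coupling (or equivalently, bound an information divergence) that shows the sphere-thresholded edges become asymptotically independent.

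For part (i), I would use the signed triangle statistic
$$T(A) := \sum_{1\le i<j<k\le n} (A_{ij}-p)(A_{jk}-p)(A_{ik}-p).$$
Under $G(n,p)$, edges are independent Bernoullis, so $\mathds E[T]=0$ and $\mathrm{Var}(T) = \binom{n}{3}\,p^3(1-p)^3 = \Theta(n^3)$. Under $G(n,p,d)$, the edges are correlated through the shared latent points. Writing $X_i = Z_i/\|Z_i\|$ with $Z_i\sim\mathcal N(0, I_d)$ and expanding each centered threshold indicator in Hermite polynomials gives
$$A_{ij} - p \;=\; \varphi(t_p)\, \sqrt{d}\,\langle X_i, X_j\rangle \;+\; (\text{higher-order Hermite terms}),\qquad t_p = \Phi^{-1}(1-p),$$
where $\varphi$ is the standard Gaussian density. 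The leading contribution to $\mathds E_{G(n,p,d)}[T]$ comes from the triple integral $\mathds E\bigl[\langle X_1, X_2\rangle \langle X_2, X_3\rangle \langle X_1, X_3\rangle\bigr]$, which a direct coordinate computation evaluates to $1/d^2$, so $\mathds E_{G(n,p,d)}[T] = \Theta(n^3/\sqrt{d})$. A matching bound $\mathrm{Var}_{G(n,p,d)}(T) = \mathcal O(n^3 + n^4/d)$ follows from decomposing the variance according to how many vertices two triangles share. When $d = o(n^3)$, the signal $n^3/\sqrt{d}$ dominates the standard deviation $n^{3/2}$, so a Chebyshev-based threshold test separates the two distributions, giving $\mathrm{TV}\to 1$.

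For part (ii), I use the data-processing inequality to reduce the problem to comparing continuous quantities. Both $G(n,p,d)$ and $G(n,p)$ are obtained by thresholding an $\binom{n}{2}$-dimensional vector at the common level $t_{p,d}$: on the geometric side the vector is $(\langle X_i, X_j\rangle)_{i<j}$ with $X_i$ i.i.d. uniform on $\mathds S^{d-1}$; on the Erd\H{o}s--R\'enyi side it is an independent product of the same marginal in each coordinate. So it suffices to bound
$$\mathrm{TV}\!\bigl((\langle X_i, X_j\rangle)_{i<j},\; (\langle X_i', X_j'\rangle)_{i<j}^{\text{indep}}\bigr).$$
Passing to the Gaussian representation $X_i = Z_i/\|Z_i\|$, the task becomes comparing the law of the off-diagonal Gram matrix $\tfrac{1}{d}ZZ^{\top}$ to an i.i.d. collection with the same marginals. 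This is a version of the classical Wishart-to-GOE comparison: one writes down the explicit density of the Wishart ensemble, computes its Kullback--Leibler divergence against the independent-entry model, and applies Pinsker. The cubic threshold $d \asymp n^3$ emerges because the dominant non-Gaussian correction to the Wishart log-density is controlled by the third-order invariant $\mathrm{tr}(W-I)^3$, which has typical size of order $n^3/d$; when $d \gg n^3$ this contribution is $o(1)$, and one obtains $\mathrm{TV}\to 0$ uniformly in~$p$ since the reduction above did not depend on $p$.

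The main obstacle is the sharp information-theoretic comparison underlying (ii). Moment matching alone is insufficient: one must handle the full joint density and verify that the cubic threshold is tight, which requires retaining the third-order term in the expansion of $\log\det W$ around $W = I$ and showing it is the dominant non-Gaussian contribution. An additional technicality is to replace the normalized vectors $Z_i/\|Z_i\|$ by bare Gaussians without breaking the TV bound; the concentration of $\|Z_i\|$ around $\sqrt d$ costs only $O(n/\sqrt d)$ in total variation and is swamped by the main estimate as long as $d\gg n$. Part (i), by contrast, is computationally concrete but still demands careful bookkeeping of the Hermite expansion and of the higher-order triangle moments to guarantee that the variance under $G(n,p,d)$ does not inflate beyond the Erd\H{o}s--R\'enyi noise.
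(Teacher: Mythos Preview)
Your proposal is correct and follows essentially the same route as the paper: the signed-triangle statistic with mean $\Theta(n^3/\sqrt d)$ and variance $O(n^3+n^4/d)$ for part~(i), and a data-processing reduction to the Wishart-versus-GOE total-variation comparison for part~(ii). The only differences are cosmetic---you compute the signed-triangle mean via a Hermite expansion where the paper simply quotes the bound as a lemma, and you phrase the reference measure in~(ii) as a product of the inner-product marginals rather than the GOE itself---but the underlying arguments coincide.
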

The proof of Theorem.\ref{bubeck16-thm1}$.(i)$ relies on a count of {\it signed} triangles in RGGs. Denoting~$A$ the adjacency matrix of the RGG, the number of triangles in~$A$ is~$\mathrm{Tr}(A^3)$, while the total number of signed triangles is defined as 
\[\tau(G(n,p,d)) :=\mathrm{Tr}((A-p(J-I))^3)= \sum_{\{i,j,k\} \in \binom{[n]}{3}} \left(A_{i,j}-p \right) \left(A_{i,k}-p \right) \left(A_{j,k} -p\right),\] where~$I$ is the identity matrix and~$J \in \mathds R^{n\times n }$ is the matrix with every entry equals to~$1$. The analogous quantity in Erdös Renyi graphs~$\tau(G(n,p))$ is defined similarly. \cite{bubeck2016} showed that the variance of~$\tau(G(n,p,d))$ is of order~$n^3$ while the one of the number of triangles is of order~$n^4$. This smaller variance for signed triangles is due to the cancellations introduced by the centering of the adjacency matrix. Lemma~\ref{bubeck16-lemma} provides the precise bounds obtained on the expectation and the variance of the statistic of signed triangles. Theorem.\ref{bubeck16-thm1}$.(i)$ follows from the lower-bounds (resp. the upper-bounds) on the expectations (resp. the variances) of $\tau(G(n,p))$ and $\tau(G(n,p,d))$ presented in Lemma~\ref{bubeck16-lemma}.
\begin{Lemma} \label{bubeck16-lemma} \cite[Section 3.4]{bubeck2016} For any~$p \in (0,1)$ and any~$n,d \in \mathds N\backslash \{0\}$ it holds
\begin{align*}&\mathds E\left[ \tau(G(n,p)) \right]=0,  \quad \mathds E\left[ \tau(G(n,p,d)) \right] \geq \binom{n}{3}\frac{C_p}{\sqrt d} \\
\text{and} \quad \max & \left\{ \mathds Var\left[ \tau(G(n,p)) \right],\mathds Var\left[ \tau(G(n,p,d)) \right] \right\} \leq n^3 + \frac{3n^4}{d},
\end{align*}
where $C_p>0$ is a constant depending only on $p$.
\end{Lemma}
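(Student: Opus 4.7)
Write $f_{ij} := A_{ij} - p$ for the centered edge indicators. The Erd\"os--R\'enyi expectation is immediate: the three edges of any triangle in $G(n,p)$ are jointly independent Bernoulli$(p)$, so $\mathds E[f_{ij}f_{ik}f_{jk}] = 0$. For the RGG I would write $f_{ij} = F(\rho_{ij})$ with $\rho_{ij} = \langle X_i, X_j\rangle$ and $F(u) := \mathds 1(u \geq t_{p,d}) - p$; the threshold condition forces $\mathds E[F(\rho_{12})] = 0$. Expanding $F$ in the orthonormal Gegenbauer basis $\{G_k^{(d)}\}_{k\geq 0}$ associated with the law of $\rho_{12}$ gives $F = \sum_{k\geq 1}\hat F_k\, G_k^{(d)}$. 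The spherical-harmonics addition formula yields the reproducing identity $\mathds E_{X_1}[G_k^{(d)}(\rho_{12})G_{k'}^{(d)}(\rho_{13}) \mid X_2, X_3] = \delta_{k,k'}\,G_k^{(d)}(\rho_{23})/\sqrt{N_k}$, where $N_k$ is the dimension of the degree-$k$ spherical harmonics on $\mathds S^{d-1}$. Applying this and using $\hat F_k = \mathds E[F\, G_k^{(d)}]$ collapses the triple integral to
\[
\mathds E\bigl[F(\rho_{12})F(\rho_{13})F(\rho_{23})\bigr] = \sum_{k\geq 1}\frac{\hat F_k^3}{\sqrt{N_k}}.
\]
The $k=1$ term dominates: since $N_1 = d$ and $G_1^{(d)}(u) = \sqrt d\, u$, one has $\hat F_1 = \sqrt d\,\mathds E[\rho_{12}\mathds 1(\rho_{12}\geq t_{p,d})]$, which a direct computation (Gaussian approximation of $\sqrt d\,\rho_{12}$) identifies with a strictly positive constant depending only on $p$. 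For $k\geq 2$, $N_k = \Omega(d^2)$ and $|\hat F_k|\leq \sqrt{p(1-p)}$, so the tail is $O(1/d)$, delivering the per-triangle lower bound $C_p/\sqrt d$.

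For the variance I would split $\sum_{T_1,T_2}\mathds Cov[\tau_{T_1},\tau_{T_2}]$ by $s := |V(T_1)\cap V(T_2)|\in\{0,1,2,3\}$. In $G(n,p)$ edges are jointly independent, so every pair with $s\leq 2$ has zero covariance by centering; only the $s=3$ diagonal contributes, bounded by $\binom{n}{3}\,\mathds E[(f_{12}f_{13}f_{23})^2]\leq n^3$. In $G(n,p,d)$ the same diagonal is still $\leq n^3$, the $s=0$ pairs involve disjoint latent points (independent, zero covariance), and the $s=1$ case also vanishes: conditioning on the shared vertex $X_1$ makes the two triangle statistics conditionally independent functions of disjoint latent groups, and rotational invariance of the uniform measure on $\mathds S^{d-1}$ makes $\mathds E[\tau_T\mid X_1]$ constant in $X_1$, so $\mathds E[\tau_{T_1}\tau_{T_2}] = \mathds E[\tau_{T_1}]\mathds E[\tau_{T_2}]$.

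The substantive case is $s=2$, with $\binom{n}{2}(n-2)(n-3)\leq n^4/2$ ordered pairs. Labelling the shared vertices $\{1,2\}$ and outside vertices $\{3,4\}$, the per-pair term is $\mathds E[f_{12}^2 f_{13}f_{23}f_{14}f_{24}]$. Rotational invariance forces $\mathds E[f_{13}f_{23}\mid X_1,X_2]$ to depend only on $\rho_{12}$; call it $H(\rho_{12})$. Conditioning on $(X_1,X_2)$, the latent points $X_3, X_4$ are independent, the product factorizes, and the covariance is at most $\mathds E[f_{12}^2 H(\rho_{12})^2] \leq \mathds E[H(\rho_{12})^2]$ (using $|f_{12}|\leq 1$). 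Applying the reproducing identity to $H$ yields $H = \sum_{k\geq 1}\hat F_k^2\, G_k^{(d)}/\sqrt{N_k}$, and Parseval gives
\[
\mathds E[H(\rho_{12})^2] = \sum_{k\geq 1}\frac{\hat F_k^4}{N_k} \leq \frac{1}{N_1}\sum_{k\geq 1}\hat F_k^2 = \frac{p(1-p)}{d}.
\]
Multiplying this $O(1/d)$ per-pair bound by the $O(n^4)$ count of $s=2$ pairs and adding the $O(n^3)$ diagonal term yields the announced variance bound $n^3 + 3n^4/d$ with ample room to spare.

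The main technical obstacle is keeping the Gegenbauer normalizations consistent: the reproducing identity above is the common engine behind both the $1/\sqrt d$ rate for the expectation and the $1/d$ rate for the variance, and every dimensional factor ($\sqrt{N_k}$ versus $N_k$, orthonormal versus classical normalization) must be tracked precisely to avoid double-counting. A secondary subtlety is the uniformity in $d$ of the lower bound $\hat F_1 \geq c(p)$, which is clean asymptotically from the CLT but requires an explicit lower bound on the truncated moment $\mathds E[\rho_{12}\mathds 1(\rho_{12}\geq t_{p,d})]$ to hold for every $d\geq 1$.
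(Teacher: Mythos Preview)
The survey does not actually prove Lemma~\ref{bubeck16-lemma}; it merely quotes the bounds and refers to \cite[Section~3.4]{bubeck2016}. So there is no in-paper proof to compare against, and the relevant benchmark is Bubeck--Ding--Eldan--R\'acz's original argument.

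Your proof is correct and follows a genuinely different route. Bubeck et al.\ obtain the expectation lower bound by a direct geometric estimate: they show $\mathds E[\tau_T]=\mathds P(A_{12}=A_{13}=A_{23}=1)-p^3$ (using pairwise independence of edges sharing a vertex, by rotational invariance) and then lower-bound the triangle probability by bounding the area of the intersection of two spherical caps. For the variance, they decompose by vertex overlap exactly as you do, and for the $s=2$ term they bound $\mathds E[(q(X_1,X_2)-p^2)^2]$ (with $q$ the conditional probability of a wedge) by a direct second-moment computation, again geometric rather than spectral. Your harmonic-analysis approach via the Gegenbauer expansion and the reproducing identity is more systematic: the same identity drives both the $1/\sqrt d$ rate for the expectation and the $1/d$ rate for the $s=2$ covariance, and the Parseval step $\sum_k \hat F_k^4/N_k\le p(1-p)/d$ is clean. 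The cost is the point you already flagged: positivity of $\sum_k \hat F_k^3/\sqrt{N_k}$ for \emph{every} $d\ge 1$ is not automatic from the expansion (indeed $\hat F_2$ changes sign at $p=1/2$), so for the finitely many small $d$ one either falls back on the geometric bound $\mathds P(\text{triangle})>p^3$ from \cite{bubeck2016} or checks directly. Apart from that acknowledged detail, your argument is complete and gives sharper constants in the variance bound than needed.
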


Let us now give an overview of the proof of the indistinguishable part of Theorem~\ref{bubeck16-thm1}. \cite{bubeck2016} proved that in the dense regime, the phase transition for geometry detection occurs at the regime at which Wishart matrices becomes indistinguishable from GOEs (Gaussian Orthogonal Ensemble). In the following, we draw explicitly this link in the case~$p=1/2.$\\ An~$n\times n$ Wishart matrix with~$d$ degrees of freedom is a matrix of inner products of~$n$~$d$-dimensional Gaussian vectors denoted by~$W(n, d)$ while an~$n\times n$ GOE random matrix is a symmetric matrix with i.i.d. Gaussian entries on and above the diagonal denoted by~$M(n)$. Let~$\mathds X$ be an~$n\times d$ matrix where the entries are i.i.d. standard normal random variables, and let~$W=W(n, d)=\mathds X \mathds X^{\top}$ be the corresponding~$n\times n$ Wishart matrix. Then recalling that for~$X_1\sim \mathcal N(0,I_d)$ a standard gaussian vector of dimension~$d$,~$X_1/ \|X_1\|_2$ is uniformly distributed on the sphere~$\mathds S^{d-1}$, we get that the~$n\times n$ matrix~$A$ defined by \[\forall i,j \in [n], \quad A_{i,j}=\left\{
    \begin{array}{ll}
        1& \mbox{if } W_{i,j}\geq0 \; \text{and} \; i\neq j \\
        0 & \mbox{otherwise.}
    \end{array}
\right.\]
has the same distribution as the adjacency matrix of a graph sampled from~$G(n,1/2,d)$. We denote~$H$ the map that takes~$W$ to~$A$. Analogously, one can prove that~$G(n,1/2)$ can be seen as a function of an~$n\times n$ GOE matrix. Let~$M(n)$ be a symmetric~$n\times n$ random matrix where the diagonal entries are i.i.d. normal random variables with mean zero and variance~$2$, and the entries above the diagonal are i.i.d.standard normal random variables, with the entries on and above the diagonal all independent. Then~$B=H(M(n))$ is distributed as the adjacency matrix of~$G(n,1/2)$. We then get
\begin{equation} \label{TV-Wishart-GOE}\mathrm{TV} \left(G(n,1/2,d),G(n,1/2)\right) = \mathrm{TV} \left(H(W(n,d)),H(M(n))\right)\leq \mathrm{TV} \left(W(n,d)),M(n)\right).\end{equation}
If a simple application of the multivariate Central Limit Theorem proves that the right hand side of~\eqref{TV-Wishart-GOE} goes to zero as~$d \to \infty~$ for fixed~$n$, more work is necessary to address the case where~$d=d(n)=\omega_n(n^3)$ and~$n\to \infty$. The distributions of~$W(n,d)$ and~$M(n)$ are known and allow explicit computations leading to Theorem~\ref{bubeck16-thm2}. This proof can be adapted for any~$p \in (0,1)$ leading to Theorem~\ref{bubeck16-thm1}.$(ii)$ from~\eqref{TV-Wishart-GOE}.
\begin{theorem} \label{bubeck16-thm2} \cite[Theorem 7]{bubeck2016}

\noindent Define the random matrix ensembles~$W(n, d)$ and~$M(n)$ as above. If~$d/n^3 \to \infty$, then 
\[ \mathrm{TV} (W(n, d),M(n)) \to 0.\]
\end{theorem}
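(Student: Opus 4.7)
The plan is to reduce to a Kullback--Leibler calculation via Pinsker's inequality, $\mathrm{TV}(P,Q) \leq \sqrt{\mathrm{KL}(P,Q)/2}$, and then to evaluate the KL divergence using the explicit densities of the two matrix ensembles. Since $W(n,d)$ has mean $dI_n$ and off-diagonal entries of variance $d$ while $M(n)$ is centered with $\mathcal O(1)$ variances, I would first pass to the standardized Wishart $\widetilde W := (W(n,d)-dI_n)/\sqrt{d}$; its diagonal and off-diagonal second moments then exactly match those of $M(n)$, and this linear change of variables does not affect the use of~\eqref{TV-Wishart-GOE} since $H$ only reads signs of off-diagonal entries.

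The Wishart density on the positive-definite cone is $f_W(S) \propto |S|^{(d-n-1)/2}\exp(-\mathrm{tr}(S)/2)$, while the GOE density on symmetric matrices is $f_M(S) \propto \exp(-\mathrm{tr}(S^2)/4)$. After the change of variable $S = dI_n + \sqrt{d}\,Y$, the log-density ratio of $\widetilde W$ with respect to $M(n)$ reads, up to an additive constant $C(n,d)$,
\[
L(Y) = \frac{d-n-1}{2}\log\det\!\Bigl(I_n + \frac{Y}{\sqrt{d}}\Bigr) - \frac{\sqrt{d}}{2}\mathrm{tr}(Y) + \frac{1}{4}\mathrm{tr}(Y^2).
\]
Using the Taylor expansion $\log\det(I+Y/\sqrt{d}) = \sum_{k\geq 1} (-1)^{k+1}\mathrm{tr}(Y^k)/(k d^{k/2})$, the $k=1$ and $k=2$ terms are engineered to cancel the $\mathrm{tr}(Y)$ and $\mathrm{tr}(Y^2)$ contributions up to residuals of order $\frac{n+1}{\sqrt{d}}\mathrm{tr}(Y)$ and $\frac{n+1}{d}\mathrm{tr}(Y^2)$, and the leading surviving contribution is the cubic term $\frac{d-n-1}{6 d^{3/2}}\mathrm{tr}(Y^3)$.

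Taking expectation under $\widetilde W$ with the classical Wishart trace moments ($\mathds E[\mathrm{tr}(Y)] = 0$, $\mathds E[\mathrm{tr}(Y^2)] = n(n+1)$, $\mathds E[\mathrm{tr}(Y^3)] = \Theta(n^3/\sqrt{d})$ by direct Wick pairing on $\mathds X \mathds X^{\top}$ at distinct triples $(i,j,k)$), each surviving term produces a contribution of order $n^3/d$. The constant $C(n,d)$ is then computed from the multivariate-gamma normalization of the Wishart density together with the Jacobian of the change of variable, and Stirling's formula shows that it combines with the expectations above to give $\mathrm{KL}(\widetilde W, M(n)) = \mathcal O(n^3/d)$. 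Pinsker then yields $\mathrm{TV}(\widetilde W, M(n)) = \mathcal O(n^{3/2}/\sqrt{d})$, which tends to $0$ whenever $d/n^3 \to \infty$.

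The main obstacle is the precise bookkeeping of the cancellations and the control of the Taylor tail. Truncating the series $\log\det(I+Y/\sqrt{d})$ at order three requires showing that the higher-order trace moments $\mathds E[\mathrm{tr}(Y^k)]$ for $k\geq 4$ contribute at most $\mathcal O(n^3/d)$, which again follows from Marchenko--Pastur type moment formulas or a combinatorial accounting of non-crossing Wick pairings. A milder technicality is that the Taylor series converges only on the event $\{\|Y\| < \sqrt{d}\}$; outside this event, which has exponentially small probability when $d \gg n$ by standard Wishart spectrum concentration, one argues separately that the contribution to the total variation is negligible and can be absorbed in the $\mathcal O(n^{3/2}/\sqrt{d})$ bound.
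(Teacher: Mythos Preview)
Your proposal is correct and follows the same approach the paper indicates: the survey merely says that ``the distributions of~$W(n,d)$ and~$M(n)$ are known and allow explicit computations,'' and your outline (standardize the Wishart so moments match the GOE, write the log density ratio, Taylor-expand $\log\det(I+Y/\sqrt d)$, and track the cubic term to obtain $\mathrm{KL}=\mathcal O(n^3/d)$ before applying Pinsker) is precisely the explicit computation carried out in the cited source \cite{bubeck2016}. Your remark that a centering/rescaling is required is well placed---as literally stated with the paper's definition of $M(n)$, the total variation would not vanish without it---and the tail/Taylor truncation you flag is handled exactly by the Wishart spectral concentration you mention.
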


\paragraph{Extensions}

Considering~$\mathds R^d$ as latent space endowed with the Euclidean metric, \cite{bubeck15} extended Theorem~\ref{bubeck16-thm2} and proved an information theoretic phase transition. To give an overview of their result, let us consider the~$n\times n$ Wigner matrix~$\mathcal M_n$ with zeros on the diagonal and i.i.d. standard Gaussians above the diagonal. For some~$n\times d$ matrix~$\mathds X$ with i.i.d. entries from a distribution~$\mu$ on~$\mathds R^d$ that has mean zero and variance 1, we also consider the following rescaled Wishart matrix associated with~$\mathds X$  \[\mathcal W_{n,d} := \frac{1}{\sqrt d} \left(\mathds X \mathds X^{\top} - \mathrm{diag}(\mathds X \mathds X^{\top})\right),\]where the diagonal was removed.
Using an high-dimensional entropic Central Limit Theorem, \cite{bubeck15} proved Theorem~\ref{bubeck15-thm} which implies that geometry is lost in~$RGG(n,\mu,(\mathds R^d,\|\cdot\|))$ as soon as~$d\gg n^3\log^2(d)$ provided that the measure~$\mu$ is sufficiently smooth (namely log-concave) and the rate is tight up to logarithmic factors. We refer to \cite{racz16} for a friendly presentation of this result. Note that the comparison between Wishart and GOE matrices also naturally arise when dealing with covariance matrices. For example, Theorem~\ref{bubeck16-thm2} was used in \cite{brennan19} to study the informational-computational tradeoff of sparse Pincipal Component Analysis.
\begin{theorem} \label{bubeck15-thm} \cite[Theorem 1]{bubeck15}

\noindent If the distribution~$\mu$ is log-concave and~$\frac{d}{n^3\log^2(d)}\to \infty$, then
$\mathrm{TV} (\mathcal W_{n,d},\mathcal M_n)\to 0.$\\On the other hand, if~$\mu$ has a finite fourth moment and~$\frac{d}{n^3}\to 0$, then~$\mathrm{TV} (\mathcal W_{n,d},\mathcal M_n)\to 1.$
\end{theorem}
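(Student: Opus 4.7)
The plan is to prove the two regimes separately, mirroring the structure already used for Theorem~\ref{bubeck16-thm2}: a high-dimensional entropic Central Limit Theorem drives the indistinguishability direction, while the distinguishability direction is handled by a signed-triangle statistic in the spirit of Lemma~\ref{bubeck16-lemma}.

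\textbf{Indistinguishability direction ($d/(n^3\log^2 d)\to\infty$).} First I would reduce the total variation distance to a KL divergence via Pinsker's inequality, so that it suffices to prove $\mathrm{KL}(\mathcal W_{n,d}\,\|\,\mathcal M_n)\to 0$. The key structural observation is that, letting $X^{(1)},\dots,X^{(d)}\in\mathds R^n$ denote the columns of $\mathds X$, the upper-triangular part of $\mathcal W_{n,d}$ writes as $\frac{1}{\sqrt d}\sum_{k=1}^d Z_k$, where $Z_k:=(X^{(k)}_i X^{(k)}_j)_{i<j}\in\mathds R^N$ is a centered random vector in dimension $N:=\binom{n}{2}=\Theta(n^2)$. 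A direct moment computation using $\mathds E_\mu[X]=0$ and $\mathds E_\mu[X^2]=1$ gives $\mathrm{Cov}(Z_k)=I_N$, and the Wigner ensemble restricted to its upper triangle is precisely a standard Gaussian vector in $\mathds R^N$. The claim therefore reduces to a quantitative entropic CLT in dimension $N$ for the i.i.d. sum $\frac{1}{\sqrt d}\sum_k Z_k$. I would then invoke a Bobkov--Chistyakov--G\"otze-type entropic CLT, or derive one by Ornstein--Uhlenbeck semigroup interpolation using the Brascamp--Lieb inequality under the log-concavity of $\mu$, to obtain a bound of the form $\mathrm{KL}\leq C N^{\alpha}\log(d)/d$ for some explicit exponent $\alpha$. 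Substituting $N=\Theta(n^2)$ yields the announced threshold $d\gg n^3\log^2 d$ once the correct polynomial exponent has been tracked.

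\textbf{Distinguishability direction ($d/n^3\to 0$).} Here I would follow the signed-triangle template of Lemma~\ref{bubeck16-lemma}. For any symmetric matrix $M$ with zero diagonal, set $\tau(M):=\sum_{\{i,j,k\}\in\binom{[n]}{3}} M_{i,j}M_{i,k}M_{j,k}$ and evaluate it on $\mathcal M_n$ and $\mathcal W_{n,d}$. By the independence and centering of the Wigner entries, $\mathds E[\tau(\mathcal M_n)]=0$. Under the Wishart model, expanding $\tau(\mathcal W_{n,d})$ and keeping only the diagonal contribution arising when a single column index is repeated across the three factors gives the lower bound $\mathds E[\tau(\mathcal W_{n,d})]\geq c\, n^3/\sqrt d$ for some $c>0$ depending on the third moment of $X_i X_j$ type products (finite since $\mu$ has a fourth moment). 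A variance computation, grouping triangle pairs by the number of shared indices and using the fourth-moment hypothesis to bound cross-correlations, yields $\mathrm{Var}[\tau]=\mathcal O(n^3+n^4/d)$ under both laws. When $d=o(n^3)$, the gap between the means dominates the sum of the standard deviations, so a Chebyshev test on $\tau$ separates the two distributions with probability tending to one, forcing $\mathrm{TV}\to 1$.

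\textbf{Main obstacle.} The harder half is the indistinguishability direction. One-dimensional entropic CLTs give a $1/d$ rate, and lifting this to dimension $N=\Theta(n^2)$ while keeping the dependence on $N$ of the right order is the delicate point. This is precisely where log-concavity is used: it provides the Poincar\'e and log-Sobolev inequalities needed to control the relative Fisher information along the Ornstein--Uhlenbeck semigroup, and it ensures sufficiently fast decay of the non-Gaussian Edgeworth-type corrections in high dimension. Tracking the precise polynomial in $N$ together with the logarithmic factor, so that the final threshold is $d\gg n^3\log^2 d$ rather than a larger power of $n$ or of $\log d$, is the most technical step of the argument.
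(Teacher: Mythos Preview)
Your distinguishability half via signed triangles is correct and matches \cite{bubeck15}: the fourth-moment assumption on~$\mu$ is exactly what controls~$\mathrm{Var}[\tau(\mathcal W_{n,d})]$, and the Chebyshev separation goes through as you describe.

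The indistinguishability half, however, contains a genuine gap. Writing the full upper triangle as~$d^{-1/2}\sum_{k}Z_k$ with~$Z_k=(X_i^{(k)}X_j^{(k)})_{i<j}\in\mathds R^N$, $N=\binom{n}{2}$, and invoking an entropic CLT directly in dimension~$N$ cannot yield the threshold~$n^3$. First, each~$Z_k$ is a function of only~$n$ scalars, hence is supported on an~$n$-dimensional submanifold of~$\mathds R^N$ and has no density, so Bobkov--Chistyakov--G\"otze-type hypotheses on the summands fail outright. Second, even setting this aside, a bound of the form~$\mathrm{KL}\lesssim N^\alpha(\log d)^2/d$ with~$N\asymp n^2$ gives~$n^3$ only for~$\alpha=3/2$, an exponent nothing in the problem supplies; the sentence ``once the correct polynomial exponent has been tracked'' is hiding a step that does not close. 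The argument of \cite{bubeck15} that this survey alludes to proceeds instead by a \emph{row-wise} chain rule for relative entropy,
\[
\mathrm{KL}(\mathcal W_{n,d}\,\|\,\mathcal M_n)=\sum_{i=2}^n \mathds E\big[\mathrm{KL}\big(\text{row }i\mid X_1,\dots,X_{i-1}\ \big\|\ \mathcal N(0,I_{i-1})\big)\big].
\]
Conditionally on~$X_1,\dots,X_{i-1}$, row~$i$ equals~$d^{-1/2}\sum_{\ell=1}^d X_{i,\ell}\,v_\ell$ with~$v_\ell=(X_{1,\ell},\dots,X_{i-1,\ell})\in\mathds R^{i-1}$ fixed, and a tailored entropic CLT (this is where log-concavity of~$\mu$ is used) bounds each term by~$O\big(i^2(\log d)^2/d\big)$. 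Summing over~$i\le n$ gives~$O\big(n^3(\log d)^2/d\big)$; the factor~$n^3$ arises from~$\sum_{i\le n} i^2$, i.e.\ from reducing the ambient dimension to~$i-1\le n-1$ \emph{before} applying the CLT. This dimension reduction via the chain rule is the missing ingredient in your sketch.
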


\subsection{Failure to extend the proof techniques to the sparse regime}

\cite{bubeck2016} provided a result in the sparse regime where~$p =\frac{c}{n}$ showing that one can distinguish between~$G(n,\frac{c}{n})$ and~$G(n,\frac{c}{n}, d)$ as long as~$d \ll \log^3 n$. The authors conjectured that this rate is tight for the sparse regime (see Conjecture~\ref{bubeck16-conjecture}).

\begin{conjecture} \label{bubeck16-conjecture} \cite[Conjecture 1]{bubeck2016}

\noindent Let~$c >0$ be fixed, and assume that~$d/\log^3(n) \to \infty$. Then \[\mathrm{TV}\left(G\left(n,\frac{c}{n}\right), G\left(n,\frac{c}{n}, d\right)\right)\to 0\quad \text{as}~n\to \infty.\]
\end{conjecture}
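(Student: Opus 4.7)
The plan is to establish the conjecture via a $\chi^2$-divergence computation, using $\mathrm{TV}^2 \leq \tfrac{1}{4}\chi^2$. Writing $P = G(n,c/n,d)$, $Q = G(n,c/n)$, $p = c/n$, $t = t_{p,d}$, and letting $Y_{ij} = \mathds 1_{\langle X_i,X_j\rangle \geq t} - p$ for $X = (X_1,\ldots,X_n)$ uniform on $(\mathds S^{d-1})^n$, a direct likelihood-ratio computation --- introducing a second independent copy $X'$ and using that $X, X'$ are iid --- yields
\begin{equation*}
\chi^2(P,Q) = \sum_{\emptyset \neq S \subseteq \binom{[n]}{2}} \frac{\bigl(\mathds E[\prod_{e \in S} Y_e]\bigr)^2}{(p(1-p))^{|S|}}.
\end{equation*}
A crucial structural reduction is that $\mathds E[\prod_{e \in S} Y_e]$ vanishes whenever the subgraph on $[n]$ induced by $S$ has a vertex of degree $1$, since the marginal distribution of $\langle X_i, X_j\rangle$ given $X_i$ is independent of $X_i$, so $\mathds E[Y_{ij}\mid X_i]=0$ forces a zero-mean pendant factor. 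Consequently the sum is supported on subgraphs with minimum degree $\geq 2$.

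The smallest such pattern is the triangle, and its contribution can be estimated sharply via three-point spherical geometry. With $t \approx \sqrt{2\log(n/c)/d}$, conditioning on $\langle X_i,X_j\rangle,\langle X_i,X_k\rangle \geq t$ shifts the mean of $\langle X_j,X_k\rangle$ by $\approx t^2 = 2\log(n/c)/d$ while preserving variance of order $1/d$; a Gaussian tail estimate then gives $\mathds P_P[\triangle]=p^3\exp(O(\log^{3/2}(n)/\sqrt{d}))$, so the triangle contribution to the $\chi^2$ expansion has order $\binom{n}{3}p^3 \cdot (\log^{3/2}(n)/\sqrt{d})^2 = O(c^3\log^3(n)/d)$, which tends to zero precisely when $d \gg \log^3 n$. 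Reassuringly this recovers the threshold sharply: the distinguishability result on the other side of the conjectured transition in \cite{bubeck2016} is itself based on triangle counts, so triangles are the natural obstruction.

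The remaining task is to bound the $\chi^2$ contribution from all larger subgraphs of minimum degree $\geq 2$. For a $k$-cycle the contribution is governed by the variance of the spherical cap overlap function $g(s):=\mathds P[\langle X_i,Z\rangle, \langle X_j,Z\rangle \geq t\mid \langle X_i,X_j\rangle=s]$ for $Z$ uniform on $\mathds S^{d-1}$, and I expect $\mathrm{Var}(g)$ to be bounded by a power of $p$ that makes $k$-cycles with $k \geq 4$ strictly subdominant to triangles. More generally, subgraphs $H$ with $e(H) > v(H)$ receive an extra smallness factor from the $n^{v(H)} p^{e(H)}$ pre-factor. The main obstacle will be this \emph{uniform} control across all minimum-degree-$2$ subgraph classes: the triangle estimate above is sharp and leaves no slack, so one needs quantitative multi-point tail estimates for inner products of uniform spherical vectors, ideally via a precise Gaussian comparison in the tangent space of $\mathds S^{d-1}$, together with combinatorial control of the sum over isomorphism classes. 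An alternative route would be a direct coupling between $P$ and $Q$ through shared underlying spherical Gaussians, aiming to show that only $o(1)$ edges disagree in expectation --- this bypasses the subgraph bookkeeping but demands a more delicate probabilistic construction in the tail regime $t$ where edges are rare.
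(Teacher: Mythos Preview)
The statement you are trying to prove is labelled a \emph{Conjecture} in the paper, and for good reason: it is open. The paper does not contain a proof; it surveys partial progress, culminating in Theorem~\ref{thm:polylog} of \cite{Liu2021TestingTF}, which reaches only $d\gg\log^{36}n$ via a KL tensorization over \emph{vertices}, a belief-propagation/cavity analysis of the conditional neighbourhood law, and a new optimal-transport-based set--cap intersection lemma. None of the works cited use the $\chi^2$ subgraph expansion you propose.

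Your expansion and the pendant-vertex cancellation are correct, and your triangle computation is sharp (indeed it reproduces exactly why signed triangles detect when $d\ll\log^3 n$). But the step you yourself flag as ``the main obstacle'' --- uniform control of all minimum-degree-$2$ subgraphs --- is the entire difficulty, not a technicality. Two concrete reasons to doubt that the $\chi^2$ route closes: first, in the sparse regime $p=c/n$ every $k$-cycle term carries a prefactor $n^k p^k=c^k=\Theta(1)$, so the cycle contributions do not gain smallness from $n$ and you must extract decay in $k$ purely from the centred moment $\mathds E\bigl[\prod_{e\in C_k}Y_e\bigr]=\mathrm{tr}(T_K^k)$; writing this as $\sum_{l\ge1}d_l(p^*_l)^k$ in the Gegenbauer basis, the summability over $k$ hinges on $|p^*_1|\sqrt{d_1}<p$ (roughly), a spectral-gap condition that is delicate precisely at the conjectured threshold and has no slack. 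Second, and more seriously, $\chi^2$ can diverge even when $\mathrm{TV}\to0$ --- this is exactly what happens in several planted-structure problems --- so ``$\chi^2\to0$'' may simply be false at $d\asymp\log^3 n$ even if the conjecture holds. The successful approaches in \cite{brennan20,Liu2021TestingTF} work with $\mathrm{KL}$ and avoid the global subgraph sum entirely, which is a strong hint that your proposed bookkeeping is the wrong decomposition for this problem.
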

The testing procedure from \cite{bubeck2016} to prove the distinguishability result in the sparse regime was based on a simple counting of triangles. Indeed, when~$p$ scales as~$\frac{1}{n}$, the signed triangle statistic~$\tau$ does not give significantly more power than the triangle statistic which simply counts the number of triangles in the graph. Recently, \cite{avrachenkov20} provided interesting results that give credit to Conjecture~\ref{bubeck16-conjecture}. First, they proved that in the sparse regime, the clique number of~$G(n, p, d)$ is almost surely at most 3 under the condition~$d \gg \log^{1+\epsilon} n$ for any~$\epsilon>0$. This means that in the sparse regime,~$G(n, p, d)$  does  not  contain any complete subgraph larger than a triangle like Erdös-Renyi graphs. Hence it is hopeless to prove that Conjecture~\ref{bubeck16-conjecture} is false considering the number of~$k$-cliques for~$k\geq 4$. Nevertheless, one could believe that improving the work of \cite{bubeck2016} by deriving sharper bounds on the number of 3-cliques (i.e. the number of triangles), it could possible to statistically distinguish between~$G(n,p,d)$ and~$G(n,p)$ in the sparse regime even for some~$d\gg \log^3 n$. In a regime that can be made arbitrarily close to the sparse one, \cite{avrachenkov20} proved that this is impossible as stated by Theorem~\ref{avrachenkov-thm4}.

\begin{theorem}\label{avrachenkov-thm4}  \cite[Theorem 5]{avrachenkov20}

\noindent Let us suppose that~$d\gg \log^3 n$ and~$p=\theta(n)/n$  with ~$n^m\leq \theta(n)\ll n$ for some~$m>0$. Then the expected number of triangles--denoted $\mathds E[T(n,p,d)]$--in RGGs sampled from~$G(n,p,d)$ is of order~$\binom{n}{3}p^3$, meaning that there exist two universal constants $c,C>0$ such that for $n$ large enough it holds \[c\binom{n}{3}p^3 \leq \mathds E[T(n,p,d)]\leq C\binom{n}{3}p^3  .\]
\end{theorem}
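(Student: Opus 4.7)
By linearity of expectation, $\mathds E[T(n,p,d)] = \binom{n}{3}\,q$, where $q := \mathds P(X_1 \sim X_2,\ X_1 \sim X_3,\ X_2 \sim X_3)$ for $X_1, X_2, X_3$ i.i.d. uniform on $\mathds S^{d-1}$. Each edge marginally has probability $p$ by the definition of $t_{p,d}$, and the Erd\H{o}s--R\'enyi heuristic suggests $q \approx p^3$; the task reduces to proving $q = \Theta(p^3)$. Using the rotational invariance of the uniform distribution on $\mathds S^{d-1}$, condition on $X_1$ being a fixed unit vector. For $i \in \{2, 3\}$, set $T_i := \langle X_1, X_i\rangle$ and decompose $X_i = T_i X_1 + \sqrt{1 - T_i^2}\,U_i$, where $U_i$ is uniform on $X_1^\perp \cap \mathds S^{d-1} \cong \mathds S^{d-2}$ and is independent of $T_i$; moreover $(T_2, U_2)$ and $(T_3, U_3)$ are jointly independent. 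This yields
\[
\langle X_2, X_3\rangle \;=\; T_2 T_3 + \sqrt{(1 - T_2^2)(1 - T_3^2)}\,\langle U_2, U_3\rangle.
\]
The event $\{T_2 \geq t_{p,d},\, T_3 \geq t_{p,d}\}$ has probability $p^2$; conditionally on it, the edge $X_2 \sim X_3$ occurs iff $\langle U_2, U_3\rangle \geq \tau(T_2, T_3) := (t_{p,d} - T_2 T_3)/\sqrt{(1 - T_2^2)(1 - T_3^2)}$.

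\textbf{Threshold scaling and bulk contribution.} The sphere-inner-product density $f_d(t) \propto (1-t^2)^{(d-3)/2}$ satisfies the Mills-ratio-type estimate $\mathds P(T \geq s) = \exp(-(d-1)s^2/2)\cdot(s\sqrt{d})^{-1}\cdot \Theta(1)$ in the moderate-deviation regime $s\sqrt{d} \to \infty$. Equating with $p$ gives $d\,t_{p,d}^2 = 2\log(1/p) + O(\log\log(1/p))$. Combining the hypothesis $p \geq n^{m-1}$ (so $\log(1/p) \leq (1-m)\log n$) with $d \gg \log^3 n$ yields $t_{p,d} = O(\sqrt{\log n / d}) = o(1)$ while $d\,t_{p,d}^2 \to \infty$, validating the Mills-ratio asymptotics. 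Partition the integration domain $[t_{p,d}, 1]^2$ for $(T_2, T_3)$ into the bulk $\mathcal{B} := [t_{p,d}, 2 t_{p,d}]^2$ and its complement. On $\mathcal{B}$, $T_2 T_3 \leq 4 t_{p,d}^2 = o(t_{p,d})$ and $1 - T_i^2 = 1 - o(1)$, hence $\tau(T_2, T_3) = (1 + o(1))\,t_{p,d}$. Applying the same Mills-ratio estimate on $\mathds S^{d-2}$ (the shift $d \mapsto d-1$ perturbs constants only by a factor $1 + O(1/d)$) gives $\mathds P(\langle U_2, U_3\rangle \geq \tau) = \Theta(p)$ uniformly on $\mathcal{B}$, so integration over the bulk contributes $\Theta(p^3)$ to $q$.

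\textbf{Tail contribution and main obstacle.} On $\mathcal{B}^c \cap [t_{p,d}, 1]^2$, at least one of $T_2, T_3$ exceeds $2 t_{p,d}$; the Mills-ratio estimate gives $\mathds P(T \geq 2 t_{p,d}) \leq p^{4-o(1)}$, the polynomial corrections being $\mathrm{poly}(\log(1/p)) = p^{-o(1)}$ in view of $p \geq n^{m-1}$. Bounding the conditional probability of $X_2 \sim X_3$ crudely by $1$ and multiplying by the other coordinate's probability (at most $p$), the tail contributes at most $p^{5 - o(1)} = o(p^3)$ to $q$. Summing bulk and tail gives $q = \Theta(p^3)$, from which the claimed order of $\mathds E[T(n,p,d)]$ follows. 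The main technical obstacle is the tail analysis: when both $T_2, T_3$ approach $1$, $\tau(T_2, T_3)$ becomes strongly negative---a triangle-inequality-like consequence that two points close to $X_1$ are themselves close---so the conditional probability of $X_2 \sim X_3$ tends to $1$; making the tail contribution negligible requires the exponential decay of the joint tail of $(T_2, T_3)$ on $\mathds S^{d-1}$, which in turn requires $d\,t_{p,d}^2 \to \infty$ with enough margin to absorb the polynomial corrections in the Mills-ratio expansion---precisely what the assumption $d \gg \log^3 n$ furnishes.
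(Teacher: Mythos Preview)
The paper under review is a survey and does not supply its own proof of this statement; the theorem is quoted from \cite{avrachenkov20} (their Theorem~5) without argument. Consequently there is no in-paper proof to compare your proposal against.

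That said, your argument is correct and is in fact the natural route to the result. The decomposition $X_i = T_i X_1 + \sqrt{1-T_i^2}\,U_i$ is the standard way to exploit rotational invariance, and the bulk/tail split on $[t_{p,d},2t_{p,d}]^2$ isolates the dominant contribution cleanly. The crucial quantitative step---that on the bulk the perturbation of the exponent in the Mills-ratio estimate is $O\big((\log(1/p))^{3/2}/\sqrt{d}\big)=o(1)$ precisely when $d\gg\log^3 n$ and $\log(1/p)=O(\log n)$---is identified correctly and is exactly the place where the hypothesis $d\gg\log^3 n$ enters. Your tail bound $\mathds P(T\geq 2t_{p,d})\leq p^{4-o(1)}$ is also right, and the condition $p\geq n^{m-1}$ is what allows the polylogarithmic Mills-ratio prefactors to be absorbed as $p^{o(1)}$. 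One minor stylistic remark: in the bulk analysis you could be slightly more explicit that the $o(1)$ in $\tau=(1+o(1))t_{p,d}$ is of size $O(t_{p,d})$, since this is what makes the perturbation of the exponent $d\,t_{p,d}^2\cdot O(t_{p,d})=O(d\,t_{p,d}^3)$ rather than merely $o(d\,t_{p,d}^2)$, and the former is what yields a bounded (not merely subpolynomial) multiplicative constant.
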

In a nutshell, the work from \cite{avrachenkov20} suggests that a negative result regarding Conjecture~\ref{bubeck16-conjecture} cannot be obtained using statistics based on clique numbers. This discussion naturally gives rise to the following more general question.
\begin{align} &\text{{\it Given a random graph model with~$n$ nodes, latent geometry in dimension~$d=d(n)$ and edge}} \notag\\ &\text{{\it density~$p=p(n)$, for what triples~$(n,d,p)$ is the model~$G(n,p,d)$ indistinguishable from~$G(n,p)$?}} \label{question} \tag{$\mathcal Q$} \end{align} 

\subsection{Towards the resolution of geometry detection}
\label{sec:resolution-geo}

\subsubsection{A first improvement when $d>n$}
A recent work from \cite{brennan20} tackled the general problem~\eqref{question} and proved Theorem~\ref{brennan20-thm}.

\begin{theorem} \label{brennan20-thm} \cite[Theorem 2.4]{brennan20}

\noindent Suppose~$p=p(n)\in(0,1/2]$ satisfies that~$n^{-2}\log n=\mathcal O_n(p)$ and \[ d\gg \min\left\{pn^3\log p^{-1}\;,\;p^2n^{7/2}(\log n)^3\sqrt{\log p^{-1}} \right\},\]where~$d$
 also satisfies that~$d\gg n\log^4n$. Then \[\mathrm{TV}(G(n,p),G(n,p,d))\to 0\quad \text{as}~n\to \infty.\]
\end{theorem}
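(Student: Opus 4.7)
The plan is to extend the Wishart-versus-GOE comparison that underlies Theorem~\ref{bubeck16-thm1}.$(ii)$ to a sparsity-aware comparison against an i.i.d.~Bernoulli matrix, by means of an information-theoretic chain-rule argument on the rows of the adjacency matrix. The starting point is the observation that the adjacency matrix of $G(n,p,d)$ can be written as a deterministic map $A_{ij}=\mathds 1\{\langle X_i,X_j\rangle\geq t_{p,d}\}$ of the Gram matrix $W=(\langle X_i,X_j\rangle)_{i,j}$ of $n$ independent points uniform on $\mathds S^{d-1}$, and that TV is contractive under deterministic maps. It therefore suffices to bound $\mathrm{TV}(\mathcal L(A),\mathrm{Ber}(p)^{\otimes\binom{n}{2}})$ directly, or, via Pinsker, the analogous KL.

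First I would apply the KL chain rule after revealing the upper-triangular entries of $A$ row by row:
\begin{align*}
\mathrm{KL}\bigl(\mathcal L(A)\,\big\|\,\mathrm{Ber}(p)^{\otimes \binom{n}{2}}\bigr) \;=\; \sum_{i=2}^{n} \mathds E\Bigl[\mathrm{KL}\bigl(\mathcal L(A_{i,<i}\mid A_{<i})\,\big\|\,\mathrm{Ber}(p)^{\otimes(i-1)}\bigr)\Bigr].
\end{align*}
The task reduces to controlling the typical conditional KL between the $i$-th row of $A$ and a product of Bernoullis. Conditioning on $A_{<i}$ is a partial summary of the latent points $X_1,\dots,X_{i-1}$; by rotation invariance and Gram--Schmidt applied to these points, I would express $(\langle X_i,X_j\rangle)_{j<i}$ as a linear function of the uniform sample $X_i\in\mathds S^{d-1}$, with coefficient matrix determined by the Gram block $W_{<i,<i}$. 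When $d\gg n\log^4 n$, this block concentrates tightly around the identity (e.g.\ by a union bound over matrix Bernstein for each $i$), so after rescaling these inner products are approximately independent centred Gaussians, and thresholding them at $t_{p,d}$ produces something close to $(i-1)$ independent $\mathrm{Ber}(p)$ entries.

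The quantitative comparison is the delicate step. I would perform a perturbative expansion of the conditional density of $(A_{ij})_{j<i}$ around the product-Bernoulli law, tracking the dependence on the random off-diagonal Gram entries $\langle X_j,X_k\rangle/d$ for $j,k<i$, which typically have magnitude $d^{-1/2}$. Because we work in the sparse regime, only the coordinates for which $A_{ij}=1$ contribute significantly to the log-likelihood ratio, so the expansion gains an extra factor of $p$ per row relative to the dense-regime calculation. Two different estimates of these contributions yield the two branches of the stated minimum: a first-moment style bound, which feeds on the Gaussian tail of $t_{p,d}\asymp \sqrt{(\log p^{-1})/d}$ and produces the $pn^3\log p^{-1}$ term, and a refined second-moment bound combined with a truncation of atypically high-degree vertices, which produces the $p^2 n^{7/2}(\log n)^3\sqrt{\log p^{-1}}$ term. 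Summing from $i=2$ to $n$ and taking the better of the two yields the stated hypothesis on $d$.

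The main obstacle will be this quantitative per-row estimate. The Lindeberg/multivariate CLT strategy behind Theorem~\ref{bubeck16-thm2} is insensitive to sparsity and produces a rate of order $n^3/d$ that does not improve as $p$ shrinks; extracting the extra factor $p$ (or $p^2$) forces one to work directly with the Bernoulli coordinates rather than the underlying Gaussian inner products, and to handle carefully the rare rows whose degree lies far above $(i-1)p$. This is where the logarithmic factors $\log p^{-1}$ and $(\log n)^3$ enter. Combining this refined per-row bound with the auxiliary condition $d\gg n\log^4 n$, which is needed for the concentration of $W_{<i,<i}$ to hold uniformly over $i\leq n$, should close the argument and deliver $\mathrm{TV}(G(n,p),G(n,p,d))\to 0$ under the stated hypothesis.
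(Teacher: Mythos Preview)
Your high-level information-theoretic setup (Pinsker plus a tensorization of the KL against a product Bernoulli law) is indeed the route the paper follows, but from that point on your plan diverges from the paper's argument in two significant ways. First, the decomposition is different: the paper applies the tensorization inequality of Lemma~\ref{brennan20-lemmaKL} \emph{edge-wise}, conditioning a single edge $e_0=\{i,j\}$ on the rest of the graph $G_{\sim e_0}$, which reduces the whole problem to bounding $\binom{n}{2}\,\mathds E\bigl[(Q-p)^2/(p(1-p))\bigr]$ for the scalar random variable $Q:=\mathds P(e_0\in E(G)\mid G_{\sim e_0})$. Your row-wise chain rule (conditioning the $i$-th row on $A_{<i}$) is closer to the later node-wise approach of \cite{Liu2021TestingTF}, not to \cite{brennan20}; in particular, the paper never works with Gram--Schmidt on $X_1,\dots,X_{i-1}$ or with concentration of the whole Gram block $W_{<i,<i}$, but instead uses a direct coupling that reparametrizes $X_1$ in a way tailored to the single edge $e_0$.

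Second, and more importantly, your mechanism for the second (stronger) rate is not the one that actually works here, and your description is too vague to be a proof. The paper does \emph{not} obtain the $p^2n^{7/2}(\log n)^3\sqrt{\log p^{-1}}$ term via a ``second-moment bound with truncation of high-degree vertices''. Instead it splits $\mathds E[(Q-p)^2]$ as $\mathds E[(Q-p)\cdot(Q-p)]$, controls one factor by the high-probability concentration of $Q$ already obtained from the first coupling, and controls $\mathds E[|Q-p|]$ through the exact identity of Proposition~\ref{brennan20-prop53}, $\mathds E[|Q-p|]=2p\,\mathrm{TV}(\nu_{\sim e_0}^+,\nu_{\sim e_0})$, followed by a \emph{second} explicit coupling between the graph with and without the conditioning $e_0\in E(G)$ (the $X_2^+$ construction). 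This identity is what buys the extra factor of $p$ and is the genuine new idea; nothing in your proposal corresponds to it. The condition $d\gg n\log^4 n$ also enters through a De~Finetti-type step in this coupling, not merely through concentration of a Gram block. As written, your plan has a real gap at the step that should produce the second branch of the minimum.
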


{\bf Remarks} In the dense regime, Theorem~\ref{brennan20-thm} recovers the optimal guarantee from Theorem~\ref{bubeck16-thm1}. In the sparse regime, Theorem~\ref{brennan20-thm} states that if~$d \gg n^{3/2}\left(\log n\right)^{7/2}$, then geometry is lost in~$G(n,\frac{c}{n},d)$ (where~$c>0$). This result 
improves the work from \cite{bubeck2016}. Nevertheless, regarding Conjecture~\ref{bubeck16-conjecture}, it remains a large gap between the rates~$\log^3 n$ and~$n^{3/2}\left(\log n\right)^{7/2}$ where nothing is known up to date. Let us sketch the main elements of the proof of Theorem~\ref{brennan20-thm}. In the following we denote~$G=G(n,p,d)$ with set of edges~$E(G)$ and for any~$i,j\in [n],\; i \neq j$, we denote~$G_{\sim \{i,j\}}$ the set of edges other than~$\{i,j\}$ in~$G$. One first important step of their approach is the following tenzorization Lemma for the Kullback-Leibler divergence.

\begin{Lemma} \label{brennan20-lemmaKL} \cite[Lemma 3.4]{kontorovitch17}

\noindent Let us consider~$(X,\mathcal B)$ a measurable space with~$X$ a Polish space and~$\mathcal B$ its Borel~$\sigma$-field. Consider some probability measure~$\mu$ on the product space~$X^k$ with~$\mu=\mu_1\otimes \mu_2\otimes \dots \otimes \mu_k$. Then for any other probability measure~$\nu$ on $X^k$ it holds 
\[\mathrm{KL}(\nu||\mu)\leq \sum_{i=1}^k \mathds E_{x \sim \nu}\left[\mathrm{KL}\left(\nu_i(\cdot|x_{\sim i})||\mu_i\right)\right],\]
where $\nu_i$ is the probability distribution corresponding to the $i$-th marginal of $\nu$ and where $x_{\sim i}:=(x_1,\dots,x_{i-1},x_{i+1},\dots,x_k)$.
\end{Lemma}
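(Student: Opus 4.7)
The natural strategy is to start from the chain rule for the Kullback--Leibler divergence and then loosen the conditioning so that each term becomes what is claimed. Because $X$ is Polish, so is $X^k$, hence regular conditional distributions exist and every decomposition below is well defined. Writing $x_{<i}=(x_1,\dots,x_{i-1})$ and denoting by $\nu(\cdot\,|\,x_{<i})$ the conditional law under $\nu$ of the $i$-th coordinate given the previous ones, the chain rule gives
\[
\mathrm{KL}(\nu\|\mu)=\sum_{i=1}^k \mathds E_{x_{<i}\sim \nu}\Bigl[\mathrm{KL}\bigl(\nu(\cdot\,|\,x_{<i})\,\big\|\,\mu(\cdot\,|\,x_{<i})\bigr)\Bigr].
\]
Since $\mu=\mu_1\otimes\cdots\otimes\mu_k$ is a product, the conditional $\mu(\cdot\,|\,x_{<i})$ coincides with $\mu_i$, so the right-hand side simplifies to $\sum_i \mathds E_{x_{<i}}[\mathrm{KL}(\nu(\cdot\,|\,x_{<i})\,\|\,\mu_i)]$.

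Next, for each $i$ I would pass from conditioning on $x_{<i}$ to conditioning on all other coordinates $x_{\sim i}$, which is exactly the form requested. The key observation is that under $\nu$, averaging the conditional law $\nu(\cdot\,|\,x_{\sim i})$ over $x_{>i}$ given $x_{<i}$ reproduces $\nu(\cdot\,|\,x_{<i})$. Applying Jensen's inequality to the convex map $\rho\mapsto \mathrm{KL}(\rho\,\|\,\mu_i)$ (equivalently, invoking joint convexity of KL in its first argument) then yields
\[
\mathrm{KL}\bigl(\nu(\cdot\,|\,x_{<i})\,\big\|\,\mu_i\bigr)\;\le\;\mathds E_{x_{>i}}\Bigl[\mathrm{KL}\bigl(\nu(\cdot\,|\,x_{\sim i})\,\big\|\,\mu_i\bigr)\,\Big|\,x_{<i}\Bigr].
\]
Taking the outer expectation over $x_{<i}\sim\nu$ and using the tower property collapses everything into a single expectation over $x\sim\nu$, which after summing in $i$ gives precisely the stated bound.

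\textbf{Main obstacle.} The proof is almost entirely bookkeeping once the two ingredients --- chain rule and convexity of KL --- are in hand; the only genuinely delicate point is the measure-theoretic setup. One must ensure that the regular conditionals $\nu(\cdot\,|\,x_{<i})$ and $\nu(\cdot\,|\,x_{\sim i})$ exist and that the chain rule applies without assuming $\nu\ll\mu$ globally. The Polish hypothesis precisely handles the first issue. For the second, if $\nu\not\ll\mu$ the inequality is vacuous since $\mathrm{KL}(\nu\,\|\,\mu)=+\infty$; when $\nu\ll\mu$ one has $\nu_i(\cdot\,|\,x_{\sim i})\ll\mu_i$ for $\nu$-almost every $x_{\sim i}$, which makes each term of the right-hand side finite and the Jensen step legitimate. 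Beyond that, no further estimate is needed.
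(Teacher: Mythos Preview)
Your argument is correct. Note, however, that the paper does not actually supply a proof of this lemma: it is quoted verbatim from \cite{kontorovitch17} and used as a black box in the subsequent chain of inequalities. Your route---chain rule for $\mathrm{KL}$ combined with the product structure of $\mu$, then convexity of $\rho\mapsto\mathrm{KL}(\rho\|\mu_i)$ to pass from conditioning on $x_{<i}$ to conditioning on $x_{\sim i}$---is the standard one, and your handling of the measure-theoretic side (existence of regular conditionals via the Polish hypothesis, the vacuous case $\nu\not\ll\mu$) is appropriate.
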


\begin{align*}
2 \mathrm{TV}(G(n,p,d),G(n,p))^2 &\leq \mathrm{KL}(G(n,p,d)||G(n,p))\qquad \text{from Pinsker's inequality}\\
& \leq \sum_{1\leq i <j \leq n} \mathds E\left[ \mathrm{KL}\left( \mathcal L(\mathds 1_{\{i,j\}\in E(G)}|\sigma(G_{\sim \{i,j\}})) || \mathrm{Bern}(p) \right) \right] \qquad \text{from Lemma~\ref{brennan20-lemmaKL}}\\
&\leq \binom{n}{2}\times  \mathds E\left[ \chi^2\left( \mathcal L(\mathds 1_{e_0\in E(G)}|\sigma(G_{\sim e_0})) ,\mathrm{Bern}(p) \right) \right] \\
&= \binom{n}{2} \times \mathds E\left[ \frac{(Q-p)^2}{p(1-p)}\right],
\end{align*}
where~$Q:=\mathds P\left( e_0 \in E(G) | G_{\sim e_0} \right)$ is a~$\sigma(G_{\sim e_0})$-measurable random variable corresponding to the probability that a specific edge is included in the graph given the rest of the graph. The proof then consists in showing that with high probability,~$Q$ concentrates near~$p$. To do so, they use a coupling argument that gives an alternative way to generate~$X_1$ that provides a direct description of~$\mathds 1_{e_0 \in E(G)}$ in terms of the random variables introduced in the coupling. If this step may seem computationally involved, it is not conceptually difficult since it turns out to be a simple re-parametrization of the problem. An integration of this concentration result for~$Q$ implies that the convergence of Theorem~\ref{brennan20-thm} holds when~$d \gg p n^3 \log p^{-1}$. To get the convergence result in the regime where~$d \gg p^2 n^{7/2}(\log n)^3 \sqrt{\log p^{-1}}$~$-$ which gives the improvement over \cite{bubeck2016} in the sparse case~$-$ one additional step of coupling is required. More precisely, they decompose~$\mathds E[(Q-p)^2]$ as~$\mathds E[(Q-p) \times (Q-p)]$.
The previous coupling argument gives a concentration inequality allowing to bound with high probability the first term $|Q-p|$. It remains then to upper bound~$\mathds E[\left|Q-p\right|]$ which relies on a simple observation given by the following proposition.
\begin{proposition} \label{brennan20-prop53} \cite[Proposition 5.3]{brennan20}
\noindent Let~$\nu_{\sim e_0}$ denote the marginal distribution of~$G$ restricted to all edges that are not~$e_0$, and let~$\nu_{\sim e_0}^+$ denote the distribution of~$G$ conditioned on the event~$e_0\in E(G)$. It holds
\begin{equation}\label{eq-brennan20-prop53} \mathds E[\left|Q-p\right|] = 2p \times \mathrm{TV}\left(\nu_{\sim e_0}^+,\nu_{\sim e_0}\right).\end{equation}
\end{proposition}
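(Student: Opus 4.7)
The identity should follow from recognizing $Q$ as a likelihood ratio between $\nu_{\sim e_0}^+$ and $\nu_{\sim e_0}$ and then invoking the $L^1$ representation of total variation distance. I sketch the three steps.

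First, I would apply Bayes' rule to obtain a closed form for $Q$. Since $\mathds P(e_0\in E(G))=p$ and, by definition, $\nu_{\sim e_0}^+$ is the law of $G_{\sim e_0}$ conditioned on the event $\{e_0\in E(G)\}$, one obtains, for every configuration $g$ of the edges outside $e_0$, the expression
$$Q(g)=\mathds P\bigl(e_0\in E(G)\mid G_{\sim e_0}=g\bigr)=p\cdot\frac{d\nu_{\sim e_0}^+}{d\nu_{\sim e_0}}(g),$$
the Radon--Nikodym derivative being well-defined since $p>0$ ensures $\nu_{\sim e_0}^+\ll \nu_{\sim e_0}$.

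Second, I would substitute this expression into $\mathds E[|Q-p|]$, where the outer expectation is taken with respect to $G_{\sim e_0}\sim\nu_{\sim e_0}$, and factor $p$ out of the absolute value to obtain
$$\mathds E\bigl[|Q-p|\bigr]=p\int\Bigl|\frac{d\nu_{\sim e_0}^+}{d\nu_{\sim e_0}}(g)-1\Bigr|\,d\nu_{\sim e_0}(g).$$

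Third, I would invoke the classical Scheffé-type identity $\int\bigl|\frac{d\mu}{d\nu}-1\bigr|\,d\nu=2\,\mathrm{TV}(\mu,\nu)$, valid whenever $\mu\ll\nu$, which is consistent with the definition of the total variation distance recalled in Definition~\ref{def:TV}. Applying it with $\mu=\nu_{\sim e_0}^+$ and $\nu=\nu_{\sim e_0}$ produces exactly the claimed identity~\eqref{eq-brennan20-prop53}.

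I expect no real obstacle: the proposition is essentially a one-line calculation once $Q$ is identified as $p$ times a likelihood ratio, and the only technical point~$-$ the absolute continuity $\nu_{\sim e_0}^+\ll\nu_{\sim e_0}$~$-$ is automatic from $p>0$. The interest of the statement lies not in its proof but in the reformulation it offers: it converts the analytic quantity $\mathds E[|Q-p|]$ into a total variation distance between two marginal laws of the RGG, which is exactly what is needed to feed into the second coupling argument of~\cite{brennan20} sketched above.
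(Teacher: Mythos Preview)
Your proposal is correct: Bayes' rule gives $Q=p\cdot d\nu_{\sim e_0}^+/d\nu_{\sim e_0}$, and the $L^1$ (Scheff\'e) representation of total variation then yields~\eqref{eq-brennan20-prop53} immediately. The survey does not actually supply a proof of this proposition---it quotes it from \cite{brennan20} as ``a simple observation''---so there is nothing to compare against; your argument is precisely the standard one-line derivation the label ``simple observation'' is pointing to.
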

The proof is then concluded by using another coupling argument between~$\nu_{\sim e_0}^+$ and~$\nu_{\sim e_0}$ to upper-bound the total variation distance involved in Eq.\eqref{eq-brennan20-prop53} and we give a sketch of proof in the following. Given latent positions $X_1, \dots,X_n$ uniformly and independently sampled on $\mathds S^{d-1}$, we can consider without loss of generality that $X_1=(1,0,\dots0)$. Denoting $X_2=(X_{2,j})_{j \in [d]}$ and $\phi_d$ the density of $X_{2,1}$\footnote{i.e. $\phi_d$ is the density of a one-dimensional marginal of a uniform
random point on $\mathds S^{d-1}$.}, one can define $\gamma = \sqrt{\frac{1-\tau^2}{1-X_{2,1}^2}}$ and $X^+_2:=(\tau, \gamma X_{2,2}, \dots, \gamma X_{2,d})$ where $\tau $ is a random variable in $[-1,1]$ with density $\phi_{d,p}^+(x) = p^{-1} \mathds 1 _{x \geq t_{p,d}} \phi_d(x)$. Denoting further $G_{\sim e_0}$ (resp. $G^+_{\sim e_0}$) the RGG with threshold $t_{p,d}$ induced by the latent points $(X_i)_{i\in  [n]}$ (resp. $(X_1,X^+_2,X_3,\dots,X_n)$)  without the edge $e_0=\{1,2\}$, $G_{\sim e_0}$ (resp.$G^+_{\sim e_0}$) is distributed as $\nu_{\sim e_0}$ (resp. $\nu_{\sim e_0}^+$). Hence it holds,
\begin{align*}\mathds E [ |Q-p|]& \leq 2p \times \mathrm{TV}\left(\nu_{\sim e_0}^+,\nu_{\sim e_0}\right) \leq 2p \times \mathds P (G_{\sim e_0} \neq G^+_{\sim e_0})\leq 2p \sum_{i=3}^n \mathds P\big( \mathds 1_{\langle X_2,X_i \rangle \geq t_{p,d}} \neq \mathds 1_{\langle X_2^+,X_i\rangle \geq t_{p,d}}\big).
\end{align*}
The proof is concluded using standard concentration arguments.

\subsubsection{Reaching the polylogarithmic regime}
Very recently, \cite{Liu2021TestingTF} came with novels ideas and improved upon the previous bounds for geometry detection by polynomial factors in the sparse regime. This significant breakthrough presented in Theorem~\ref{thm:polylog} almost solves Conjecture~\ref{bubeck16-conjecture}.
\begin{theorem}\label{thm:polylog} \cite[Theorem 1.2]{Liu2021TestingTF}
For any fixed constant $c\geq 1$, if $d \gg \log^{36} n$, then
\[\mathrm{TV}\left(G\left(n,\frac{c}{n}\right),G\left(n,\frac{c}{n},d\right)\right) \to 0\quad \text{as}~n\to \infty.\]
\end{theorem}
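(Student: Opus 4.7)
The plan is to follow the information-theoretic route of \cite{brennan20}, reducing the statement to a concentration estimate on the conditional probability $Q:=\mathds{P}(e_{0}\in E(G)\mid G_{\sim e_{0}})$, but to replace their global coupling argument by a sharper \emph{local} analysis of how the observation of $G_{\sim e_{0}}$ informs the posterior of the latent positions. Concretely, applying Pinsker's inequality, the tensorization Lemma~\ref{brennan20-lemmaKL}, and the Bernoulli bound $\mathrm{KL}(\mathrm{Ber}(q)\,\|\,\mathrm{Ber}(p))\leq \chi^{2}(\mathrm{Ber}(q),\mathrm{Ber}(p))$ to the edge-indicator decomposition of $G(n,c/n,d)$, one obtains
\[
2\,\mathrm{TV}\!\left(G\!\left(n,\tfrac{c}{n}\right),G\!\left(n,\tfrac{c}{n},d\right)\right)^{2} \leq \binom{n}{2}\,\mathds{E}\!\left[\frac{(Q-p)^{2}}{p(1-p)}\right],
\]
with $p=c/n$. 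Since $p(1-p)\sim c/n$, the theorem reduces to establishing $\mathds{E}[(Q-p)^{2}]=o(1/n^{3})$ whenever $d\gg \log^{36}n$.

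\textbf{Local neighborhood analysis.} Unlike in the global coupling of \cite{brennan20}, I would control $Q-p$ using only the structure of $G_{\sim e_{0}}$ in a small neighborhood of $e_{0}=\{1,2\}$. A standard branching-process comparison shows that, in the sparse regime, the $K$-neighborhood of $\{1,2\}$ in $G_{\sim e_{0}}$ is tree-like and contains only $\mathrm{polylog}(n)$ vertices for $K=\Theta(\log n/\log\log n)$, outside an event of probability $n^{-\omega(1)}$. Conditionally on $G_{\sim e_{0}}$, the posterior law of $(X_{1},X_{2})$ is driven by this polylogarithmic-size subgraph. Using a belief-propagation-style recursion along the tree, one can track how the posterior of $\langle X_{1},X_{2}\rangle$ drifts from the uniform prior. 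Each observed incident edge contributes a correction bounded in an appropriate functional norm by $\tilde{\mathcal{O}}(1/\sqrt{d})$, and these corrections combine sub-additively along the tree thanks to the spherical-harmonic orthogonality structure on $\mathds{S}^{d-1}$, which is the structural advantage that was not exploited in \cite{brennan20}.

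\textbf{Conclusion and main obstacle.} Combining this analysis with Proposition~\ref{brennan20-prop53}, which provides the identity $\mathds{E}[|Q-p|] = 2p\,\mathrm{TV}(\nu_{\sim e_{0}}^{+},\nu_{\sim e_{0}})$, one writes $\mathds{E}[(Q-p)^{2}]\leq \mathds{E}[|Q-p|]\,\cdot\,\sup_{\mathcal{E}^{c}}|Q-p|$, and inserting the local bound $\sup_{\mathcal{E}^{c}}|Q-p|\lesssim p\cdot \mathrm{polylog}(n)/\sqrt{d}$ from the previous paragraph yields the required $o(1/n^{3})$ estimate once $d\gg \log^{36}n$. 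The hard part is clearly the belief-propagation step: a naive union bound treating each observed edge as an independent constraint on the latent positions costs polynomial factors in $n$, so recovering polylogarithmic factors demands a genuine \emph{correlation-decay} statement showing that the information carried along a path of length $k$ in the local neighborhood shrinks geometrically rather than multiplicatively. Making this decay precise through the spherical-harmonic spectral decomposition of the connection kernel is, in my view, the technical heart of the argument and explains the large exponent $36$: it should arise as a product of logarithmic losses from the depth of the local neighborhood, the concentration of vertex degrees inside it, and the chaining bound on the posterior drift.
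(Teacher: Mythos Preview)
Your high-level picture is close to the paper's, but the quantitative conclusion does not go through and the tensorization step is not the one Liu--Mohanty--Schramm--Yang use.

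\textbf{The arithmetic fails by a full factor of $n$.} With $p=c/n$ you need $\mathds E[(Q-p)^2]=o(1/n^3)$. Your stated bound $\sup_{\mathcal E^c}|Q-p|\lesssim p\cdot\mathrm{polylog}(n)/\sqrt d$ together with the trivial $\mathds E[|Q-p|]\le 2p\cdot\mathrm{TV}(\nu^+_{\sim e_0},\nu_{\sim e_0})\le 2p$ gives at best
\[
\mathds E[(Q-p)^2]\ \lesssim\ p^2\cdot\frac{\mathrm{polylog}(n)}{\sqrt d}\ \sim\ \frac{1}{n^2}\cdot\frac{\mathrm{polylog}(n)}{\sqrt d},
\]
which is $o(1/n^2)$ but \emph{not} $o(1/n^3)$ for any $d=\mathrm{polylog}(n)$. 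The single-step BP correction $\tilde{\mathcal O}(1/\sqrt d)$ cannot close this gap; you need the relative error $(Q-p)/p$ to be $o(n^{-1/2})$. You correctly flag at the end that a genuine correlation-decay statement is required, but your ``sub-additive'' combination of corrections along the tree is precisely what does \emph{not} suffice: the paper establishes a \emph{multiplicative} contraction, namely that each BP update shrinks the TV distance to uniform by a factor $\mathcal O(\log^a n/\sqrt d)$, so that after depth $\ell=\log n/\log\log n$ one obtains $(\log^a n/\sqrt d)^\ell=n^{-\Omega(1)}=o(n^{-1/2})$. This geometric decay is what turns a polylogarithmic $d$ into a polynomial-in-$n$ gain, and it is the step your write-up is missing quantitatively.

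\textbf{Two further deviations from the paper.} First, the paper tensorizes \emph{node-wise}, not edge-wise: the KL is bounded by $n\cdot\mathds E_{G_{n-1}}[\mathrm{KL}(\nu_n(\cdot\mid G_{n-1})\,\|\,\mathrm{Ber}(p)^{\otimes(n-1)})]$, where $\nu_n$ is the law of the neighbourhood of a fresh vertex. The authors highlight this as the main departure from \cite{brennan20}; it lets them work directly with the intersection-of-caps set $L_S$ and avoids the Proposition~\ref{brennan20-prop53} detour, which in your scheme buys you nothing (the TV factor there is $\Theta(1)$ in the sparse regime). Second, the contraction itself is not obtained from ``spherical-harmonic orthogonality'' but from a new \emph{set--cap intersection concentration lemma}: for any $L\subseteq\mathds S^{d-1}$ and random $z\sim\sigma$, $\sigma(L\cap\mathrm{cap}(z))=(1\pm\tilde{\mathcal O}(d^{-1/2}))\,p\,\sigma(L)$ with high probability. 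Its proof goes through optimal transport (bounding projections of the transport map between the uniform measure on $L$ and $\sigma$), and this lemma is what drives both the multiplicative contraction of BP messages and the control of anti-caps.
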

The authors do not limit their analysis to the sparse regime but also provide results holding for any regime interpolating between the sparse and the dense cases as shown with Theorem~\ref{thm:polylog2}.
\begin{theorem}\label{thm:polylog2} \cite[Theorem 1.1 and Lemma A.1]{Liu2021TestingTF}
\begin{itemize}
\item For any fixed constant $c>0$, if $\frac{c}{n} <p<\frac12$ and $d \gg p^{2} n^3$, then
\[\mathrm{TV}\left(G\left(n,p\right),G\left(n,p,d\right)\right) \to 0\quad \text{as}~n\to \infty.\]
\item If $\frac{1}{n^2} \ll p \leq 1- \delta$ for any fixed constant $\delta> 0$, then as long as $d \ll (nH(p))^3$, 
\[\mathrm{TV}\left(G\left(n,\frac{c}{n}\right),G\left(n,\frac{c}{n},d\right)\right) \to 1\quad \text{as}~n\to \infty,\]
where $H(p) = p\log \frac1p+(1-p) \log \frac1{1-p}$ is the binary entropy function. This result can be achieved using the signed triangle statistic following an approach strictly analogous to \cite{bubeck2016}.
\end{itemize}
\end{theorem}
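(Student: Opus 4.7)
The plan is to treat the two parts separately, proving the indistinguishability side of Part 1 via a refinement of the coupling/KL argument of Brennan et al., and proving the distinguishability side of Part 2 via the signed triangle statistic from Lemma~\ref{bubeck16-lemma}, tracking the dependence on $p$ more carefully than in the classical computation.

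For Part 2 (distinguishability when $d\ll (nH(p))^3$), I would mimic the signed-triangle strategy of Lemma~\ref{bubeck16-lemma} while retaining the $p$-dependence of all constants. Set $Z_{ij} := A_{ij}-p$ and $\tau(G) := \sum_{\{i,j,k\}\in \binom{[n]}{3}} Z_{ij}Z_{ik}Z_{jk}$. First, $\mathds E[\tau(G(n,p))] = 0$ by edge independence. Next, under $G(n,p,d)$, compute the per-triangle bias $\mathds E[Z_{12}Z_{13}Z_{23}]$ by conditioning on the three spherical latents $(X_1,X_2,X_3)$ and expanding around the Gaussian approximation of the rescaled inner products $\sqrt d \langle X_i,X_j \rangle$; the three inner products have a mutual $\mathcal O(1/d)$ correlation, and keeping this correction together with the pointwise behaviour of the spherical density $\phi_d$ at the quantile $t_{p,d}$ produces a bias of order $H(p)^3/\sqrt d$, so that $\mathds E[\tau(G(n,p,d))]$ is of order at least $n^3 H(p)^3/\sqrt d$. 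For the variance, expand the square: under $G(n,p)$, edge independence kills every off-diagonal contribution, leaving only the diagonal term $\binom{n}{3}\mathds E[Z_{12}^2 Z_{13}^2 Z_{23}^2]$, which is of order $n^3 p^3(1-p)^3$. The same order bound transfers to $G(n,p,d)$ by a standard decoupling inequality in the large $d$ regime. A Chebyshev two-sample comparison then separates the two distributions once $n^3 H(p)^3/\sqrt d$ dominates $n^{3/2}p^{3/2}$, which, using that $H(p)$ and $p$ are comparable up to logarithmic factors for $p$ bounded away from one, reduces to $d\ll (nH(p))^3$.

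For Part 1 (indistinguishability when $d\gg p^2 n^3$), I would revisit the Brennan et al. argument. Combining Pinsker's inequality with the KL tensorization of Lemma~\ref{brennan20-lemmaKL} gives
\[ 2\,\mathrm{TV}(G(n,p,d),G(n,p))^2 \;\leq\; \binom{n}{2}\, \mathds E\!\left[\frac{(Q-p)^2}{p(1-p)}\right], \]
where $Q = \mathds P(e_0\in E(G)\mid G_{\sim e_0})$. Factor $(Q-p)^2 = (Q-p)\cdot|Q-p|$: the first factor is controlled in high probability by a coupling of two latent configurations agreeing on $G_{\sim e_0}$, exactly as in Brennan et al., while the second is controlled in expectation through Proposition~\ref{brennan20-prop53}, which identifies $\mathds E[|Q-p|] = 2p\,\mathrm{TV}(\nu_{\sim e_0}^+,\nu_{\sim e_0})$. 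The improvement over \cite{brennan20} comes from a tighter coupling between $\nu_{\sim e_0}^+$ and $\nu_{\sim e_0}$: one transports $X_2$ to $X_2^+$ via a single rotation in the $X_1$-direction while leaving the ambient rotational invariance of the uniform measure on $\mathds S^{d-1}$ intact, so that the per-index defect probability $\mathds P(\mathds 1_{\langle X_2,X_i\rangle \geq t_{p,d}} \neq \mathds 1_{\langle X_2^+,X_i\rangle \geq t_{p,d}})$ is of order $p/\sqrt d$ by a sharp anti-concentration bound on $\phi_{d,p}^+$ near $t_{p,d}$. Summing over $i = 3,\dots,n$ bounds $\mathrm{TV}(\nu_{\sim e_0}^+,\nu_{\sim e_0})$ by a quantity of order $np/\sqrt d$, which, inserted in the KL decomposition together with the concentration of $|Q-p|$, gives the claimed threshold $d\gg p^2 n^3$.

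The main obstacle is the sharp spherical coupling that controls $\mathrm{TV}(\nu_{\sim e_0}^+,\nu_{\sim e_0})$: the coupling in \cite{brennan20} loses roughly a factor $n^{1/2}$ because it does not fully exploit rotational invariance, and closing that gap requires uniformly sharp pointwise density estimates for $\phi_{d,p}^+$ near the threshold $t_{p,d}$, integrated against the joint distribution of $(X_3,\dots,X_n)$ after the rotation that sends $X_2$ to $X_2^+$. A secondary technical difficulty sits on the distinguishability side, namely extracting the correct $H(p)^3$ bias of the signed triangle rather than a cruder $p^3$ bound; this forces a non-asymptotic expansion of the spherical density tail at the quantile level $t_{p,d}$ that captures the $\log(1/p)$ entropy factor, a regime that lies outside the standard Gaussian approximation used in \cite{bubeck2016}.
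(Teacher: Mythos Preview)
Your Part~2 plan (signed triangles with careful $p$-tracking) is the right approach and is what the paper does, so no issue there beyond the expected bookkeeping on the variance terms under $G(n,p,d)$.

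Your Part~1 plan, however, has a structural gap. You stay entirely within the edge-wise tensorization framework of \cite{brennan20}: Pinsker, Lemma~\ref{brennan20-lemmaKL} applied edge by edge, then the $(Q-p)\cdot|Q-p|$ splitting combined with Proposition~\ref{brennan20-prop53}. The coupling you describe---transporting $X_2$ to $X_2^+$ and bounding the per-index defect by $O(p/\sqrt d)$---is essentially the coupling \cite{brennan20} already use; it is not looser by a factor $\sqrt n$ because of a failure to exploit rotational invariance, it is already close to tight. Plugging $\mathds E[|Q-p|]\lesssim np^2/\sqrt d$ into the product bound together with the sharpest available high-probability control $|Q-p|\lesssim p\sqrt{n/d}$ yields $\binom{n}{2}\,\mathds E[(Q-p)^2]/p \lesssim n^{7/2}p^2/d$, which is exactly the Brennan et al.\ threshold $d\gg p^2 n^{7/2}$, not $d\gg p^2 n^3$. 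To reach $p^2 n^3$ via this route you would need $|Q-p|\ll n^{-3/2}$ with high probability, and there is no mechanism in the edge-wise picture to produce that.

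The paper's proof gets the extra $\sqrt n$ by abandoning the edge-wise decomposition altogether. The tensorization Lemma~\ref{brennan20-lemmaKL} is applied \emph{node-wise}, reducing the problem to comparing the conditional law $\nu_n(\cdot\mid G_{n-1})$ of the neighbourhood of vertex $n$ against $\mathrm{Bern}(p)^{\otimes(n-1)}$. This requires controlling, for every $S\subseteq[n-1]$, the measure $\sigma(L_S)$ of an intersection of $|S|$ random caps and $n-1-|S|$ random anti-caps around $p^{|S|}(1-p)^{n-1-|S|}$. The key new ingredient is a set--cap intersection concentration lemma: for any fixed $L\subseteq\mathds S^{d-1}$ and $z\sim\sigma$, one has $\sigma(L\cap\mathrm{cap}(z))=(1\pm\tilde O(d^{-1/4}))\,p\,\sigma(L)$ with high probability, proved via optimal transport and entropy--transport inequalities on the sphere. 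Iterating this and running a martingale argument gives the multiplicative $(1\pm\epsilon)$ control on $\sigma(L_S)$ that feeds back into the KL bound to produce $\mathrm{TV}^2=o(n^3p^2/d)$. None of this machinery appears in your proposal, and it is precisely what bridges the gap between $n^{7/2}$ and $n^3$.
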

\cite{Liu2021TestingTF} extend the work from \cite{bubeck2016} and prove that the signed
statistic distinguishes between $G(n, p)$ and $G(n, p,d)$ not only in the sparse and dense cases but also for most $p$, as long as $d \ll (nH(p))^3$. We provide in the Appendix \ref{apdx} a synthetic description of the proofs of Theorems~\ref{thm:polylog} and~\ref{thm:polylog2}. Let us mention that the proofs rely on a new concentration result for the area of the intersection of a random sphere cap with an arbitrary
subset of $\mathds S^{d-1}$, which is established using optimal transport maps and entropy-transport inequalities
on the unit sphere. \cite{Liu2021TestingTF} make use of this set-cap intersection concentration lemma for the theoretical analysis of the Belief Propagation algorithm.

\subsection{Open problems and perspectives}

The main results we have presented so far look as follows:

\begin{center}
\noindent
\renewcommand{\arraystretch}{1.5}
\begin{tabular}{|>{\centering\arraybackslash}m{.3\linewidth}|>{\centering\arraybackslash}m{.55\linewidth}|>{\centering\arraybackslash}m{.07\linewidth}|}
  \hline
{\bf Task} & {\bf Current state of knowledge} & {\bf Ref.} \\ \hline
Recognizing if a graph can be realized as a RGG & NP-hard & \citeyear{BREU19983}\\\hline
Testing between~$G(n,p,d)$ and~$G(n,p)$ in high-dimension for~$p\in(0,1)$ fixed &  
\resizebox{.5\textwidth}{!}{
  \makecell{\begin{tikzpicture}   
    \draw[->] (1,0) -- (9,0);
     \draw (5 cm,3pt) -- (5 cm,-3pt);
    \draw (1,0) node[above=3pt] {$ 0~$};
    \draw (5,0)  node[above=3pt] {$n^3$};
    \draw (3,0) node[above=3pt] {Polynomial time test} ;
    \draw (7,0) node[above=3pt] {Undistinguishable};
    \draw (9,0) node[above=3pt] {$ d~$};
  \end{tikzpicture}}
} & \citeyear{bubeck2016} \\ \hline
Testing between~$G(n,\frac{c}{n},d)$ and~$G(n,\frac{c}{n})$ in high-dimension for~$c>0$ &  
\resizebox{.5\textwidth}{!}{
  \makecell{\begin{tikzpicture}
    \draw[->] (1,0) -- (9,0);
     \draw (4 cm,3pt) -- (4 cm,-3pt);
     \draw (6 cm,3pt) -- (6 cm,-3pt);
    \draw (1,0) node[above=3pt] {$ 0~$};
    \draw (4,0)  node[above=3pt] {$\log^3 n$};
    \draw (6,0)  node[above=3pt] {$\log ^{36} n$};
    \draw (2.2,0) node[below=3pt] {Polynomial time test} ;
    \draw (7.5,0) node[below=3pt] {Undistinguishable};
        \draw (5,0) node[below=3pt] {?};
    \draw (9,0) node[above=3pt] {$ d~$};
  \end{tikzpicture}}}  & \citeyear{bubeck2016} \&  \citeyear{Liu2021TestingTF}\\ \hline
\end{tabular}
\end{center}

\begin{figure}[!ht]
\begin{minipage}{0.55\textwidth}
\includegraphics[scale=0.6]{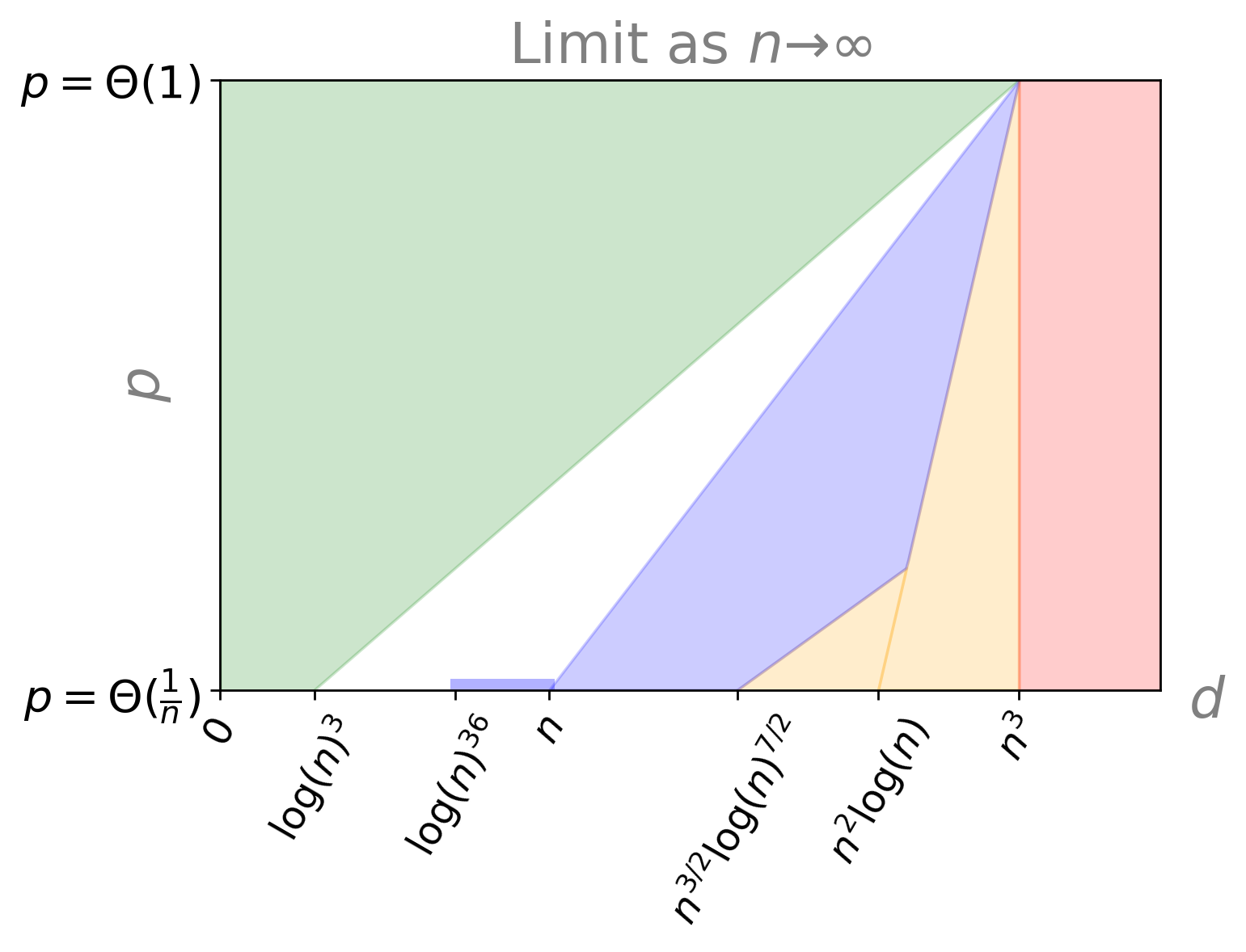}
\end{minipage}\hfill
\begin{minipage}{0.4\textwidth}
\begin{tabular}{cl}
\multicolumn{2}{c}{{\bf Phase where signed triangle counts differ}} \\
\multirow{2}{*}{\crule[green!80!blue!30!]{0.5cm}{0.3cm}} & \cite{bubeck2016}  \&\\
& \cite{Liu2021TestingTF}\\~\\
\multicolumn{2}{l}{{\bf Impossible Phase}} \\
\crule[red!30!white!100]{0.5cm}{0.3cm} & \cite{bubeck2016}\\
\crule[orange!30!white!100]{0.5cm}{0.3cm} & \cite{brennan20}\\
\crule[blue!30!white!100]{0.5cm}{0.3cm} & \cite{Liu2021TestingTF}
\end{tabular}
\end{minipage}
\caption{Phase-Diagram of the $(d,p)$ regions where geometry detection (on the Euclidean sphere) is known to be information theoretically impossible or possible (in polynomial time). Note that the figure only presents a simplified illustration of the current state of knowledge for the problem of geometry detection on $\mathds S^{d-1}$ since the true scales are not respected.}
\label{fig:phase_diagram}
\end{figure}
With new proof techniques based on combinatorial arguments,  direct couplings and  applications of  information  inequalities, \cite{brennan20} were the first to make a progress towards Conjecture~\ref{bubeck16-conjecture}. Nevertheless, their proof was heavily relying on a coupling step involving a De Finetti-type result that requires the dimension~$d$ to be larger than the number of points~$n$. \cite{Liu2021TestingTF} improved upon the previous bounds by polynomial factors with innovative proof arguments. In particular, their analysis makes use of the Belief Propagation algorithm and the cavity method, and relies on a new sharp estimate for the area of the intersection of a random sphere cap with an arbitrary subset of $\mathds S^{d-1}$. The proof of this new concentration result is an application of optimal transport maps and entropy-transport inequalities. Despite this recent progress, a large span of research directions remain open and we discuss some of them in the following.
\begin{enumerate}
\item {\it Closing the gaps for geometry detection on the Euclidean sphere $\mathds S^{d-1}$.}\\
Figure~\ref{fig:phase_diagram} shows that there are still important research directions to investigate to close the question of geometry detection regarding RGGs on $\mathds S^{d-1}$. First in the sparse regime, it would be desirable to finally know if Conjecture~\ref{bubeck16-conjecture} is true, meaning that the phase transition occurs when the latent dimension is of the order of $\log^3 n$. It could be fruitful to see if some steps in the approach from \cite{Liu2021TestingTF} could be sharpened in order to get down to the threshold $\log^3 n$. A question that seems even more challenging is to understand what happens in the regimes where $p=p(n) \in (\frac{1}{n},1)$ and $d =d(n)\in ( [H(p)n]^3,p^2n^3)$ (corresponding to the white region on Figure~\ref{fig:phase_diagram}). To tackle this question, one could try to extend the methods used in the sparse case by \cite{Liu2021TestingTF} to denser cases. Another possible approach to close this gap would be to dig deeper into the connections between the Wishart and GOE ensembles. One research direction to possibly improve the existing impossibility results regarding geometry detection would be to avoid the use of the data-processing inequality in Eq.\eqref{TV-Wishart-GOE} which makes us lose the fact that we do not observe the matrices~$W(n,d)$ and~$M(n)$ themselves. To some extent, we would like to take into account that some information is lost by observing only the adjacency matrices. In a recent work, \cite{brennan2021finettistyle} made a first step in this direction. They study the total variation distance between the Wishart and GOE ensembles when some given mask is applied beforehand. They proved that the combinatorial structure of the revealed entries, viewed as the adjacency matrix of a graph~$G$, drives the distance between the two distributions of interest. More precisely, they provide regimes for the latent dimension~$d$ based exclusively on the number of various small subgraphs in~$G$, for which the total variation distance goes to either~$0$ or~$1$ as~$n\to \infty$.
\item {\it How specific is the signed triangle statistic to RGGs?}\\
Let us mention that the signed triangle statistic has found applications beyond the scope of spatial networks. In \cite{jin19}, the authors study community based random graphs (namely the Degree Corrected Mixed Membership model) and are interested in testing whether a graph has only one community or multiple communities. They propose the Signed Polygon as a class of new tests. In that way, they extend the signed triangle statistic to~$m$-gon in the network for any~$m\geq 3$. Contrary to \cite{bubeck2016}, the average degree of each node is not known and the Degree Corrected Mixed Membership model allows degree heterogeneity. \qu{In \cite{jin19}, the authors} define the signal-to-noise ratio (SNR) using parameters of their model and they prove that a phase transition occurs, namely~$i)$ when the SNR goes to~$+\infty$, the Signed Polygon test is able to separate the alternative hypothesis from the null asymptotically, and~$ii)$ when the SNR goes to~$0$ (and additional mild conditions), then the alternative hypothesis is inseparable from the null.
\item {\it How the phase transition phenomenon in geometry detection evolves when other latent spaces are considered?}\\
This question is related to the robustness of the previous results with respect to the latent space.  Inspired by \cite{bubeck2016}, \cite{eldan20} provided a generalization of Theorem~\ref{bubeck16-thm1} considering an ellipsoid rather than the sphere~$\mathds S^{d-1}$ as latent space. They proved that the phase transition also occurs at~$n^3$ provided that we consider the appropriate notion of dimension which takes into account the anisotropy of the latent structure. \\ In \cite{dall02}, the clustering coefficient of RGGs with nodes uniformly distributed on the hypercube shows systematic deviations from the Erdos-Rényi prediction. 
\item {\it What is inherent to the connection function?}

Considering a fixed number of nodes, \cite{erba20} use a multivariate version of the central limit theorem to show that the joint probability of rescaled distances between nodes is normal-distributed as~$d\to \infty$. They provide a way to compute the correlation matrix. This work allows them to evaluate the average number of~$M$-cliques, i.e. of fully-connected subgraphs with~$M$ vertices, in high-dimensional RGGs and Soft-RGGs. They can prove that the infinite dimensional limit of the average number of~$M$-cliques in Erdös-Rényi graphs is the same of the one obtained from for Soft-RGGs with a continuous activation function. On the contrary, they show that for classical RGGs, the average number of cliques does not converge to the Erdös-Rényi prediction. This paper leads to think that the behavior of local observables in Soft-RGGs can heavily depend on the connection function considered. The work from \cite{erba20} is one of the first to address the emerging questions concerning the high-dimensional fluctuations of some statistics in RGGs. If they focused on the number of~$M$-cliques, one can also mention the recent work from \cite{grygierek20} that provide a central limit theorem for the edge counting statistic as the space dimension~$d$ tends to infinity. Their work shows that the Malliavin–Stein approach for Poisson functionals that was first introduced in stochastic geometry can also be used to deal with spatial random models in high dimensions.
\medskip

In a recent work, \cite{liu21} are interested in extending the previous mentioned results on geometry detection in RGGs to Soft RGGs with some specific connection functions. The authors consider the dense case where the average degree of each node scales with the size of the graph~$n$ and study geometry detection with graphs sampled from Soft-RGGs that interpolate between the standard RGG on the sphere~$\mathds S^{d-1}$ and the Erdös-Rényi random graph. Hence, the null hypothesis remains that the observed graph~$G$ is a sample from~$G(n,p)$ while the alternative becomes that the graph is the Soft-RGG where we draw an edge between nodes~$i$ and~$j$ with probability 
\[  (1 - q)p + q \mathds 1_{t_{p,d} \leq \langle X_i , X_j\rangle},\] where~$(X_i)_{i\geq1}$ are randomly and independently sampled on~$\mathds S^{d-1}$ and where~$q \in [0,1]$ can be interpreted as the geometric strength of the model. Denoting the random graph model~$G(n,p,d,q)$, one can easily notice that~$G(n,p,d,1)$ is the standard RGG on the Euclidean sphere~$\mathds S^{d-1}$ while~$G(n,p,d,0)$ reduces to the Erdös-Rényi random graph. Hence, by taking~$q=1$ in Theorem~\ref{liu21-thm1}, we recover Theorem~\ref{bubeck16-thm1} from \cite{bubeck2016}. One can further notice that Theorem~\ref{liu21-thm1} depicts a polynomial
dependency on~$q$ for geometry detection but when~$q<1$ there is a gap between the upper and lower bounds as illustrated by Figure~\ref{fig:phase-diagram} taken from \cite{liu21}. As stated in \cite{liu21}, \textit{"[...] a natural direction of future research is to consider [geometry detection] for other
connection functions or underlying latent spaces, in order to understand how the dimension threshold
for losing geometry depends on them."}

\begin{figure}[!ht]
\centering
\begin{minipage}[c]{0.53\linewidth}
\begin{theorem} \label{liu21-thm1} \cite[Theorem 1.1]{liu21}

\noindent Let~$p\in (0,1)$ be fixed.
\begin{enumerate}[label=(\roman*)]
\item If~$n^3q^6/d\to \infty$, then  \[ \mathrm{TV}(G(n, p), G(n, p, d, q))\to 1\quad \text{as}~n\to \infty.\]
\item If~$nq \to 0$ or~$n^3q^2/d \to 0$, then \[\mathrm{TV} (G(n, p), G(n, p, d,q))\to 0\quad \text{as}~n\to \infty.\]
\end{enumerate}
\end{theorem}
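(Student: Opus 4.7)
The plan is to use the signed triangle statistic
\[\tau(G) := \sum_{\{i,j,k\}\in \binom{[n]}{3}} (A_{ij}-p)(A_{ik}-p)(A_{jk}-p),\]
exactly as in \cite{bubeck2016}. Conditioning on the latent points $X=(X_i)_{i\in[n]}$, each centred edge indicator satisfies $\mathds E[A_{ij}-p\mid X] = q(B_{ij}-p)$ with $B_{ij}:=\mathds 1_{\langle X_i,X_j\rangle \geq t_{p,d}}$, so
\[\mathds E[\tau]=\binom{n}{3}q^3\,\mathds E[(B_{12}-p)(B_{13}-p)(B_{23}-p)] \gtrsim \frac{n^3 q^3}{\sqrt d}\]
by Lemma~\ref{bubeck16-lemma}. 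For $\mathrm{Var}(\tau)$, I would classify pairs of triangles by their edge/vertex overlap as in \cite[Section 3.4]{bubeck2016}; the only new feature is that the conditional-mean formula injects extra $q$-factors in each ``geometric'' contribution. The dominant off-diagonal contribution comes from two triangles sharing a single edge and scales as $n^4 q^4/d$, yielding $\mathrm{Var}(\tau) \lesssim n^3 + n^4 q^4/d$ under both distributions. Chebyshev's inequality then separates the two models whenever the mean dominates the standard deviation, which reduces exactly to $n^3 q^6/d \to \infty$ (note that this condition automatically forces $nq\to\infty$, so the term $n^2 q^2/\sqrt d$ in the standard deviation is handled).

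\textbf{Part (ii), case $nq\to 0$.} I would use a direct coupling. Draw $Z_{ij} \stackrel{\mathrm{iid}}{\sim} \mathrm{Bern}(q)$; on edges with $Z_{ij}=0$ put an independent $\mathrm{Bern}(p)$ (which matches the ER marginal), and on edges with $Z_{ij}=1$ put $B_{ij}$. Let $H:=\{e: Z_e=1\}$. The key observation is that whenever $H$ is a forest, the family $\{B_e\}_{e\in H}$ is jointly distributed as independent $\mathrm{Bern}(p)$'s: rooting the forest and using that, conditional on any vertex $X_v$, the latent coordinates of its children are independent and uniform on $\mathds S^{d-1}$, this follows by induction on the tree depth. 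Hence the coupling with $G(n,p)$ is exact on $\{H\text{ is a forest}\}$, and a first-moment count of cycles gives
\[\mathrm{TV}(G(n,p,d,q),G(n,p)) \leq \mathds P(H\text{ contains a cycle}) \leq \sum_{k\geq 3}\frac{n^k q^k}{2k}=O\!\left((nq)^3\right) \to 0.\]

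\textbf{Part (ii), case $n^3 q^2/d \to 0$.} Here I would use a KL/$\chi^2$ tensorisation in the style of Lemma~\ref{brennan20-lemmaKL}. For a fixed edge $e_0$,
\[Q^{\mathrm{Soft}}:=\mathds P(e_0\in E(G)\mid G_{\sim e_0}) = (1-q)p + q\, R_{e_0},\qquad R_{e_0}:=\mathds E[B_{e_0}\mid G_{\sim e_0}],\]
so $Q^{\mathrm{Soft}}-p = q(R_{e_0}-p)$. Pinsker's inequality and edge tensorisation then yield
\[2\,\mathrm{TV}^2(G(n,p,d,q),G(n,p)) \leq \binom{n}{2}\,\mathds E\!\left[\chi^2\!\left(\mathrm{Bern}(Q^{\mathrm{Soft}}),\mathrm{Bern}(p)\right)\right] = \frac{\binom{n}{2}\,q^2}{p(1-p)}\,\mathds E\!\left[(R_{e_0}-p)^2\right].\]
By information monotonicity $\mathds E[(R_{e_0}-p)^2]$ is at most its pure-RGG counterpart, which is $\lesssim n/d$ by the coupling argument powering Theorem~\ref{brennan20-thm}. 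Combining these pieces produces $\mathrm{TV}^2 \lesssim n^3 q^2/d$.

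\textbf{Main obstacles.} The two technical bottlenecks are (a) the variance estimate $\mathrm{Var}(\tau) \lesssim n^3 + n^4 q^4/d$ under the Soft-RGG, which requires tracking mixtures of ``ER noise'' and ``geometric signal'' through every triangle-pair configuration and checking that the cross-terms never inflate the leading $n^4 q^4/d$ scale, and (b) the pure-RGG concentration $\mathds E[(R_{e_0}-p)^2]\lesssim n/d$. Both reduce to second-moment bounds on random spherical-cap intersection measures on $\mathds S^{d-1}$, exactly the geometric heart of the analyses in \cite{bubeck2016} and \cite{brennan20}; inheriting those bounds and correctly propagating the $q$-scaling is what produces the particular exponents appearing in Theorem~\ref{liu21-thm1}, and it is also why a gap between the two exponents $6$ and $2$ of $q$ remains between the distinguishability and indistinguishability regimes.
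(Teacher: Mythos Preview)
The survey does not actually prove Theorem~\ref{liu21-thm1}; it is stated only as a citation of \cite{liu21}, with no accompanying argument beyond the remark that it generalises Theorem~\ref{bubeck16-thm1}. So there is no ``paper's own proof'' to compare against. That said, your proposal is a correct and self-contained route to the result, and it is entirely in the spirit of the two toolkits the survey does describe in detail.

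For Part~(i), your reduction $\mathds E[A_{ij}-p\mid X]=q(B_{ij}-p)$ cleanly pushes the $q$-factors through the signed-triangle computation; your identification of the shared-edge covariance as the term driving the $n^4q^4/d$ variance contribution is the right one (the shared-vertex covariance vanishes by rotational symmetry exactly as in the $q=1$ case, and the remaining cross-terms are of lower order in $q$). The observation $n^3q^6/d\to\infty \Rightarrow nq^2\to\infty \Rightarrow nq\to\infty$ correctly disposes of the second variance term.

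For Part~(ii), your forest-coupling for $nq\to 0$ is clean and complete; the conditional-independence argument for $\{B_e\}_{e\in H}$ on trees is exactly right. For $n^3q^2/d\to 0$, the key step beyond \cite{brennan20} is your information-monotonicity inequality $\mathds E[(R_{e_0}-p)^2]\le \mathds E[(Q^{\mathrm{RGG}}-p)^2]$: this holds because $G^{\mathrm{soft}}_{\sim e_0}$ is a measurable function of $(G^{\mathrm{RGG}}_{\sim e_0},\xi)$ with $\xi$ independent of $B_{e_0}$, so the tower property gives $\mathrm{Var}(\mathds E[B_{e_0}\mid G^{\mathrm{soft}}_{\sim e_0}])\le \mathrm{Var}(\mathds E[B_{e_0}\mid G^{\mathrm{RGG}}_{\sim e_0}])$. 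The bound $\mathds E[(Q^{\mathrm{RGG}}-p)^2]\lesssim n/d$ for fixed $p$ is indeed what the \cite{brennan20} coupling delivers (the $p\log p^{-1}$ factor in Theorem~\ref{brennan20-thm} is a constant here). Both ``main obstacles'' you flag are genuine but tractable, and you have correctly located where the work lies.
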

\end{minipage} 
\begin{minipage}[c]{0.36\linewidth}
\centering
\includegraphics[scale=0.35]{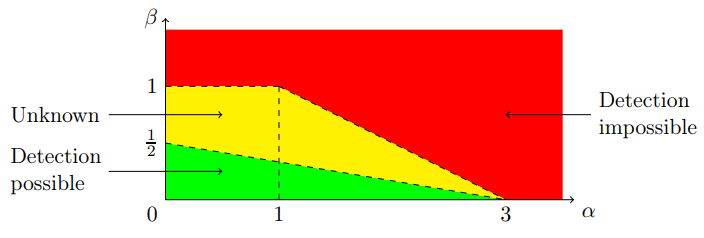}
\caption{Phase diagram for detecting geometry in the soft random geometric graph~$G(n, p, d, q)$. Here~$d = n^{\alpha}$ and~$q = n^{-\beta}$ for some~$\alpha, \beta > 0$.}
\label{fig:phase-diagram}
\end{minipage}
\end{figure}

The same authors in \cite{liuF21} extend the model of the Soft-RGG by considering the latent space $\mathds R^d$ where the latent positions~$(X_i)_{i\in [n]}$ are i.i.d. sampled with~$X_1 \sim \mathcal N(0,I_d)$. Two different nodes $i,j\in [n]$ are connected with probability $\mathbf p(\langle X_i,X_j\rangle)$ where $\mathbf p$ is a monotone increasing connection function. More precisely, they consider a connection function $\mathbf p$ parametrized by $i)$ a cumulative distribution function $F:\mathds R\to [0,1]$ and $ii)$ a scalar $r>0$ and given by \[\mathbf p :t \mapsto F\left( \frac{t-\mu_{p,d,r}}{r\sqrt d} \right),\]
where $\mu_{p,d,r}$ is determined by setting the edge density in the graph to be equal to $p$, namely $\mathds E\left[ \mathbf p (\langle X_1,X_2\rangle)\right] =p.$ They work in the dense regime by considering that $p\in(0,1)$ is independent of $n$.
The parameter $r$ encodes the flatness of the connection function and is typically a function of $n$. The authors prove phase transitions of detecting geometry in this framework in terms of the dimension of the underlying geometric space $d$ and the variance parameter $r$. The larger $r$, the smaller the dimension $d$ at which the phase transition occurs. When $r \underset{n \to \infty}{\to}0$, the connection function becomes an indicator function and the transition appears at $d\asymp n^3$ (recovering the result from Theorem~\ref{bubeck16-thm1} established for RGGs on the Euclidean Sphere).

\item {\it Suppose that we know that the latent variables are embedded in a Eucliden Sphere, can we estimate the dimension~$d$ from the observation of the graph?}\\
When~$p=1/2$, \cite{bubeck2016} obtained a bound on the difference of the expected number of signed triangles between consecutive dimensions leading to Theorem~\ref{bubeck16-thm5}.
 
\begin{theorem} \label{bubeck16-thm5} \cite[Theorem 5]{bubeck2016}

\noindent There exists a universal constant~$C >0$, such that for all integers~$n$ and~$d_1< d_2$, one has
 \[\mathrm{TV} (G(n,1/2, d_1), G(n,1/2, d_2))\geq 1-C\left(\frac{d_1}{n}\right)^2.\]
\end{theorem}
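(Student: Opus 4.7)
The plan is to distinguish the two geometric ensembles via the signed triangle statistic $\tau$, exactly in the spirit of the proof of Theorem~\ref{bubeck16-thm1}.(i). First I would refine Lemma~\ref{bubeck16-lemma} in the case $p=1/2$. Using that $A_{ij}-\tfrac12=\tfrac12\operatorname{sign}\langle X_i,X_j\rangle$ and the identification $X_i=g_i/\|g_i\|$ with $g_i\sim\mathcal N(0,I_d)$ iid, one has
\[\mu(d):=\mathbb E[\tau(G(n,1/2,d))]=\binom{n}{3}\cdot\frac{1}{8}\,\mathbb E\bigl[\operatorname{sign}\langle g_1,g_2\rangle\operatorname{sign}\langle g_2,g_3\rangle\operatorname{sign}\langle g_1,g_3\rangle\bigr].\]
A direct Gaussian computation (condition on $g_3$ and apply the two-dimensional arcsine sign identity to the projections of $g_1,g_2$ orthogonal to $g_3$, whose angle concentrates around $\pi/2$ with fluctuations of order $d^{-1/2}$) would then yield the sharp asymptotic
\[\mu(d)=c\binom{n}{3}\,d^{-1/2}\bigl(1+O(1/d)\bigr),\]
for an explicit universal constant $c>0$. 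In particular $\mu$ is monotone decreasing and, by a Taylor expansion of $d\mapsto d^{-1/2}$, one obtains $\mu(d_1)-\mu(d_2)\gtrsim n^3/d_1^{3/2}$ uniformly over $d_2>d_1$, with the worst case at $d_2=d_1+1$.

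With this estimate in hand, the remainder is a direct Chebyshev argument. The variance bound in Lemma~\ref{bubeck16-lemma} gives $\sigma^2(d):=\mathrm{Var}[\tau(G(n,1/2,d))]\le n^3+3n^4/d\le 4n^4/d_1$ whenever $d_1\le n$; for $d_1> n$ the claimed inequality is vacuous, so this is the only regime to handle. Taking the threshold $t=(\mu(d_1)+\mu(d_2))/2$ and the event $A=\{\tau>t\}$, Chebyshev's inequality applied under each of the two distributions yields
\[\mathbb P_{d_1}(\tau\le t)+\mathbb P_{d_2}(\tau>t)\;\le\;\frac{4\bigl(\sigma^2(d_1)+\sigma^2(d_2)\bigr)}{(\mu(d_1)-\mu(d_2))^2}\;\lesssim\;\frac{n^4/d_1}{n^6/d_1^{\,3}}=\left(\frac{d_1}{n}\right)^2,\]
so the standard testing lower bound $\mathrm{TV}(G(n,1/2,d_1),G(n,1/2,d_2))\ge\mathbb P_{d_1}(A)-\mathbb P_{d_2}(A)\ge 1-C(d_1/n)^2$ follows with a universal constant $C$.

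The main obstacle I foresee is precisely the refinement of Lemma~\ref{bubeck16-lemma} into a sharp two-sided expansion for $\mu(d)$. The lemma as recalled only provides a one-sided lower bound on $\mu(d)$, which is sufficient to separate $G(n,1/2,d)$ from the Erd\H{o}s--R\'enyi graph $G(n,1/2)$ in the regime $d\ll n^2$, but it does not allow one to compare two geometric ensembles of nearby dimensions: one genuinely needs the exact rate at which $\mu(d)$ decays in $d$, otherwise the gap $\mu(d_1)-\mu(d_2)$ cannot be lower bounded when $d_2-d_1=O(1)$. Once the Gaussian sign computation is carried out and the leading order $c\binom{n}{3}d^{-1/2}$ is made explicit, the worst-case gap bound $\mu(d_1)-\mu(d_2)\gtrsim n^3/d_1^{3/2}$ is immediate and the Chebyshev step above closes the argument.
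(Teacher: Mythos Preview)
Your proposal is correct and matches the approach described in the paper: the survey states that Theorem~\ref{bubeck16-thm5} is obtained by bounding the difference of the expected number of signed triangles between consecutive dimensions, which is precisely your refinement of Lemma~\ref{bubeck16-lemma} to a sharp expansion $\mu(d)=c\binom{n}{3}d^{-1/2}(1+O(1/d))$, followed by the Chebyshev separation you outline. You have also correctly identified the only non-routine step, namely that the one-sided lower bound of Lemma~\ref{bubeck16-lemma} must be upgraded to a two-sided asymptotic so that the gap $\mu(d_1)-\mu(d_1+1)\gtrsim n^3 d_1^{-3/2}$ can be extracted; once that is done, your variance and Chebyshev computation goes through verbatim.
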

The bound provided by Theorem~\ref{bubeck16-thm5} is tight in the sense that when~$d \gg n$,~$G(n,1/2,d)$ and~$G(n,1/2,d+1)$ are indistiguishable as proved in \cite{eldan15}.
More recently, \cite{araya19} proposed a method to infer the latent dimension of a Soft-RGG on the Euclidean Sphere in the low dimensional setting. Their approach is proved to correctly recover the dimension~$d$ in the relatively sparse regime as soon as the connection function belongs to some Sobolev class and satisfies a spectral gap condition.
\item  {\it Extension to hypergraphs and information-theoretic/computational gaps.}\\
Let us recall that a hypergraph is a generalization of a graph in which an edge can join any number of vertices. Extensions of RGGs to hypergraphs have already been proposed in the literature (see for example \cite{hypergraphs}). A nice research direction would consist in investigating the problem of geometry detection in these geometric representations of random hypergraphs. As already discussed, it has been conjectured that the problem of geometry detection in RGGs on $\mathds S^{d-1}$ does not present a statistical-to-algorithmic gap meaning that whenever it is information theoretically possible to differ $G(n,p,d)$ from $G(n,p)$, we can do it with a computational complexity polynomial in $n$ (using the signed triangle statistic). Dealing with hypergraphs, one can legitimately think that statistical-to-algorithmic gaps could emerge. This intuition is based on the fact that most of the time, going from a matrix problem to a tensor problem brings extra challenges. One can take the example of principal component analysis of Gaussian $k$-tensors with a planted rank-one spike (cf. \cite{arous20}). In this problem, we assume that we observe for any $l\in [n]$,
\[\mathbf Y^l=\lambda u^{\otimes k}+\mathbf W^l,\]
where $u \in \mathds S^{d-1}$ is deterministic, $\lambda \geq 0$ is the signal-to-noise ratio and where $(\mathbf W^l)_{l\in [n]}$ are independent Gaussian $k$-tensor (we refer to \cite{arous20} for further details). The goal is to infer the “planted signal” or “spike”, $u$. In the matrix case (i.e. when $k=2$), whenever the problem is information theoretically solvable, we can also recover the spike with a polynomial time algorithm (using for example a spectral method). If we look at the tensor version of this problem where $k\geq 3$, there is a regime of signal-to-noise
ratios for which it is information theoretically possible to recover the signal but for which there
is no known algorithm to approximate it in polynomial time in $n$. This is a statistical-to-algorithmic gap and we refer to \cite[Section 3.8]{brennan2020reducibility} and references therein for more details.
\item {\it Can we describe the properties of high dimensional RGGs in the regimes where $\mathrm{TV}(G(n,p),G(n,p,d)) \to 1$ as $n\to \infty$?}\\
In the low dimensional case, RGGs have been extensively studied: their spectral or topological properties, chromatic number or clustering number are now well known (see e.g. \cite{walters11,penrose03}). One of the first work studying the properties of high dimensional RGGs is \cite{avrachenkov20} where the authors are focused on the clique structure. These questions are essential to understand how good high dimensional RGGs are as models for the theory of network science. 
\item {\it How to find a relevant latent space given a graph with an underlying geometric structure?}\\
As stated in \cite{racz16}, {\it "Perhaps the ultimate goal is to find good representations of network data, and hence to faithfully embed the graph of interest into an appropriate metric space"}. This task is known as {\it manifold learning} in the Machine learning community. Recently \cite{smith19} proved empirically that the eigenstructure of the Laplacian of the graph provides information on the curvature of the latent space. This is an interesting research direction to propose model selection procedure and infer a relevant latent space for a graph.
\end{enumerate}

\section{Non-parametric inference in RGGs}
\label{sec:non-parametric}

In this section, we are interested in non-parametric inference in TIRGGs (see Definition~\ref{def:tirgg}) on the Euclidean sphere~$\mathds S^{d-1}$. The methods presented rely mainly on spectral properties of such random graphs. Note that spectral aspects in (Soft-)RGGs have been investigated for a long time (see for example \cite{rai04}) and it is now well-known that the spectra of RGGs are very different from the one of other random graph models since the appearance of particular subgraphs  give  rise  to  multiple  repeated  eigenvalues (see \cite{nyberg15} and \cite{blackwell07}). Recent works took advantage of the information captured by the spectrum of RGGs to study topological properties such as \cite{aguilar20}. In this section, we will see that random matrix theory is a powerful and convenient tool to study the spectral properties of RGGs as already highlighted by \cite{Dettmann17}.

\subsection{Description of the model and notations}
\label{sec:graphon}

We consider a Soft-RGG on the Euclidean Sphere~$\mathds S^{d-1}$ endowed with the geodesic distance~$\rho$. We consider that the connection function~$H$ is of the form~$H:t \mapsto \mathbf p(\cos(t))$ where~$\mathbf p:[-1,1] \to [0,1]$ is an unknown function that we want to estimate. This Soft-RGG belongs to the class of TIRGG has defined in Section~\ref{sec:models} and corresponds to a graphon model where the graphon~$W$ is given by \[\forall x,y \in \mathds S^{d-1}, \quad W(x,y) :=\mathbf p(\langle x,y\rangle).\] 
$W$ viewed as an integral operator on square-integrable functions, is a compact convolution (on the left) operator
\begin{equation}\mathds T_W:f \in L^2(\mathds S^{d-1}) \mapsto \int_{\mathds S^{d-1}} W(x,\cdot)f(x)\sigma(dx) \in L^2(\mathds S^{d-1}), \label{eq:inte-ope}\end{equation} where~$\sigma$ is the Haar measure on~$\mathds S^{d-1}$. The operator~$\mathds T_W$ is Hilbert-Schmidt and it has a countable
number of bounded real eigenvalues~$\lambda^*_k$ with zero as only accumulation point. The eigenfunctions
of~$\mathds T_W$ have the remarkable property that they do not depend on~$p$ (see \cite[Lemma 1.2.3]{xu13}): they are given by the real Spherical Harmonics. We denote~$\mathcal H_l$ the space of real Spherical Harmonics of degree~$l$ with dimension~$d_l$ and with orthonormal basis~$(Y_{l,j})_{j\in [d_l]}$. We end up with the following spectral decomposition of the {\it envelope} function~$\mathbf p$
\begin{equation}\forall x,y \in \mathds S^{d-1}, \quad \mathbf p(\langle x, y\rangle) = \sum_{l\geq 0}p^*_l \sum_{j=1}^{d_l}Y_{l,j}(x)Y_{l,j}(y) =\sum_{l\geq 0}p^*_l c_l G_l^{\beta}(\langle x,y \rangle), \label{decompo-p}\end{equation}
where~$\lambda^* := (p^*_0, p^*_1,\dots, p^*_1,\dots, p^*_l,\dots , p^*_l,\dots)$ meaning that each eigenvalue~$p^*_l$ has multiplicity~$d_l$ and~$G_l^{\beta}$ is the Gegenbauer polynomial of degree~$l$ with parameter~$\beta := \frac{d-2}{2}$ and
$c_l := \frac{2l+d-2}{d-2}$.~$\mathbf p$ is assumed bounded and as a consequence~$\mathbf p\in L^2((-1,1),w_{\beta})$ where the weight function~$w_{\beta}$ is defined by~$w_{\beta}(t) := (1-t^2)^{\beta-1/2}$. Note that the decomposition~\eqref{decompo-p} shows that it is enough to estimate the eigenvalues~$(p^*_l)_l$ to recover the envelope function~$\mathbf p$. 

\subsection{Estimating the matrix of probabilities}
Let us denote~$A$ the adjacency matrix of the Soft-RGG~$G$ given by entries~$A_{i,j}\in \{0,1\}$ where~$A_{i,j}= 1$ if the nodes~$i$ and~$j$ are connected and~$A_{i,j}= 0$ otherwise. We denote by~$\Theta$ the~$n\times n$ symmetric matrix with entries~$\Theta_{i,j}=\mathbf p\left(\langle X_i,X_j\rangle\right)$ for~$1\leq i < j\leq n$ and zero diagonal entries. We consider the scaled version of the matrices~$A$ and~$\Theta$ given by
\[\widehat T_n = \frac{1}{n}A\quad \text{and}  \quad T_n = \frac{1}{n}\Theta.\] \cite{bandeira16} proved a near optimal error bound for the operator norm of~$\widehat T_n-T_n$. Coupling this result with the Weyl’s perturbation Theorem gives a control on the difference between the eigenvalues of the matrices~$\widehat T_n$ and~$T_n$, namely with probability greater that~$1-\exp(-n)$ it holds,\begin{equation}\label{bandeira:iid-RGG}\forall k\in [n],\quad |\lambda_k(\widehat T_n)-\lambda_k(T_n)|\leq \|\widehat T_n-T_n\|=O(1/\sqrt n),\end{equation}where~$\lambda_k(M)$ is the~$k$-th largest eigenvalue of any symmetric matrix~$M$. This result shows that the spectrum of the scaled adjacency matrix~$\widehat T_n$ is a good approximation of the one of the scaled matrix of probabilities~$T_n$.

\subsection{Spectrum consistency of the matrix of probabilities}
For any~$R \geq 0$, we denote \begin{equation}\label{notation-tilde}\tilde R:= \sum_{l=0}^R d_l,\end{equation} which corresponds to the dimension of the space of Spherical Harmonics with degree at most~$R$. Proposition~\ref{prop4:iid-RGG} states that the spectrum of~$T_n$ converges towards the one of the integral operator~$\mathds T_W$ in the~$\delta_2$ metric which is defined as follows. 
\begin{definition}
Given two sequences~$x,y$ of reals--completing finite sequences by zeros--such that~$\sum_i x_i^2+y_i^2<\infty$, we define the~$\ell_2$ rearrangement distance~$\delta_2(x,y)$ as 
\[\delta_2^2(x,y):= \inf_{\sigma \in \mathfrak{S}_n}\sum_i (x_i-y_{\sigma(i)})^2\,,
\]
where~$\mathfrak{S}_n$ is the set of permutations with finite support. This distance is useful to compare two spectra.
\end{definition}
\begin{proposition} \label{prop4:iid-RGG} \cite[Proposition 4]{decastro20}

\noindent There exists a universal constant~$C >0$ such that for all~$\alpha\in (0,1/3)$ and for all~$n^3\geq  \tilde R \log(2\tilde R/\alpha)$, it holds \begin{equation}\label{eq:prop4-iid-RGG} \delta_2(\lambda(T_n),\lambda^*)\leq 2\left[ \sum_{l>R}d_l\left(p^*_l\right)^2\right]^{1/2}+C \sqrt{\tilde R\left(1+\log(\tilde R/\alpha)\right)/n},\end{equation}
with probability at least~$1-3\alpha$. \end{proposition}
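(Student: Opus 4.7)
The plan is to insert a rank-$\tilde R$ intermediary, bound the high-frequency tail and the bulk separately, and combine via the triangle inequality for $\delta_2$. Let $\mathbf p_R$ denote the partial sum of~\eqref{decompo-p} restricted to $l\leq R$, let $\Phi\in\mathds R^{n\times \tilde R}$ be the random matrix with entries $(Y_{l,j}(X_i))_{i\in[n],\,l\leq R,\,1\leq j\leq d_l}$, and let $D$ be the diagonal matrix with diagonal $(p_l^*)_{l\leq R}$, each $p_l^*$ repeated $d_l$ times. Let $T_n^R$ be the scaled probability matrix associated with $\mathbf p_R$: entries $\mathbf p_R(\langle X_i,X_j\rangle)/n$ off the diagonal, zero on it. The triangle inequality yields
\[\delta_2(\lambda(T_n),\lambda^*)\;\leq\;\delta_2(\lambda(T_n),\lambda(T_n^R))\;+\;\delta_2(\lambda(T_n^R),\lambda^*).\]

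For the tail term, Hoffman--Wielandt gives $\delta_2(\lambda(T_n),\lambda(T_n^R))\leq\|T_n-T_n^R\|_F$, with
\[\|T_n-T_n^R\|_F^2\;=\;\frac1{n^2}\sum_{i\neq j}(\mathbf p-\mathbf p_R)^2(\langle X_i,X_j\rangle).\]
Parseval in the Spherical Harmonics basis produces $\mathds E\bigl[(\mathbf p-\mathbf p_R)^2(\langle X_1,X_2\rangle)\bigr]=\sum_{l>R}d_l(p_l^*)^2$; a concentration inequality for the bounded order-2 U-statistic in the display (Arcones--Giné or Giné--Latała--Zinn, already cited in the excerpt) upgrades this expected bound to a high-probability statement, giving $\|T_n-T_n^R\|_F\leq\bigl(\sum_{l>R}d_l(p_l^*)^2\bigr)^{1/2}$ with probability at least $1-\alpha$.

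For the bulk term, the key identity $T_n^R+\tfrac{\mathbf p_R(1)}n I_n=\tfrac1n\Phi D\Phi^\top$ follows from the addition formula for Spherical Harmonics ($\sum_j Y_{l,j}(X_i)^2=d_l$); the diagonal shift $\mathbf p_R(1)/n$ contributes only $O(1/\sqrt n)$ to $\delta_2$. Setting $M:=\tfrac1n\Phi^\top\Phi\in\mathds R^{\tilde R\times\tilde R}$, the non-zero eigenvalues of $\tfrac1n\Phi D\Phi^\top$ coincide with those of the symmetric matrix $M^{1/2}DM^{1/2}$, and Hoffman--Wielandt followed by the algebraic decomposition $M^{1/2}DM^{1/2}-D=(M^{1/2}-I)D+D(M^{1/2}-I)+(M^{1/2}-I)D(M^{1/2}-I)$ yields
\[\delta_2\bigl(\lambda(M^{1/2}DM^{1/2}),\lambda(D)\bigr)\;\leq\;\bigl(2\|M-I\|_{\mathrm{op}}+\|M-I\|_{\mathrm{op}}^2\bigr)\|D\|_F.\]
Because $\mathbf p\in[0,1]$, Parseval gives $\|D\|_F^2\leq\sum_l d_l(p_l^*)^2\leq 1$. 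The addition formula also provides the deterministic identity $\|\Phi_{i,\cdot}\|^2=\tilde R$, so matrix Bernstein~\cite{tropp15} applied to $M-I=\tfrac1n\sum_i(\Phi_{i,\cdot}\Phi_{i,\cdot}^\top-I)$ gives, with probability $\geq 1-\alpha$ and under the hypothesis $n^3\geq\tilde R\log(2\tilde R/\alpha)$, $\|M-I\|_{\mathrm{op}}\leq C\sqrt{\tilde R(1+\log(\tilde R/\alpha))/n}$. Combining with the exact identity $\delta_2^2(\lambda(D)\cup\{0,0,\dots\},\lambda^*)=\sum_{l>R}d_l(p_l^*)^2$ and union-bounding the two high-probability events delivers the announced inequality.

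The main obstacle is the step that translates operator-norm control of $M-I$ into $\delta_2$ (equivalently, Frobenius) control of $M^{1/2}DM^{1/2}-D$. The argument above succeeds precisely because $\|D\|_F$ is bounded uniformly in $R$ by $\|\mathbf p\|_\infty\leq 1$; any looser handling---for example invoking Weyl with $\|D\|_{\mathrm{op}}$ and then paying a $\sqrt{\tilde R}$ factor for the sum of up to $\tilde R$ eigenvalue discrepancies---would reintroduce an extra $\sqrt{\tilde R}$ and destroy the claimed rate $\sqrt{\tilde R\log(\tilde R/\alpha)/n}$. A secondary technical point is a clean sub-Gaussian tail for the U-statistic controlling $\|T_n-T_n^R\|_F^2$, for which a concentration inequality sharper than Markov's is required.
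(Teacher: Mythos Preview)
Your proposal is correct and follows essentially the same route as the paper: truncation at resolution $R$, Hoffman--Wielandt for both the tail and the bulk pieces, matrix Bernstein applied to $M-I=\tfrac1n\sum_i\mathbf Y(X_i)\mathbf Y(X_i)^\top-I$ (which is exactly the sum in~\eqref{eq:sum-mat-iid}), and a degenerate ($\sigma$-canonical) U-statistic concentration for $\|T_n-T_n^R\|_F^2$. The only cosmetic slips---the diagonal shift $\mathbf p_R(1)/n$ costs $O(\sqrt{\tilde R/n})$ rather than $O(1/\sqrt n)$ since $|\mathbf p_R(1)|\leq\sqrt{\tilde R}$ by Cauchy--Schwarz, and the U-statistic step yields the mean plus an $O(\tilde R/n)$ fluctuation rather than the mean exactly---are harmless, as both residuals are absorbed into the variance term on the right-hand side of~\eqref{eq:prop4-iid-RGG}.
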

Proposition~\ref{prop4:iid-RGG} shows that the~$\ell_2$ rearrangement distance between~$\lambda^*$ and~$\lambda(T_n)$ decomposes as the sum of a bias term and a variance term. The second term on the right hand side of~\eqref{eq:prop4-iid-RGG} corresponds to the variance. The proof leading to this variance bound relies on the Hoffman-Wielandt inequality and borrows ideas from \cite{Gine}. It makes use of recent developments in random matrix concentration by applying a Bernstein-type concentration inequality (see \cite{tropp15} for example) to control the operator norm of the sum of independent centered symmetric matrices given by
\begin{equation}\label{eq:sum-mat-iid}\sum_{i=1}^n \left(\mathbf Y(X_i)\mathbf Y(X_i)^{\top} - \mathds E\left[\mathbf Y(X_i)\mathbf Y(X_i)^{\top}\right] \right) ,\end{equation}
with~$ \mathbf Y(x) = \left(Y_{0,0}(x),Y_{1,1}(x), \dots, Y_{1,d_1}(x),Y_{2,1}(x), \dots, Y_{2,d_2}(x), \dots, Y_{R,1}(x), \dots, Y_{R,d_R}(x) \right)^{\top} \in \mathds R^{\tilde R}$ for all~$x \in \mathds S^{d-1}$. The proof of Proposition~\ref{prop4:iid-RGG} also exploits concentration inequality for U-statistic dealing with a bounded, symmetric and~$\sigma$-canonical kernel (see \cite[Definition 3.5.1]{de2012decoupling}). The first term on the right hand side of~\eqref{eq:prop4-iid-RGG} is the bias arising from choosing a resolution level equal to~$R$. Its behaviour as a function of~$R$ can be analyzed by considering some regularity condition on the envelope~$\mathbf p$. Assuming that~$\mathbf p$ belongs to the Sobolev class~$Z^s_{w_{\beta}}((-1,1))$ (with regularity encoded by some parameter $s>0$) defined by
 \[ \left\{g=\sum_{k \geq0} g^*_k c_k G_k^{\beta}\in L^2((-1,1),w_{\beta})\; \Bigg|\;\| g\|_{Z^s_{w_{\beta}}((-1,1))}^*:= \left[ \sum_{l=0}^{\infty}d_l |g_l^*|^2\left(1+ (l(l+2\beta))^s\right) \right]^{1/2}<\infty \right\},\] and choosing the resolution level~$R_{opt}=\ceil{(n/\log n)^{\frac{1}{2s+d-1}}}$ to balance the bias/variance tradeoff appearing on the right hand side of~\eqref{eq:prop4-iid-RGG}, we get that
\begin{equation*}\mathds E \left[ \delta_2^2\left(   \lambda(T_n),\lambda^{*}\right)\right] \lesssim\left[ \frac{n}{\log n}\right]^{-\frac{2s}{2s+(d-1)}} .\end{equation*}
Thus we recover a classical nonparametric rate of convergence for estimating a function with smoothness~$s$ in a space of dimension~$d-1$. This is also the rate towards the probability matrix obtained by \cite{xu18}. Note that the choice of~$R_{opt}$ requires the knowledge of the regularity parameter~$s$. To overcome this issue, \cite{decastro20} proposed an adaptive procedure using the Goldenshluger-Lepski method.

\subsection{Estimation of the envelope function}
\label{sec:esti-env}

Let us denote~$\lambda:= \lambda(\widehat T_n)$. For a prescribed model size~$R \in \mathds N\backslash\{0\}$, \cite{decastro20} define the estimator~$\widehat \lambda^R$ of the truncated spectrum~$\lambda^{*R}:= (p_0^*, p^*_1, \dots, p_1^*, \dots,p^*_R, \dots,p^*_R)$ of~$\lambda^*$ as
\[\widehat \lambda^R:=( p_0^R(\hat \sigma), p_1^R(\hat \sigma),\dots, p_1^R(\hat \sigma),\dots, p_1^R(\hat \sigma),\dots, p_R^R(\hat \sigma),\dots, p_R^R(\hat \sigma)),\]
with \[\hat \sigma \in \underset{\sigma \in \mathfrak{S}_n }{\arg \min}  \sum_{l=0}^R \sum_{k=\widetilde{l-1}}^{\widetilde l}\left(  p_l^R(\sigma)-\lambda_{\sigma(k)}\right)^2   +\sum_{k=\tilde R+1}^n \lambda_{\sigma(k)}^2 \quad \text{and} \quad p_l^R(\sigma) = \frac{1}{d_l} \sum_{k=\widetilde{l-1}}^{\widetilde{l}} \lambda_{\sigma(k)},\]
where~$\mathfrak{S}_n$ is the set of permutations of~$[n]$ and where we used the notation~\eqref{notation-tilde} with the convention~$\widetilde{-1}=1$.
Using the results of the two previous subsections namely~\eqref{bandeira:iid-RGG} and Proposition~\ref{prop4:iid-RGG}, we obtain \cite[Theorem.6]{decastro20} which states that \begin{equation*}\mathds E \left[ \delta_2^2\left(  \widehat{\lambda}^{R_{opt}},\lambda^{*}\right)\right] \lesssim\left[ \frac{n}{\log n}\right]^{-\frac{2s}{2s+(d-1)}} .\end{equation*}
The envelope function~$\mathbf p$ can then be approximated by the plug-in estimator~$\widehat{\mathbf{  p}} \equiv \sum_{l=0}^{R_{opt}}p^{R_{opt}}_l(\hat{\sigma}) c_l G_l^{\beta}$ based on the decomposition~\eqref{decompo-p}. One drawback of this approach is the exponential complexity in~$R$ of the computation of~$\widehat\lambda^R$. In the next section, we will describe an approach based on a Hierarchical Agglomerative Clustering algorithm to estimate the envelope function~$\mathbf p$ efficiently.

\subsection{Open problems and perspectives}

The minimax rate of estimating a~$s$-regular function on a space of (Riemannian) dimension~$d-1$ such as~$\mathds S^{d-1}$ from~$n$ observations is known to be of order~$n^{-\frac{s}{2s+d-1}}$. In the framework of this section, even if the domain of the envelope function~$\mathbf p$ is~$[-1,1]$, inputs of~$\mathbf p$ are the pairwise distances given by inner products of points embedded in~$\mathds S^{d-1}$. Hence it is still an open question to know if the dimension~$d$ of the latent space appears in the minimax rate of convergence. Moreover, the number of observations in the estimation problem considered is~$n^2$ since the full adjacency matrix is known. Nevertheless the problem suffers from the presence of unobserved latent variables. This all contributes to a non standard estimation problem and  finding the optimal rate of convergence is an open problem.

\section{Growth-model in RGGs}
\label{sec:MRGG}

\subsection{Description of the model}
\label{model-MRGG}
In \cite{decastroduchemin20}, a new growth model was introduced for RGGs. The so-called Markov Random Geometric Graph (MRGG) already presented in Definition~\ref{def:mrgg} is a Soft-RGG where latent points are sampled with Markovian jumps. Namely, \cite{decastroduchemin20} consider~$n$ points~$X_1, X_2, \dots, X_n$ sampled on the Euclidean sphere~$\mathds{S}^{d-1}$ using a Markovian dynamic. They start by sampling uniformly~$X_1$ on~$\mathds{S}^{d-1}$. Then, for any~$i \in \{2, \dots,n\}$, they sample
\noindent
\begin{figure}[!ht]
\begin{minipage}{0.65\linewidth}
\begin{itemize}
\item a unit vector~$Y_i \in \mathds{S}^{d-1}$ uniformly, orthogonal to~$X_{i-1}$,
\item a real~$r_i \in [-1,1]$ encoding the distance between~$X_{i-1}$ and~$X_i$, see~\eqref{e:geo}.~$r_i$ is sampled from a distribution~$f_{\mathcal{L}}:[-1,1] \to [0,1]$, called the \textit{latitude function}, 
\end{itemize}
then~$X_i$ is defined by 
\begin{equation}
\label{e:jump}
X_i = r_i \times X_{i-1} + \sqrt{1-r_i^2}\times  Y_i\,.
\end{equation}
\end{minipage}\hfill
\begin{minipage}{0.3\linewidth}
\centering
\includegraphics[scale=0.24]{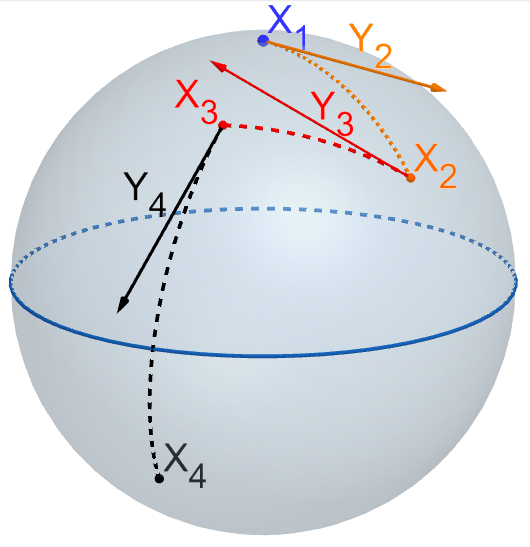}
\caption{Visualization of the sampling scheme in~$\mathds S^{2}$.}
\label{fig:MRGG-smampling}
\end{minipage}
\end{figure}

This dynamic is illustrated with Figure~\ref{fig:MRGG-smampling} and can be understood as follows. Consider that~$X_{i-1}$ is the north pole, then choose uniformly a direction (i.e. a longitude) and, in an independent manner, randomly move along the latitudes (the longitude being fixed by the previous step). The geodesic distance~$\gamma_i$ drawn on the latitudes satisfies 
\begin{equation}
\label{e:geo}
\gamma_i=\arccos(r_i)\,,
\end{equation}
where random variable~$r_i=\langle X_i, X_{i-1}\rangle$ has density~$f_{\mathcal{L}}(r_i)$.

\subsection{Spectral convergences}
In this framework and keeping the notations of the previous section, one can show that if~$\mathbf p\in Z^s_{w_{\beta}}((-1,1))$ and if~$f_{\mathcal L}$ satisfies the condition
\[(\mathcal{H}) \qquad \|f_{\mathcal L}\|_{\infty}:=\sup_{t\in[-1,1]} |f_{\mathcal L}(t)|<\infty \quad \text{and} \quad f_{\mathcal L}\;\; \text{is bounded away from zero},\]then \begin{equation}\label{eq:rate-markov}\mathds E\left[\delta_2^2(\lambda(T_n),\lambda^*)\vee \delta_2^2(\lambda^{R_{opt}}(\widehat{T}_n),\lambda^*)\right]=\mathcal{O}\left(\left[ \frac{n}{\log^2(n)} \right]^{- \frac{2s}{2s+d-1}}\right),\end{equation}
with $\lambda^{R_{opt}}(\hat{T}_n)=(\hat{\lambda}_1, \dots,\hat{\lambda}_{\tilde R_{opt}}, 0,0,\dots)$ and $R_{opt}=\floor{\left(n/\log^2(n)\right)^{\frac{1}{2s+d-1}}}$ where $\hat{\lambda}_1, \dots,\hat{\lambda}_n$ are the eigenvalues of $\widehat{T}_n$ sorted in decreasing order of magnitude. 
This result is the counterpart of Proposition~\ref{prop4:iid-RGG} in this Markovian framework. The proof follows closely the steps of the one of the previous section but one needs to deal with the dependency of the latent positions. Results from \cite{tropp15} are no longer suited to control the operator norm of~\eqref{eq:sum-mat-iid} since~$(X_i)_{i\geq 0}$ is a Markov chain. Nevertheless, this can be achieved by using concentration inequalities for sum of functions of Markov chains and by exploiting the rank one structure of the random matrices~$\mathbf Y(X_i)\mathbf Y(X_i)^{\top}$ together with a covering set argument. Another difficulty induced by the latent dynamic is the control of a U-statistic of order 2 of the Markov chain~$(X_i)_{i\geq 0}$ with a bounded kernel. Non-asymptotic results regarding the tail behaviour of U-statistics of a Markov chain have been so far very little touched. In a recent work, \cite{duchemin20} proved a concentration inequality for order 2 U-statistics with bounded kernels for uniformly ergodic Markov chain. Theorem~\ref{thm-markov-Ustat} gives a simplified version of their main result. 
Assuming that the condition~$(\mathcal H)$ is fulfilled, the Markov chain~$(X_i)_{i\geq 1}$ satisfies the assumptions of Theorem~\ref{thm-markov-Ustat} and one can show that~\eqref{eq:rate-markov} holds true.
\begin{theorem} \cite[Theorem 2]{duchemin20} \label{thm-markov-Ustat} 
\noindent Let us consider a Markov chain~$(X_i)_{i\geq 1}$ on some measurable space~$(E,\mathcal E)$ (with $E$ Polish) with transition kernel~$P:E\times E \to \mathds R$ and a function~$h: E \times E \to \mathds R$. We assume that
\begin{enumerate}
\item~$(X_i)_{i\geq 1}$ is a uniformly ergodic Markov chain with invariant distribution~$\pi$,
\item~$h$ is bounded and~$\pi$-{\it canonical}, namely \[ \forall x \in E, \quad \mathds E_{X \sim \pi}[h(X,x)]=\mathds E_{X \sim \pi}[h(x,X)]=0,\]
\item there exist~$\delta>0$ and some probability measure~$\nu$ on~$(E,\mathcal E)$ such that
\[\forall x \in E,\; \forall A \in \mathcal E, \quad P(x,A) \leq \delta \nu(A).\]
\end{enumerate}
Then there exist constants~$\beta,\kappa>0$ such that for any~$u \geq 1$, it holds with probability at least~$1-\beta e^{-u}\log n$,
\[\frac{1}{n(n-1)} \sum_{1\leq i, j\leq n, \; i\neq j} h(X_i,X_j) \leq \kappa \|h\|_{\infty}\log n\,\, \bigg\{   \frac{u}{n} + \left[\frac{u}{n}\right]^{2}  \bigg\}, \]
where~$\kappa$ and~$\beta$ only depend on constants related to the Markov chain~$(X_i)_{i \geq 1}$.
\end{theorem}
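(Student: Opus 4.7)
The overall plan is to exploit the minorization condition~(3) via the classical Nummelin splitting construction in order to endow the chain with an artificial regeneration structure, then decompose the U-statistic along the resulting regeneration blocks and reduce the problem to a concentration statement for a degenerate U-statistic of i.i.d.\ block-valued variables. More precisely, I would enlarge the state space to $E\times\{0,1\}$ and build a chain that projects to~$(X_i)$ and that visits the atom $\alpha=E\times\{1\}$ at a sequence of regeneration times $T_0<T_1<T_2<\cdots$. Between two such times, the excursions $B_k=(X_{T_{k-1}+1},\dots,X_{T_k})$ are mutually independent, and under uniform ergodicity together with the minorization constant~$\delta$, the block lengths $\tau_k=T_k-T_{k-1}$ have a geometric tail of parameter~$1-\delta$. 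A union bound therefore guarantees, with probability at least $1-\beta e^{-u}\log n$, that $\max_{k\le N_n}\tau_k\lesssim \log n+u$ and that the number of completed blocks $N_n$ concentrates around $\delta n$; this is where the leading $\log n$ factor in the statement originates.

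Next, I would split the double sum into a within-block part and a cross-block part:
\[
\sum_{i\neq j}h(X_i,X_j)=\underbrace{\sum_{k}\sum_{\substack{i,j\in B_k\\ i\neq j}}h(X_i,X_j)}_{S_{\mathrm{within}}}+\underbrace{\sum_{k\neq l}\sum_{\substack{i\in B_k\\ j\in B_l}}h(X_i,X_j)}_{S_{\mathrm{cross}}}.
\]
On the high-probability event of the previous step, $|S_{\mathrm{within}}|\le \|h\|_\infty\sum_k\tau_k^2\lesssim \|h\|_\infty n\log n$, so after dividing by $n(n-1)$ this term already matches the claimed rate. For $S_{\mathrm{cross}}$ I would freeze the length-profile $(\tau_k)$ and view $S_{\mathrm{cross}}$ as a genuine order-$2$ U-statistic in the i.i.d.\ block variables $B_k$, with the blockwise kernel
\[
H(B_k,B_l):=\sum_{i\in B_k}\sum_{j\in B_l}h(X_i,X_j).
\]
The point is that the $\pi$-canonicality of~$h$, combined with the fact that the blocks are precisely the excursions between hitting times of the atom, transfers into a $\pi_B$-canonical property of $H$ under the block stationary law; this is checked by integrating out one block while conditioning on the atom, which renders the inner expectation~$0$.

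I would then truncate block lengths at a threshold $L\asymp\log n$ (paying a negligible probabilistic cost by the geometric tails) so that the truncated block kernel is bounded by $\|h\|_\infty L^2$, and apply a Bernstein-type inequality for canonical, bounded, i.i.d.\ U-statistics—say Gin\'e--Latala--Zinn or Houdr\'e--Reynaud-Bouret. The degenerate nature of the kernel produces the characteristic $u/N_n+(u/N_n)^2$ shape, and after substituting $N_n\asymp \delta n$ and $L\asymp\log n$ the bound becomes $\kappa\|h\|_\infty\log n\,(u/n+(u/n)^2)$, as announced. The initial excursion before $T_0$ and the incomplete trailing block after $T_{N_n}$ have length at most $O(\log n+u)$ on the good event and thus contribute an additive $O(\|h\|_\infty\log n/n)$ term that is absorbed into the final rate.

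The main technical obstacle will be step three: verifying that the transferred kernel $H$ is genuinely $\pi_B$-canonical, while at the same time controlling the mismatch introduced by truncation and by the randomness of $N_n$. Concretely, one has to argue that truncation preserves canonicality up to a remainder whose mean and variance can be absorbed, which forces a delicate interplay between the regeneration tail bounds and the Hoeffding-type decomposition of $H$. A secondary difficulty is the dependence between the lengths $(\tau_k)$ and the blocks themselves, which I would handle by conditioning on $(\tau_k)$ and then taking expectation on a high-probability event, paying a union bound cost that is exactly responsible for the $\log n$ factor in front of $e^{-u}$ in the probability of deviation.
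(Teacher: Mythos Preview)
The paper does not prove this theorem; it is quoted verbatim as \cite[Theorem~2]{duchemin20} and used as a black box, so there is no in-paper proof against which to compare your proposal.

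That said, your plan---Nummelin splitting to create an atom, decomposition along regeneration blocks with geometric block-length tails, bounding the within-block contribution by $\|h\|_\infty\sum_k\tau_k^2$, and treating the cross-block part as a degenerate i.i.d.\ U-statistic via a Gin\'e--Lata{\l}a--Zinn or Houdr\'e--Reynaud-Bouret inequality---is the standard route for such Markov U-statistic concentration results and is, as far as I am aware, essentially the architecture of the proof in the cited reference. Your observation that $\pi$-canonicality of $h$ transfers to canonicality of the block kernel $H$ via Kac's occupation-measure identity is exactly the right mechanism.

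One genuine caveat: you call condition~(3) a ``minorization condition'' and use it to build the atom, but as written it is a \emph{majorization}, $P(x,A)\le\delta\nu(A)$, which does not produce regeneration. This is not fatal to your scheme, because uniform ergodicity (condition~(1)) by itself guarantees that the whole state space is small---there exist $m\ge1$, $\epsilon>0$ and a probability measure $\mu$ with $P^m(x,\cdot)\ge\epsilon\mu(\cdot)$ for all $x$---and this is what actually drives the Nummelin construction and the geometric tails. Condition~(3) is presumably used in the original for a different purpose (e.g.\ controlling transition densities in variance or coupling estimates). Just make sure to source the atom and the geometric block lengths from assumption~(1), not~(3).
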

{\bf Remark.} Note that Theorem~\ref{thm-markov-Ustat} holds for any initial distribution of the Markov chain. In their paper, \cite{duchemin20} go beyond the previous Hoeffding tail control by providing a Bernstein-type concentration inequality under the additional assumption that the chain is stationary. For the sake of simplicity we presented Theorem~\ref{thm-markov-Ustat} for a single kernel~$h$, but we point out that their results allow for the dependence of the kernels -- say~$h_{i,j}$ -- on the indexes in the sums which brings technical difficulties since standard blocking techniques can no longer be applied. The interest for this concentration result goes beyond the scope of random graphs since U-statistics naturally arise in online learning~\cite{clemencon08} or testing procedures~\cite{fromont2006}.

\subsection{Estimation procedure}Recalling the notation of the truncated spectrum $\lambda^{*R}$ (resp. $\lambda^{R}(\widehat{T}_n)$) of $\lambda^*$ (resp. $\lambda(\widehat{T}_n)$) from Section~\ref{sec:esti-env}, \cite{decastroduchemin20} introduce a new procedure (namely the SCCHEi algorithm) based on a Hierarchical Agglomerative Clustering that returns a partition $\mathcal C_{d_0},\dots,\mathcal C_{d_R},\Lambda$ of the $n$ eigenvalues of $\widehat T_n$ where for any $i \in \{0,\dots,R\}$, $|\mathcal C_{d_i}|=d_i$ (where we recall that $d_i$ is the dimension of the space of spherical Harmonics of degree $i$). The authors prove that for any fixed resolution level $R$, $n$ can be chosen large enough so that the clusters obtained in polynomial time from the SCCHEi algorithm satisfy
\begin{equation}
\label{eq:thm-SCCHEi}\delta_2^2(\lambda^{*R},\lambda^{R}(\widehat{T}_n)) = \sum_{k=0}^R \sum_{\hat \lambda\in \mathcal C_{d_k}} (\hat \lambda - p^*_k)^2.
\end{equation}
The final estimate of the envelope function with resolution level~$R$ is defined as \begin{equation} \label{env-hat}\widehat {\mathbf p}:= \sum_{k=0}^R \widehat p_k G_k^{\beta}, \quad \text{where} \quad\forall k \in \mathds{N},\quad \widehat{p}_k = \left\{
    \begin{array}{ll}
        \frac{1}{d_k} \sum_{ \lambda \in \mathcal{C}_{d_k}} \lambda  & \mbox{if } k \in \{0,\dots,R \}\\
        0 & \mbox{otherwise.}
    \end{array}
\right.  \end{equation}Eq.\eqref{eq:thm-SCCHEi} is not a sufficient condition to ensure that the $L^2$ error between the true envelope function and the plug-in estimator $\widehat {\mathbf p}$ (see Eq.\eqref{env-hat}) goes to $0$ has $n \to +\infty.$ This is due to identifiability issues coming from the $\delta_2$ metric. In \cite[Theorem 3]{decastroduchemin20}, the author obtain a theoretical guarantee on the $L^2$ error between the true envelope function and the plug-in estimate by considering additional assumptions on the eigenvalues $(p^*_k)_{k\geq0}$. Let us finally mention that the optimal resolution level $R_{opt}$ is unknown in practice. To bypass this issue, the authors propose a model selection procedure based on the slope heuristic (see \cite{arlot2019}).

\subsection{Non-parametric link prediction}

We are now interesting in solving link prediction tasks. Namely, from the observation of the graph at time~$n$, we want to estimate the probabilities of connection between the upcoming node~$n+1$ and the nodes already present in the graph. Recalling the definition of the random variables $(Y_i)_{i\geq 2}$ from Section~\ref{model-MRGG} and denoting further~$ \mathrm{proj}_{X_n^{\perp}}(\cdot)$ the orthogonal projection onto the orthogonal complement of~$\mathrm{Span(X_n)}$, the decomposition 
 \begin{equation}\label{eq:decompo}\langle X_i,X_{n+1} \rangle=\langle X_i,X_n \rangle \langle X_n , X_{n+1} \rangle + \sqrt{1-\langle X_n , X_{n+1} \rangle^2} \sqrt{1-\langle X_i,X_n \rangle^2} \langle \frac{\mathrm{proj}_{X_n^{\perp}}(X_i)}{\|\mathrm{proj}_{X_n^{\perp}}(X_i)\|_2}, Y_{n+1}\rangle ,\end{equation} shows that latent distances~$\mathbf D_{1:n} = (\langle X_i,X_j\rangle )_{1\leq i,j\leq n} \in [-1,1]^{n\times n}$ are enough for link prediction. Indeed, it can be achieved by estimating the posterior probabilities defined for any $i \in [n]$ by
\begin{align}
\eta_i(\mathbf D_{1:n})& = \mathds P\left( A_{i,n+1}=1 \;|\; \mathbf D_{1:n}  \right) \notag \\
\eta_i(\mathbf D_{1:n})&=\int_{r,u\in (-1,1)} \mathbf p\left(   \langle X_i,X_n \rangle r + \sqrt{1-r^2} \sqrt{1-\langle X_i,X_n \rangle^2}u\right) f_{\mathcal{L}}(r) w_{\frac{d-3}{2}}(u)\frac{\Gamma(\frac{d-1}{2})}{\Gamma(\frac{d-2}{2})\sqrt \pi}drdu,
 \label{eta_i}
\end{align}
where~$A_{i,n+1} \in \{0,1\}$ is one if and only if node~$n+1$ is connected to node~$i$, $w_{\frac{d-3}{2}}(u):=(1-u^2)^{\frac{d-3}{2}-\frac12}$ and where $\Gamma:a \in ]0,+\infty[\mapsto \int_0^{+\infty} t^{a-1}e^{-t}dt$. Using an approach similar to \cite{araya19}, \cite{decastroduchemin20} proved that one can get a consistent estimator~$\widehat G$ of the Gram matrix of the latent positions~$G=\left( \langle X_i,X_j \rangle \right)_{1\leq i,j\leq n}$ in Frobenius norm.
Hence, one can use a traditional plug-in estimator for~$ \eta_i(\mathbf D_{1:n})$ by replacing in~\eqref{eta_i}~$(i)$ the envelope function~$\mathbf p$ by~$\widehat {\mathbf p}$ from~\eqref{env-hat},~$(ii)$ the pairwise distances by their estimates~$\left(\widehat G_{i,j} \right)_{1\leq i,j \leq n}$ and~$(iii)$ the latitude function~$f_{\mathcal L}$ by a non-parametric kernel density estimator built from the latent distances between consecutive nodes~$\left(\langle X_i, X_{i+1}\rangle \right)_{i\in[n-1]}$ estimated by~$\left(\widehat G_{i,i+1}\right)_{i \in [n-1]}$.

\bigskip
Through the example of MRGG, one can easily grasp the interest of growth model for random graphs with a geometric structure. Modeling the time evolution of networks, one can hope to solve tasks such as link prediction or collaborative filtering. An interesting research direction would be to extend the previous work to an anisotropic Markov kernel.

\section{Connections with community based models}
\label{sec:community-RGG}

We have already described open problems and interesting directions to pursue regarding the questions tackled in the Sections~\ref{sec:detecting-geometry}, \ref{sec:non-parametric} and \ref{sec:MRGG}. In this last section, we want to look at RGGs from a different lens by highlighting a recently born line of research that investigates the connections between RGGs and community based models. Without aiming at presenting in a comprehensive manner the literature on this question, we rather focus on a few recent works that could inspire the reader to contribute in this emerging field.
\medskip

A plenty number of random graph models have been so far studied. However real world problems never match a particular model and most of the time present several internal structures. To take into account this complexity, a growing number of works have been trying to take the best of several known random graph models. \cite{Papadopoulos12} introduced a growth model where new connections with the upcoming node are drawn taking into account both popularity and similarity of vertices. The motivation is to find a balance between two trends for new connections in social networks namely {\it homophily} and {\it popularity}. 
 One can also mention \cite{Jordan15} who consider a growth model that interpolates between pure preferential attachment (essentially the well-known Barabasi–Albert model) and a purely geometric model (the online nearest-neighbour graph). As pointed out by \cite[Section II.B.3.a]{Barthelemy11}, {\it " it  is  clear  that  community  detection  in  spatial networks is a very interesting problem which might receive a specific answer."} 

\subsection{Extension of RGGs to take into account community structure}

\cite{galhotra17} proposed a new random graph model that incorporates community membership in standard RGGs. More precisely, they introduce the Geometric Block Model which is defined as follows. Consider~$V=V_1  \sqcup V_2  \sqcup \dots \sqcup V_k$ a partition of~$[n]$ in~$k$ clusters,~$\left(X_u\right)_{u \in [n]}$ independent and identical random vectors uniformly distributed on~$\mathds S^{d-1}$ and let~$\left(r_{i,j}\right)_{1\leq i , j \leq k} \in [0,2]^{k \times k}$. The Geometric Block Model is a random graph with vertices~$V$ and an edge exists between~$v\in V_i$ and~$u\in V_j$ if and only if~$\|X_u-X_v\|\leq r_{i,j}$. Focusing on the case where~$r_{i,i}= r_s, \forall i~$ and~$r_{i,j}=r_d, \; \forall i\neq j$, the authors want to recover the partition~$V$ observing only the adjacency matrix of the graph. They proved that in the relatively sparse regime (i.e. when~$r_s,r_d = \Omega_n\left(\frac{\log n }{n}\right)$), a simple motif-counting algorithm allows to detect communities in the Geometric Block Model and is near-optimal. The proposed greedy algorithm affects two nodes to the same community if the number of their common neighbours lies in a prescribed range whose bounds depend on~$r_s$ and~$r_d$ that are assumed to be known. The method is proved to recover the correct partition of the nodes with probability tending to~$1$ as~$n$ goes to~$+\infty.$

\pagebreak[3]

In \cite{sankararaman17}, the previous work is extended by considering arbitrary connection function. The paper sheds light on interesting differences between the standard SBMs and community models that incorporates some geometric structure. We start by presenting their model before highlighting some interesting results. Their model is the Planted Partition Random Connection Model (PPCM) that relies on a Poisson Point Process on~$\mathds R^d$ with intensity~$\lambda>0$~$\phi :=\{X_1, X_2, \dots \}$ where it is assumed that the enumeration of the points~$X_i$ is such that for all~$i,j \in \mathds N$,~$i>j \implies \|X_i\|_{\infty}\geq \|X_j\|_{\infty}$. Each atom~$i \in \mathds N$ is marked with a random variable~$Z_i \in \{-1,+1\}$.~$\overline{\phi}$ is the marked Poisson Point Process. The  sequence~$\{Z_i\}_{i\in \mathds N}$ is  i.i.d. with each element being uniformly distributed in~$\{-1,+1\}$. The interpretation of this marked point process is that for any node~$i\in \mathds N$, its location label is~$X_i$ and its community label is~$Z_i$. Considering two connection functions~$f_{in},\; f_{out}: \mathds R_+ \to [0,1]$, they first construct an infinite graph~$G$ with vertex set~$\mathds N$ and place an edge between any two nodes~$i,j \in \mathds N$ with probability~$f_{in}(\|X_i-X_j\|) \mathds 1_{Z_i=Z_j}+f_{out}(\|X_i-X_j\|)\mathds 1_{Z_i\neq Z_j}$. The graph~$G_n$ is then the induced subgraph  of~$G$ consisting of the nodes 1 through~$N_n$ where~$N_n:= \sup \left\{ i\geq 0 : X_i \in B_n:=[-\frac{n^{1/d}}{2},\frac{n^{1/d}}{2}]^d \right\}$. \\
Considering that the graph is observed and that the connections functions~$f_{in},f_{out}$ and the location labels~$(X_i)_i$ are known, the authors investigate conditions on the parameters of their model allowing to extract information on the community structure from the observed data.

\paragraph{Weak recovery}

Weak Recovery is said to be solvable if for every $n\in \mathds N \backslash \{0\}$, there exists some algorithm that - based on the observed data $G_n$ and $\phi$ - provides a sequence of $\{-1,+1\}$ valued random variables $\{\tau^{(n)}_i\}_{i=1}^{N_n}$ such that there exists a constant $\gamma >0$ such that the {\it overlap} between $\{\tau^{(n)}_i\}_{i=1}^{N_n}$ and $\{Z_i\}_{i=1}^{N_n}$ is asymptotically almost surely larger than $\gamma$, namely \[\lim_{n\to \infty}  \mathds P\left(\frac{\sum_{i=1}^{N_n} \tau^{(n)}_i Z_i}{N_n}\geq \gamma\right) = 1.\]
The authors identify regimes where weak recovery can be solved or not. We summarize their results with Proposition~\ref{prop:sbm-geo-weak}.
\begin{proposition} \label{prop:sbm-geo-weak} \cite[Proposition 1 - Corollary 2 - Theorem 2]{sankararaman17}

\noindent For every~$f_{in}(\cdot), f_{out}(\cdot)$ such that~$\{r \in \mathds  R^+ \; : \; f_{in}(r) \neq f_{out}(r)\}$ has positive Lebesgue
measure and any~$d\geq 2$, there exists a~$\lambda_c \in (0, \infty)$ such that
\begin{itemize}
\item for any~$\lambda<\lambda_c$,  weak recovery is not solvable.
\item for any~$\lambda > \lambda_c$, there exists an algorithm (which could possibly take exponential time) to solve
weak recovery.
\end{itemize}
Moreover, there exists~$\tilde{\lambda}_c < \infty$ (possibly larger than~$\lambda_c$) depending on~$f_{in}(\cdot), f_{out}(\cdot)$ and~$d$, such that for
all~$\lambda >\tilde{\lambda}_c~$, weak recovery is solvable in polynomial time.
\end{proposition}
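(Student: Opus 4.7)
The plan is to establish a threshold $\lambda_c$ via a monotonicity argument, then prove the two non-trivial endpoints separately: an information-theoretic impossibility for small $\lambda$, and a polynomial-time algorithm for large $\lambda$. Monotonicity follows from Poisson thinning: a PPP of intensity $\lambda_2 > \lambda_1$ can be coupled with a PPP of intensity $\lambda_1$ by independently retaining each atom with probability $\lambda_1/\lambda_2$ (the community marks are carried along by the coupling). Given any weak-recovery algorithm at intensity $\lambda_1$ with overlap at least $\gamma > 0$, applying it to the subgraph induced on the retained atoms of the $\lambda_2$ process and assigning independent uniform $\{-1,+1\}$ guesses to the discarded atoms yields an algorithm whose asymptotic overlap is at least $(\lambda_1/\lambda_2)\gamma - o(1)$ almost surely. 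Consequently the set of solvable intensities is upward-closed, and $\lambda_c := \inf\{\lambda \geq 0 : \text{weak recovery is solvable at intensity } \lambda\}$ is a well-defined element of $[0,+\infty]$; the bulk of the proof amounts to showing that $0 < \lambda_c < \infty$.

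For the lower bound $\lambda_c > 0$, I would compare the posterior of the community labels given $(G_n, \phi)$ to the uniform prior via a local tree approximation valid when $\lambda$ is small. The neighbourhood of a typical atom $i$ within graph distance $k$ in $G_n$ converges, at small intensities, to a two-type Galton--Watson tree whose per-type offspring intensities are $\lambda \int f_{in}$ and $\lambda \int f_{out}$, and, conditionally on the tree, the community labels propagate as a noisy broadcast with per-edge bias encoded by $(f_{in},f_{out})$. A Kesten--Stigum style second-moment argument then furnishes a critical intensity $\lambda_{\mathrm{KS}} > 0$ below which the mutual information between $Z_i$ and the labels at graph-distance $k$ tends to $0$ as $k \to \infty$. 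Coupling the PPCM to the tree at fine scales and using that any estimator's asymptotic overlap is controlled by these local correlations shows that no algorithm achieves positive overlap when $\lambda < \lambda_{\mathrm{KS}}$.

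For the upper bound $\tilde{\lambda}_c < \infty$ and the polynomial-time part, I would use a common-neighbour counting procedure in the spirit of \cite{galhotra17}. For any pair $i,j$ at distance $\rho := \|X_i - X_j\|$, the number $N_{ij}$ of common neighbours is, conditionally on the positions and on $Z_iZ_j$, Poisson with mean $\lambda \mu_s(\rho)$ if $Z_i = Z_j$ and $\lambda \mu_d(\rho)$ otherwise, where $\mu_s$ and $\mu_d$ are explicit convolution integrals of $f_{in}$ and $f_{out}$. Since $\{r : f_{in}(r) \neq f_{out}(r)\}$ has positive Lebesgue measure, $\mu_s - \mu_d$ is not identically zero on the space of pair positions, so for $\lambda$ large enough thresholding $N_{ij}$ predicts the sign of $Z_iZ_j$ correctly for a positive fraction of pairs with probability tending to $1$ (by Poisson concentration). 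A greedy propagation on a spanning subgraph of the giant component (whose existence at large $\lambda$ is a classical continuum-percolation result) then assigns consistent labels on a linear fraction of atoms, yielding positive overlap in time polynomial in $N_n$.

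The main obstacle is the lower bound $\lambda_c > 0$: one must rule out \emph{every} (including exponential-time) estimator rather than handling a specific statistic, and hence control the full mutual information between $\{Z_i\}$ and $G_n$ uniformly over the graph. The delicate technical point is that the Euclidean geometry introduces genuine spatial dependence between the explorations around distinct atoms, absent in the classical stochastic block model analysis; the coupling between $G_n$ and the Galton--Watson approximation must therefore be controlled uniformly for $\lambda$ close to the threshold, and the Kesten--Stigum bound must be adapted to the multi-type, position-dependent broadcast induced by $(f_{in}, f_{out})$.
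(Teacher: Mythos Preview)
The paper is a survey and does not itself prove this proposition; it is stated as a direct citation of results from \cite{sankararaman17}. That said, the discussion the paper gives immediately after the proposition bears directly on your proposal.

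Your monotonicity step via Poisson thinning is sound and gives the threshold structure cleanly. Your approach to the polynomial-time part is also in line with the paper's commentary: the paper notes that the local density of the PPCM ``allows to design a polynomial time algorithm that solves weak recovery for $\lambda > \tilde{\lambda}_c$ [...] by simply considering the neighbours of each node,'' which is essentially your common-neighbour / local-statistic procedure.

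There is a genuine gap in your argument for $\lambda_c>0$. You propose to couple the graph-distance-$k$ ball around a typical atom to a two-type Galton--Watson tree and then run a Kesten--Stigum-style broadcast/second-moment argument. This is exactly the sparse-SBM toolkit, and the paper explicitly explains why it does \emph{not} transfer to the PPCM: ``Sparse SBMs are known to be locally tree-like with very few short cycles [...] On the contrary, PPCMs are locally dense even if they are globally sparse. This is due to the presence of a lot of short loops (such as triangles). As a consequence, the standard tools used for SBMs are not relevant to solve weak recovery in PPCMs.'' Concretely, even at arbitrarily small intensity $\lambda$, any two graph-neighbours of a given atom lie within bounded Euclidean distance of one another and are therefore connected with probability bounded away from zero; the clustering coefficient does not vanish as $\lambda\to 0$, and the local neighbourhood does \emph{not} converge to a Galton--Watson tree. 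Your final paragraph flags the spatial dependence as ``the delicate technical point,'' but this understates the issue: it is not a coupling error term to be controlled, it is a structural obstruction that invalidates the tree approximation outright. The impossibility side in \cite{sankararaman17} is obtained by a different mechanism---a percolation / information-flow-from-infinity argument on an associated random connection model---rather than via broadcast on trees.
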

The intrinsic nature of the problem of weak recovery is completely different in the PPCM model compared to the standard sparse SBM. Sparse SBMs are known to be locally tree-like with very few short cycles. Efficient algorithms that solve weak recovery in the sparse SBM (such as message passing algorithm, convex relaxation or spectral methods) deeply rely on the local tree-like structure. On the contrary, PPCMs are locally dense even if their are globally sparse. This is due to the presence of a lot of short loops (such as triangles). As a consequence, the standard tools used for SBMs are not relevant to solve weak recovery in PPCMs. Nevertheless, the local density allows to design a polynomial time algorithm that solves weak recovery for~$\lambda > \tilde{\lambda}_c$ (see Proposition~\ref{prop:sbm-geo-weak}) by simply considering the neighbours of each node. Proposition~\ref{prop:sbm-geo-weak} lets open the question of the existence of a gap between information versus computation thresholds. Namely, is it always possible to solve weak recovery in polynomial time when~$\lambda > \lambda_c$? In the sparse and symmetric SBM, it is known that there is no information-computation gap for~$k=2$ communities, while for~$k\geq 4$ a non-polynomial algorithm is known to cross the Kesten-Stigum threshold which was conjectured by \cite{decelle11} to be the threshold at which weak recovery can be solved efficiently.

\paragraph{Distinguishability}
The distinguishability problem asks how well one can solve a hypothesis testing problem that consists in finding if a given graph has been sampled from the PPCM model or from the null, which is given by a plain random connection
model with connection function~$(f_{in}(\cdot)+f_{out}(\cdot))/2$ without communities but having the same average degree and distribution for spatial locations. \cite{sankararaman17} prove that for every~$\lambda > 0$,~$d \in \mathds N$ and connection functions~$f_{in}(\cdot)$ and~$f_{out}(\cdot)$ satisfying~$1 \geq 
f_{in}(r) \geq  f_{out}(r) \geq  0$ for all~$r \geq 0$, and~$\{r \geq  0 : f_{in}(r) \neq  f_{out}(r)\}$ having positive Lebesgue measure, the probability distribution of the null and the alternative of the hypothesis test are mutually singular. As a consequence, there exists some regimes (such as~$\lambda < \lambda_c$ and~$d \geq 2$) where we can be very sure by observing the data that a partition exists, but cannot identify it better than at random. In these cases, it is out of reach to bring together the small partitions of nodes in different regions of the space into one coherent. Such behaviour does not exist in the sparse SBM with two communities as proved by \cite{mossel14} and was conjectured to hold also for~$k\geq 3$ communities in \cite{decelle11}.

\subsection{Robustness of spectral methods for community detection with geometric perturbations}
\label{subsec:robustness}

In another line of work, \cite{peche20} are studying robustness of spectral methods for community detection when connections between nodes are perturbed by some latent random geometric graph. They identify specific regimes in which spectral methods are still efficient to solve community detection problems despite geometric perturbations and we give an overview of their work in what follows. Let us consider some fixed parameter~$\kappa \in [0,1]$ that drives the balance between strength of the community signal and the noise coming from the geometric perturbations. For sake of simplicity, they consider a model with two communities where each vertex~$i$ in the network is characterized by some vector~$X_i \in \mathds R^2$ with distribution~$\mathcal N(0,I_2)$. They consider~$p_1,p_2 \in (0,1)$ that may depend on the number of nodes~$n$ with~$p_1>p_2$ and~$ \sup_n p_1/p_2 <\infty$. Assuming for technical reason~$\kappa+ \max\{p_1,p_2\}\leq 1$, the probability of connection between~$i$ and~$j$ is \[\mathds P\left\{i\sim j\;|\; X_i,X_j\right\}=\kappa \exp\left(-\gamma\|X_i-X_j\|^2\right)+\left\{
    \begin{array}{ll}
        p_1 & \mbox{if } i \text{ and } j \text{ belong to the same community}  \\
        p_2 & \mbox{otherwise.}
    \end{array}
\right.,\]
where the inverse width~$\gamma>0$ may depend on~$n$. We denote by~$\sigma\in \{\pm 1/\sqrt n\}^n$ the normalized community vector illustrating to which community each vertex belong ($\sigma_i=-1/\sqrt n$ if~$i$ belongs to the first community and~$\sigma_i=1/\sqrt n$ otherwise). The matrix of probabilities of this model is given by~$Q:= P_0+P_1$ where 
\[P_0 := \begin{bmatrix} p_1 J & p_2 J \\ p_2 J & p_1 J  \end{bmatrix}\quad \text{and} \quad P_1 :=\kappa P = \kappa \left( (1-\delta_{i,j}) e^{- \gamma \|X_i-X_j\|^2} \right)_{1\leq i,j\leq n}.\] The adjacency matrix~$A$ of the graph can thus be written as~$A=P_0+P_1+A_c$ where~$A_c$ is, conditionnally on the~$X_i$’s, a random matrix with independent Bernoulli entries which are centered. Given the graph-adjacency matrix $A$, the objective is to output a normalized vector $x\in \{\pm 1/ \sqrt n \}^n$ such that, for some $\epsilon>0$,
\begin{itemize}
\item Exact recovery: with probability tending to $1$, $|\sigma^{\top}x|= 1,$ \item Weak recovery (also called detection): with probability tending to 1, $|\sigma^{\top}x|>\epsilon.$
\end{itemize}
Let us highlight that contrary to the previous section, the latent variables~$(X_i)_i$ are
not observed. When~$\kappa=0$, we recover the standard SBM:~$Q=P_0$ has two non zero eigenvalues which are~$\lambda_1=n(p_1+p_2)/2$ with associated normalized eigenvector~$v_1=\frac{1}{\sqrt n}(1,1,\dots, 1)^{\top}$ and~$\lambda_2=n(p_1-p_2)/2$ associated to~$v_2=\sigma=\frac{1}{\sqrt n}(1,\dots,1,-1,\dots, -1)^{\top}$.  Spectral methods can thus be used to recover communities by computing the second eigenvector of the adjacency matrix~$A$. To prove that spectral methods still work in the presence of geometric perturbations, one needs to identify regimes in which the eigenvalues of~$A$ are well separated and the second eigenvector is approximately~$v_2.$ 

In the regime where~$\gamma \gg n/\log n$, the spectral radius~$\rho(P_1)$ of~$P_1$ vanishes and we asymptotically recover a standard SBM. Hence, they focus on the following regime
\begin{equation}
\gamma \underset{n \to \infty}{\to} \infty \quad \text{and} \quad \frac{1}{\gamma} \frac{n}{\ln n }\underset{n \to \infty}{\to} \infty. \label{A1} \tag{$A_1$}
 \end{equation}
Under Assumption~\eqref{A1}, \cite[Proposition 2]{peche20} states that with probability tending to one,~$\rho(P_1)$ is of order~$\frac{\kappa n}{2\gamma}.$ Using \cite[Theorem 2.7]{benaych20} to get an asymptotic upper-bound on the spectral radius of~$A_c$, basic perturbation arguments would prove that standard techniques for community detection work in the regime where 
\[\frac{\kappa n}{2\gamma}\ll \sqrt{\frac{n(p_1+p_2)}{2}} = \sqrt{\lambda_1}.\]
Indeed, it is now well-known that weak recovery in the SBM can be solved efficiently as soon as~$\lambda_2 >  \sqrt{\lambda_1}$ (for example using the power iteration algorithm on the non-backtracking matrix from \cite{bordenave2015}).
Hence, the regime of interest correspond to the case where 
\begin{equation}\exists c,C>0 \quad \text{s.t.} \quad \lambda_2^{-1} \frac{\kappa n}{2\gamma} \in [c,C], \quad \frac{\lambda_2}{\lambda_1} \in [c,C] \quad \text{and} \quad \lambda_2 \gg \sqrt{\lambda_1},     \label{A2} \tag{$A_2$} \end{equation}
which corresponds to the case where the noise induced by the latent random graph is of the same order of magnitude as the signal.  Under~\eqref{A2}, the problem of weak recovery can be tackled using spectral methods on the matrix $S=P_0+P_1$: the goal is to reconstruct communities based on the second eigenvector of~$S$. To prove that these methods work, the authors first find conditions ensuring that two eigenvalues of~$S$ exit the support of the spectrum of~$P_1$. Then, they provide an asymptotic lower bound for the level of correlation between~$v_2=\sigma$ and the second eigenvector~$w_2$ of~$S$, which leads to Theorem~\ref{mainthm-peche20}.
\begin{theorem} \cite[Theorem 10]{peche20} \label{mainthm-peche20}

\noindent Suppose that Assumptions~\eqref{A1} and~\eqref{A2} hold and that~$\lambda_1> \lambda_2+ 2 \frac{\kappa}{2\gamma}$. Then the correlation~$|w^{\top}_2v_2|$ is  uniformly  bounded  away  from ~$0$. Moreover, denoting~$\mu_1$ the largest eigenvalue of~$P_1$, if the ratio~$\lambda_2/\mu_1$ goes to infinity then~$|w^{\top}_2v_2|$ tends to~$1$, which gives weak (and even exact at the limit) recovery.
\end{theorem}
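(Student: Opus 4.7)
The plan is to treat $S = P_1 + P_0$ as a rank-two additive perturbation of $P_1$ via a secular equation for the outlier eigenvalues, and then to extract information on the second eigenvector $w_2$ from the associated resolvent formula. Writing $P_0 = VDV^{\top}$ with $V = [v_1\;v_2] \in \mathds R^{n \times 2}$ and $D = \mathrm{diag}(\lambda_1,\lambda_2)$, and introducing the resolvent $G(z) := (zI - P_1)^{-1}$, a Schur complement identity yields that any $z \notin \mathrm{Spec}(P_1)$ is an eigenvalue of $S$ if and only if
\[
\det\!\bigl(I_2 - D\, V^{\top} G(z) V\bigr) = 0,
\]
with an associated eigenvector of the form $w = G(z)Vc$ for some nonzero $c \in \ker(I_2 - DV^{\top}G(z)V)$. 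This reduces the $n$-dimensional spectral problem to the analysis of the deterministic $2\times 2$ matrix $V^{\top} G(z) V$.

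The next step would be to obtain quantitative asymptotics for the four entries of $V^{\top} G(z) V$ in the spectral window $z \asymp \lambda_2$, which by Assumption $(A_2)$ is of the same order as $\mu_1 = \rho(P_1)$. Two structural facts drive the analysis. First, the row sums of the Gaussian-kernel matrix $P_1$ concentrate under $(A_1)$ around a common value asymptotic to $\mu_1$, so $v_1 = n^{-1/2}(1,\dots,1)^{\top}$ is, up to a vanishing error, the leading eigenvector of $P_1$; this gives $v_1^{\top} G(z) v_1 \approx (z - \mu_1)^{-1}$. Second, $v_2 = \sigma$ is independent of the latent positions $(X_i)_i$, so a conditional Hanson-Wright estimate applied to the quadratic form $\sigma^{\top} G(z) \sigma$ (and its off-diagonal counterpart $v_1^{\top} G(z) \sigma$) yields concentration around $n^{-1}\mathrm{tr}\, G(z)$ and $0$ respectively. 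Combined with the spectral radius bound of \cite[Proposition 2]{peche20}, one obtains $v_2^{\top} G(z) v_2 = z^{-1}\bigl(1 + O(\mu_1/z)\bigr)$ and $v_1^{\top} G(z) v_2 = o(\mu_1^{-1})$, so $V^{\top} G(z) V$ is, to leading order, diagonal in the relevant window.

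With this near-diagonalisation, the secular equation decouples into the two scalar fixed-point equations $\lambda_i\, v_i^{\top} G(z) v_i = 1$, $i=1,2$, whose solutions $z_1 \approx \lambda_1 + \mu_1$ and $z_2 \approx \lambda_2 + O(\mu_1^{2}/\lambda_2)$ are the two outliers of $S$. The hypothesis $\lambda_1 > \lambda_2 + 2\mu_1$ is precisely what ensures that these outliers are separated from each other and lie strictly above $\mathrm{Spec}(P_1) \subset [-\mu_1,\mu_1]$, so that $z_2$ is indeed the second eigenvalue of $S$. Substituting back into $w_2 = G(z_2)Vc$ and using that $V^{\top}G(z_2)V$ is almost diagonal, the coefficient vector $c$ is essentially aligned with $e_2$, which after normalisation forces $|w_2^{\top} v_2|$ to be bounded away from zero uniformly. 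When additionally $\lambda_2/\mu_1 \to \infty$, the Neumann expansion $G(z_2) = z_2^{-1}(I - P_1/z_2)^{-1} = z_2^{-1}\bigl(I + O(\mu_1/z_2)\bigr)$ shows that $G(z_2)v_2 = z_2^{-1}v_2 + o(\|v_2\|/z_2)$, and the normalised eigenvector $w_2$ collapses onto $v_2$ in $\ell_2$-norm.

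The main technical obstacle I expect is establishing the concentration of $\sigma^{\top} G(z) \sigma$ around $n^{-1}\mathrm{tr}\, G(z)$ uniformly across the spectral window, since $\sigma$ is $\{\pm 1/\sqrt n\}$-valued (not Gaussian) and $G(z)$ is a random operator depending on $(X_i)_i$. The natural route is to condition on $(X_i)_i$, apply a Hanson-Wright-type inequality (for $\sigma$ whose coordinates are sub-Gaussian with variance $1/n$), and then union-bound over a discretisation of $z$ using the Lipschitz continuity of $z \mapsto \|G(z)\|$ outside $\mathrm{Spec}(P_1)$. One could alternatively work directly with the spectral decomposition $\sigma^{\top} G(z) \sigma = \sum_k (z - \mu_k^{P_1})^{-1}(\sigma^{\top} u_k^{P_1})^{2}$ and exploit the independence between $\sigma$ and the eigenvectors of $P_1$. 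Once this concentration is secured, the remaining perturbative steps --- solving the scalar secular equations by a contraction argument and reading off the eigenvector overlap --- are routine.
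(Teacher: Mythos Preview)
Your resolvent/secular-equation strategy is exactly the route the survey sketches for the proof in \cite{peche20}: locate two outlier eigenvalues of $S=P_0+P_1$ above the spectrum of $P_1$ via the determinant equation $\det(I_2 - D\,V^\top G(z)V)=0$, then read off the overlap $|w_2^\top v_2|$ from the eigenvector formula $w_2\propto G(z_2)Vc$. At that level of description the two agree.

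There are, however, two places where the outline as written would not go through. First, in the model as presented here the community vector $\sigma = n^{-1/2}(1,\dots,1,-1,\dots,-1)^\top$ is \emph{deterministic} --- the block form of $P_0$ fixes the labelling --- so conditioning on $(X_i)_i$ and applying Hanson--Wright to $\sigma$ is vacuous: there is no remaining randomness in $\sigma$. The concentration of $\sigma^\top G(z)\sigma$ around $n^{-1}\mathrm{tr}\,G(z)$ must come from the randomness of the kernel matrix $P_1$ itself, i.e.\ from an isotropic (delocalisation) statement showing that for a \emph{fixed} unit vector $v\perp v_1$ the quadratic form $v^\top G(z)v$ self-averages. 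Your alternative route via the spectral decomposition of $P_1$ points in that direction, but ``independence between $\sigma$ and the eigenvectors of $P_1$'' is not the mechanism; what is actually needed is delocalisation of the bulk eigenvectors $u_k^{P_1}$ so that $(\sigma^\top u_k^{P_1})^2\approx 1/n$, and for kernel matrices built from i.i.d.\ points in $\mathds R^2$ this is a separate fact requiring its own argument (exchangeability of the $X_i$'s is the entry point, not Hanson--Wright).

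Second, under Assumption \eqref{A2} the ratio $\mu_1/\lambda_2$ is bounded above and below by positive constants, so your Neumann estimate $v_2^\top G(z)v_2 = z^{-1}(1+O(\mu_1/z))$ carries an $O(1)$ error in the window $z\asymp\lambda_2$, and the claimed $z_2\approx\lambda_2+O(\mu_1^2/\lambda_2)$ says only $z_2=\lambda_2+O(\lambda_2)$ --- not enough to place $z_2$ strictly above $\mathrm{Spec}(P_1)$ or to separate it from $z_1$. In this critical regime one has to replace the Neumann series by the genuine Stieltjes transform $m_{P_1}(z)=n^{-1}\mathrm{tr}\,G(z)$, solve $\lambda_2\,m_{P_1}(z)=1$ against the deterministic limit of $m_{P_1}$, and verify that the root sits above the bulk edge of $P_1$; this is precisely the step the survey flags when it says the authors ``find conditions ensuring that two eigenvalues of $S$ exit the support of the spectrum of $P_1$''. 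Your Neumann argument is valid only for the second clause of the theorem (when $\lambda_2/\mu_1\to\infty$), not for the first.
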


\subsection{Recovering latent positions}
\label{subsec:latent-pos}

From another viewpoint, one can think RGGs as an extension of stochastic block models where the discrete community structure is replaced by an underlying geometry. With this mindset, it is natural to directly transport concepts and questions from clustered random graphs to RGGs. For instance, the task consisting in estimating the communities in SBMs may correspond to the estimation of latent point neighborhoods in RGGs. More precisely, community detection can be understood in RGGs as the problem of recovering the geometric representation of the nodes (e.g. through the Gram matrix of the latent positions). This question has been tackled by \cite{eldan2020community} and \cite{araya20}. Both works consider random graphs sampled from the TIRGG model on the Euclidean sphere $\mathds S^{d-1}$ with some envelope function $\mathbf p$ (see Definition~\ref{def:tirgg}), leading to a graphon model similar to the one presented in Section~\ref{sec:graphon}. While the result from \cite{araya20} holds in the dense and relatively sparse regimes, the one from \cite{eldan2020community} covers the sparse case. Thanks to harmonic properties of $\mathds S^{d-1}$, the graphon eigenspace composed only with linear eigenfunctions (harmonic polynomials of degree one) directly relates to the pairwise distances of the latent positions. This allows \cite{eldan2020community} and \cite{araya20} to provide a consistent estimate of the Gram matrix of the latent positions in Frobenius norm using a spectral method. Their results hold under the following two key assumptions. 
\begin{enumerate}
\item {\it An eigenvalue gap condition.} They assume that the $d$ eigenvalues of the integral operator~$\mathds T_W$ - associated with the graphon $W:= \mathbf p(\langle \cdot,\cdot \rangle)$ (see \eqref{eq:inte-ope}) - corresponding to the Spherical Harmonics of degree one is well-separated from the rest of the spectrum.
\item {\it A regularity condition.} They assume that the envelope function $\mathbf p$ belongs to some Weighted Sobolev space, meaning that the sequence of eigenvalues of $\mathds T_W$ goes to zero fast enough.
\end{enumerate}
In addition to similar assumptions, \cite{eldan2020community} and~\cite{araya20} share the same proof structure. First they need to recover the $d$ eigenvectors from the adjacency matrix corresponding to the space of spherical Harmonics of degree one. Then the Davis-Kahan Theorem is used to prove that the estimate of the Gram matrix based on the previously selected eigenvectors is consistent in Frobenius norm. To do so, they require a concentration result ensuring that the adjacency matrix $A$ (or some proxy of it) converges in operator norm towards the matrix of probabilities~$\Theta$ with entries~$\Theta_{i,j}=\mathbf p\left(\langle X_i,X_j\rangle\right)$ for~$1\leq i \neq j\leq n$ and zero diagonal entries. \cite{araya20} relies on~\cite[Corollary 3.12]{bandeira16}, already discussed in~\eqref{bandeira:iid-RGG}, that provides the convergence $\|A-\Theta\|\to 0$ as $n\to \infty$ in the dense and relatively sparse regimes. In the sparse regime, such concentration no longer holds. Indeed, in that case, degrees of some vertices are much higher than the expected degree, say $\mathrm{deg}$. As a consequence, some rows of the adjacency matrix $A$ have Euclidean norms much larger than $\sqrt{\mathrm{deg}}$, which implies that for $n$ large enough, it holds with high probability $\|A - \Theta\|\gg \sqrt{\mathrm{deg}}$. To cope with this issue,~\cite{eldan2020community} do not work directly on the adjacency matrix but rather on a slightly amended version of it - say $A'$ - where one reduces the weights of the edges incident
to high degree vertices. In that way, all
degrees of the new (weighted) network become bounded, and~\cite[Theorem 5.1]{le2018} ensures that $A'$ converges to $\Theta$ in spectral norm as $n$ goes to $+\infty.$ Hence in the sparse regime the adjacency matrix converges towards its expectation {\it after regularization}. The proof of this random matrix theory tool is based on a famous result in functional analysis known as the Grothendieck-Pietsch factorization. \\
Let us finally mention that this change of behaviour of the extreme eigenvalues
of the adjacency matrix according to the maximal mean degree has been studied in details for inhomogeneous Erdös-Rényi graphs in~\cite{benaych20} and~\cite{benaych2019}.

\subsection{Some perspectives}

The paper \cite{sankararaman17} makes the strong assumption that the locations labels~$(X_i)_{i\geq 1}$ are known. Hence it should be considered as an initial work calling for future theoretical and practical investigations. Keeping the same model, it would be of great interest to design algorithms able to deal with unobserved latent variables to allow real-data applications. A first step in this direction was made by \cite{Avrachenkov_2021} where the authors propose a spectral method to recover hidden clusters in the Soft Geometric Block
Model where latent positions are not observed. On the theoretical side, \cite{sankararaman17} describe at the end of their paper several open problems. Their suggestions for future works include~$i)$ the extension of their work to a larger number of communities,~$ii)$ the estimation from the data of the parameters of their model (namely~$f_{in}$ and~$f_{out}$ that they assumed to be known), and~$iii)$ the existence of a possible gap between information versus computation thresholds, namely, they wonder if there is a regime where community detection is solvable, but without any polynomial (in~$n$) time and space algorithms. \\
Another possible research direction is the extension of the work from Section~\ref{subsec:robustness} to study the same kind of robustness results for more than 2 communities and especially in the sparse regime where $\frac{1}{\gamma} \sim p_i \sim \frac{1}{n}$. As highlighted by~\cite{peche20}, the sparse case may bring additional difficulties since {\it " standard spectral techniques in this regime involve the non-backtracking matrix (see~\cite{bordenave2015}), and its concentration properties are quite challenging to establish."} Regarding Section~\ref{subsec:latent-pos}, for some applications it may be interesting to go beyond the recovery of the pairwise distances by embedding the graph in the latent space while preserving the Gram structure. Such question has been tackled for example by \cite{perry20} but only for the Euclidean sphere in small dimensions.

\newpage

\newpage

\paragraph{Acknowledgements}

The authors are in debt to Tselil Schramm who gave a great talk at the S.S.Wilks Memorial Seminar in Statistics (at Princeton University) providing insightful comments on the problem of geometry detection or more specifically on her paper \cite{Liu2021TestingTF}.

\appendix
\section{Outline of the proofs of Theorems~\ref{thm:polylog} and~\ref{thm:polylog2}}
\label{apdx}

The proofs of Theorems~\ref{thm:polylog} and~\ref{thm:polylog2} (cf. Section~\ref{sec:resolution-geo}) are quite complex and giving their formal descriptions would require heavy technical considerations. In the following, we provide an overview of the proofs highlighting the nice mathematical tools used by \cite{Liu2021TestingTF} and their innovative combination while putting under the rug some technical aspects.  
\begin{enumerate}
\item[Step 1.] Relate the TV distance of the whole graphs to single vertex neighbourhood.
\begin{align}
2 \mathrm{TV}(G(n,p,d),G(n,p))^2 &\leq \mathrm{KL}(G(n,p,d)||G(n,p))\qquad \text{from Pinsker's inequality}\notag\\
&\leq n\times  \mathds E_{G_{n-1} \sim G(n-1,p,d)}\left[ \mathrm{KL}\left( \nu_n(\cdot|G_{n-1}) ,\mathrm{Bern}(p)^{\otimes(n-1)} \right) \right]\qquad \text{from Lemma~\ref{brennan20-lemmaKL}}\notag\\
&= \mathds E_{G_{n-1} \sim G(n-1,p,d)}\mathds E_{S \sim \nu_n(\cdot|G_{n-1})} \log \left(\frac{\nu_n(S|G_{n-1})}{p^{|S|}(1-p)^{n-1-|S|}}\right),\label{eq:tv2proba}
\end{align}
where $\nu_{n}(\cdot|G_{n-1})$ denotes the distribution of the neighbourhood of vertex $n$ when the graph is sampled from $G(n,p,d)$ conditional on the knowledge of the connections between pairs of nodes in $[n-1]$ given by $G_{n-1}$. Hence, the main difference with \cite{brennan20} is that the tensorization argument from Lemma~\ref{brennan20-lemmaKL} is used node-wise (and not edge-wise). We are reduced to understand how a vertex incorporates a given graph of size $n-1$ sampled from the distribution $G(n-1,p,d)$. At a high level, the authors show that if one can prove that for some $\epsilon>0$, with high probability over $G_{n-1} \sim G(n-1,p,d)$, it holds 
\begin{equation}\label{eq:neighb-probas}\forall S \subseteq [n-1],\quad \nu_n(S|G_{n-1})=\mathds P_{G \sim G(n,p,d)} ( N_G(n) = S \, |\, G_{n-1})=(1\pm \epsilon) p^{|S|} (1-p)^{n-1-|S|},\end{equation}
where $N_G(n)$ denotes the set of nodes connected to node $n$ in the graph $G$, then
\begin{equation}\label{eq:liu-proof1}\mathrm{TV}(G(n,p,d),G(n,p)) =o_n(n \epsilon^2).\end{equation}
\item[Step 2.] Geometric interpretation of neighbourhood probabilities from Eq.\eqref{eq:neighb-probas}.\\
For $G\sim G(n, p,d)$, if vertex $i$ is associated to a (random) vector $X_i$, and $(i, j)$ is an edge, we consequently know that $\langle X_i
,X_j\rangle \geq t_{p,d}$. On the sphere $\mathds S^{d-1}$, the locus of points where $X_j$ can be, conditioned
on $(i, j)$ being an edge, is a sphere cap centered at $X_i$ with a $p$ fraction of the sphere’s surface area, which we denote by $\mathrm{cap}(X_i)$. Similarly, if we know that $i$ and $j$ are not adjacent, the locus of points where $X_j$ can fall is the complement of a sphere cap with measure $1-p$ namely $\overline{\mathrm{cap}(X_i)}$, which we call an “anti-cap”.
Let us denote $\sigma$ is the normalized Lebesgue measure on $\mathds S^{d-1}$ so that $\sigma(\mathds S^{d-1})=1$. Equipped with this geometric picture, we can view the probability that vertex $n$’s neighborhood is exactly equal to $S \subseteq [n -1]$ as $\sigma(L_S)$, where $L_S \subseteq \mathds  S^{d-1}$ is a random set defined by \[L_S := \big(\bigcap_{i\in S} \mathrm{cap}(X_i)\big) \cap \big( \bigcap_{j\notin S} \overline{\mathrm{cap}(X_j)}\big).\]
To show that the TV distance between $G(n, p, d)$ and $G(n, p)$ is small, we need to prove that $\sigma(L_S)$ concentrates around $p^{|S|}(1-p)^{n-1-|S|}$ as suggested by Eqs.\eqref{eq:neighb-probas} and \eqref{eq:liu-proof1}.
\item[Step 3.] Concentration of measure of intersections of sets in $\mathds S^{d-1}$ with random spherical caps. \\ An essential contribution of \cite{Liu2021TestingTF} is a novel concentration inequality for the area of the intersection of a random spherical cap with any subset $L\subseteq \mathds S^{d-1}$.
\begin{Lemma} \citep[see][Corollary 4.10]{Liu2021TestingTF} \label{lemma:cap-cap} {\bf Set-cap intersection concentration Lemma.}\\ 
Suppose $L\subseteq \mathds S^{d-1}$ and let us denote by $\sigma$ the uniform probability measure on $\mathds S^{d-1}$. Then with high probability over $z\sim \sigma$ it holds
\[ \big|\frac{\sigma(L\cap \mathrm{cap}(z))}{p\sigma(L)}-1\big| = \mathcal O_n\big(\delta_n(L)\big) \quad \text{and}\quad  \big|\frac{\sigma(L\cap \overline{\mathrm{cap}(z)})}{(1-p)\sigma(L)}-1\big| = \mathcal O_n\big( \frac{p}{1-p}\delta_n(L)\big),\]
where $\delta_n(L)= \sqrt{\frac{\log \frac1p + \log \frac{1}{\sigma(L)} }{\sqrt d}}\mathrm{polylog}(n)$.
\end{Lemma}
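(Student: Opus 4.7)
Let $\mu_L := \sigma|_L/\sigma(L)$ denote the uniform probability measure on $L$, and set $g(z) := \sigma(L\cap \mathrm{cap}(z))/\sigma(L) = \mathds{P}_{x\sim\mu_L}(x\in \mathrm{cap}(z))$. By spherical symmetry of $\sigma$, $\mathds{E}_{z\sim\sigma}[g(z)] = p$, so the cap statement is a concentration bound on $g$ around its mean $p$, and the anti-cap statement follows analogously with normalisation $(1-p)$ in place of $p$. Since $g$ depends on $z$ only through indicators of affine half-spaces it is not Lipschitz in $z$; my plan is a smoothing plus Gaussian-concentration argument on $\mathds{S}^{d-1}$, combined with a recursive bootstrap that converts a $\sigma$-mass control on the boundary band into a $\mu_L$-mass control — the step where the entropy-transport technology referenced in the excerpt must enter.

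\textbf{Step 1 (smoothing and Lipschitz concentration).} Fix a scale $\eta > 0$ and replace $\mathds{1}_{[t_{p,d},\infty)}$ by a smooth nondecreasing $\phi_\eta$ equal to $0$ on $(-\infty,t_{p,d}-\eta]$, to $1$ on $[t_{p,d},\infty)$, with $\|\phi_\eta'\|_\infty \leq 2/\eta$. Then $g_\eta(z):=\mathds{E}_{x\sim\mu_L}[\phi_\eta(\langle x,z\rangle)]$ is $(2/\eta)$-Lipschitz on $(\mathds{S}^{d-1},\|\cdot\|)$. The uniform measure $\sigma$ satisfies a log-Sobolev inequality with constant $O(1/d)$ (Bakry-Émery on the sphere), which by the Herbst argument yields Gaussian concentration: for every $L$-Lipschitz $f$,
\begin{equation*}
\sigma\bigl(\{|f-\mathds{E}_\sigma f|\geq t\}\bigr)\leq 2\exp\!\bigl(-c\,d\,t^2/L^2\bigr).
\end{equation*}
Applied to $g_\eta$ this gives $|g_\eta(z)-\mathds{E}_\sigma g_\eta|\lesssim u/(\eta\sqrt{d})$ with probability $1-2e^{-cu^2}$.

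\textbf{Step 2 (smoothing bias and the boundary-band mass).} The remainders $|g-g_\eta|$ and $|p-\mathds{E}g_\eta|$ are bounded respectively by the $\mu_L$- and $\sigma$-mass of the boundary band $B_\eta(z):=\{x: t_{p,d}-\eta\leq\langle x,z\rangle\leq t_{p,d}\}$. A direct computation with the one-dimensional marginal density $\phi_d(t)\propto (1-t^2)^{(d-3)/2}$ and $t_{p,d}\asymp\sqrt{\log(1/p)/d}$ gives $\phi_d(t_{p,d})\asymp p\sqrt{d\log(1/p)}$, hence $\sigma(B_\eta(z))\lesssim \eta\,p\sqrt{d\log(1/p)}$. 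The central obstacle is that $\mu_L(B_\eta(z))$ is itself a random function of $z$ and can be much larger than its mean $\sigma(B_\eta(z))/\sigma(L)$ for bad $z$. I would handle this by invoking the lemma recursively: $B_\eta(z)$ is the symmetric difference of two caps of nearby sizes, so applying the set-cap concentration statement at scale $2\eta$ bootstraps a $\sigma$-bound to a $\mu_L$-bound at the cost of an $O(\log n)$ factor per halving. After $O(\log n)$ iterations one obtains the $\mathrm{polylog}(n)$ term in $\delta_n(L)$, with the entropic cost $\log(1/\sigma(L))$ of tilting from $\sigma$ to $\mu_L$ entering through the deviation parameter $u$.

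\textbf{Step 3 (balancing and the main difficulty).} Take $u^2\asymp\log(1/p)+\log(1/\sigma(L))$ to absorb both a union bound over the $O(\log n)$ recursion scales and the entropic cost of $\mu_L$, and balance the concentration error $u/(\eta\sqrt{d})$ against the smoothing bias $\eta\,p\sqrt{d\log(1/p)}$ by choosing $\eta^2\asymp u/(p\,d\sqrt{\log(1/p)})$. After simplification the relative error $|g(z)/p - 1|$ is bounded by $\sqrt{(\log(1/p)+\log(1/\sigma(L)))/\sqrt{d}}\cdot\mathrm{polylog}(n) = \delta_n(L)$, as claimed; the $\sqrt{d}$ (rather than $d$) in $\delta_n(L)$ reflects that the cap boundary is codimension one, so that the smoothing bias costs only a factor $\sqrt d$. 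The anti-cap statement is identical except that the natural normalisation is by $(1-p)$, yielding the extra $p/(1-p)$ factor. I expect Step 2 to be the main obstacle: a naive Cauchy–Schwarz comparison between $\mu_L$-mass and $\sigma$-mass of the band loses the correct $\sigma(L)$-dependence, and replacing it by a sharper entropy-transport argument on $\mathds{S}^{d-1}$ — presumably where the optimal transport maps of \cite{Liu2021TestingTF} enter — is the true technical heart of the proof.
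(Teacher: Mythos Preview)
Your balancing in Step~3 does not produce the claimed bound: the two $\sqrt d$'s cancel. With your own estimates, the relative concentration error is $u/(p\eta\sqrt d)$ and the relative smoothing bias is $\eta\sqrt{d\log(1/p)}$; optimising over $\eta$ gives $2\sqrt{u\sqrt{\log(1/p)}/p}$, which has \emph{no} $d$-dependence at all and blows up like $p^{-1/2}$ in the sparse regime. So the smoothing-plus-Lipschitz route, as you have set it up, cannot yield $\delta_n(L)$. The reason is structural: the worst-case Lipschitz constant $2/\eta$ of $g_\eta$ ignores that $\phi_\eta'$ is supported on a thin band, and this is exactly the same band whose $\mu_L$-mass you cannot control in Step~2 --- your recursive bootstrap there is circular (it invokes the very lemma being proved, at a nearby cap size, with no base case and no reason the iteration contracts).

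The paper's argument is not a refinement of this scheme but a genuinely different one. One couples $\mu_L$ with $\sigma$ via the $W_2$-optimal transport plan $\mathcal D$ and writes
\[
\mathds P_{x\sim\mu_L}\bigl(\langle z,x\rangle\geq t_{p,d}\bigr)
=\mathds P_{(x,y)\sim\mathcal D}\bigl(\langle z,y\rangle\geq t_{p,d}-\langle z,x-y\rangle\bigr),
\]
then splits according to whether $|\langle z,x-y\rangle|$ exceeds a threshold $u$. The first piece is a $\sigma$-cap of slightly perturbed radius, handled by the one-dimensional marginal estimate you already computed. The second piece is where the $d$ enters and stays: the Talagrand-type transport inequality on $\mathds S^{d-1}$ gives $W_2^2(\mu_L,\sigma)\lesssim d^{-1}\,\mathrm{KL}(\mu_L\|\sigma)=d^{-1}\log(1/\sigma(L))$, so the transport vectors $x-y$ are typically of length $O\bigl(\sqrt{\log(1/\sigma(L))/d}\bigr)$, and for a \emph{random} direction $z\sim\sigma$ one gets a further $1/\sqrt d$ from the inner product, yielding the $d^{-1/2}$ in $\delta_n(L)$. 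In short, the entropy $\log(1/\sigma(L))$ enters through $\mathrm{KL}$ in the transport inequality rather than through a union-bound parameter $u$, and the boundary-band issue you flagged never arises because one compares $x$ to its transported image $y\sim\sigma$ rather than smoothing the indicator.
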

\begin{proof}[Sketch of proof of Lemma~\ref{lemma:cap-cap}.]
We give an overview of the proof of Lemma~\ref{lemma:cap-cap}, highlighting its interesting connection with optimal transport. Let us consider some probability distribution $\nu$ on $\mathds S^{d-1}$. Let us denote $\mathcal D$ the optimal coupling between the measures $\nu$ and $\sigma$, i.e. $\mathcal D$ is a probability measure on $\mathds S^{d-1}\times \mathds S^{d-1}$ with marginals $\nu$ and $\sigma$ such that 
\[W_2(\nu,\sigma)^2= \int \|x-y\|_2^2 d\mathcal D(x,y),\]
where $W_2(\nu,\sigma)$ is the Wasserstein 2-distance between the measures $\sigma$ and $\nu$. Then for any $z\in \mathds S^{d-1}$ it holds
\begin{align}
\mathds P_{x \sim \nu}(\langle z,x\rangle >t_{p,d})=&\mathds P_{(x,y) \sim \mathcal D}(\langle z,y\rangle >t_{p,d}-\langle z,x-y\rangle)\notag\\
\leq & \mathds P_{y \sim \sigma}(\langle z,y\rangle >t_{p,d}-u(p,d))+\mathds P_{(x,y)\sim \mathcal D}(|\langle z,x-y\rangle |>u(p,d)),\label{eq:OT-lemma}
\end{align}
for some well chosen threshold $u(p,d)$ depending on $p$ and $d$. The first term in the right hand side of Eq.\eqref{eq:OT-lemma} can be proven to concentrate around $p$ with high probability over $z\sim \sigma$ with standard arguments. The second term in Eq.\eqref{eq:OT-lemma} quantifies how often a randomly chosen transport vector $x-y$ with $(x,y)\sim \mathcal D$ has a large projection in the direction $z$. One can prove that the optimal transport map $\mathcal D$ between $x \sim \nu$ and $y \sim \sigma$ has bounded length with high probability, and then translate this into a tail bound for the inner product
$\langle z, x -y\rangle$ for a random vector $z\sim \sigma$. As a consequence, one can bound with high probability over $z \sim \sigma$ the fluctuations of $\left|\mathds P_{x \sim \nu}(\langle z,x\rangle >t_{p,d})-p\right|$ which gives Lemma~\ref{lemma:cap-cap} if we take for $\nu$ the uniform measure on the set $L\subseteq \mathds S^{d-1}.$
\end{proof}
Applying Lemma~\ref{lemma:cap-cap} inductively and using a martingale argument, the authors prove that intersecting
$j$ random caps and $(k - j)$ random anticaps, we get a multiplicative fluctuation for $\sigma(L_S)$ around $p^{|S|}(1-p)^{n-1-|S|}$ that is of the order of $(1\pm \sqrt j \delta+ \sqrt{k-j}\frac{p}{1-p}\delta)$. Going back to Eq.\eqref{eq:liu-proof1}, this approach is sufficient to prove that \[\mathrm{TV}(G(n,p,d),G(n,p))= o_n\big( \frac{n^3p^2}{d} \big),\]
leading to the first statement of Theorem~\ref{thm:polylog2}.
\item[Step 4.] The sparse case and the use of the cavity method.\\
To get down to a polylogarithmic threshold in the sparse regime, the authors change of paradigm. Previously, they were bounding the quantity 
\begin{equation}\label{eq:neighboorhoud}\mathds P_{G \sim G(n,p,d)} ( N_G(n) = S \, |\, G_{n-1}) = \mathds E_{X_1,\dots,X_{n-1} \, |\, G_{n-1}} \mathds E_{X_n \sim \sigma}\big[ \mathds 1_{ N_G(n) = S} \big]=\mathds E_{X_1,\dots,X_{n-1} \, |\, G_{n-1}} \big[\sigma (L_S) \big],\end{equation}
by fixing a specific realization of latent positions $X_1, \dots X_{n-1}$ and then analyzing the probability that
the node $n$ connects to some $S \subseteq [n - 1]$. The probability that vertex $n$ is adjacent to all vertices in $S \subseteq [n-1]$ is exactly equal to the measure of the set-caps intersection, which appears to be tight. At a high level, this is a {\it "worst case approach"} to upper bound Eq.\eqref{eq:neighboorhoud} in the sense that the bound obtained from this analysis may be due to an unlikely latent configuration conditioned on $X_1, \dots, X_{n-1}$ producing $G_{n-1}$. To obtain a polylogarithmic threshold in the sparse case, one needs to analyze the concentration of $\sigma(L_S)$ on average over vector embeddings of $G_{n-1}$. 
To do so, the authors rely on the so-called {\it cavity method} borrowed from the field of statistical physics. The cavity method allows to understand the distribution of $(X_i)_{i\in S}$ conditional on forming $G_{n-1}$ for any $S\subseteq [n-1]$ with size of the order $pn=\Theta(1)$. We provide further details on this approach in the following.
\paragraph{A simplification using tight concentration for intersections involving anti-caps.}\cite{Liu2021TestingTF} first prove that due to tight concentration for the measure of the intersection of random anticaps with sets of lower bounded measure, one can get high-probability estimates for $\nu_n(S\,|\,G_{n-1})$ by studying the probability that $S \subseteq N_G(n)$, namely
\begin{equation}\label{eq:subsetneighbourhood}\mathds P (S\subseteq N_G(n) \, | \, G_{n-1}) = \mathds P\big( \forall i \in S, \; \langle X_i,X_n\rangle \geq t_{p,d}\, | \, G_{n-1}\big) = \underset{\substack{\qquad  X_n \sim \sigma \\ (X_i)_{i\in [n-1]}} \sim \sigma^{G_{n-1}}}{\mathds E} \prod_{i \in S} \mathds 1_{\langle X_i,X_n \rangle \geq t_{p,d}},\end{equation}
where $\sigma^{G_{n-1}}:=\left[\sigma^{\otimes (n-1)} \,|\, G_{n-1}\right]$. If $(X_i)_{i\in S}$ in Eq.\eqref{eq:subsetneighbourhood} was a collection of \underline{independent} random vectors distributed \underline{uniformly} on the sphere then Eq.\eqref{eq:subsetneighbourhood} would be exactly equal to $p^{|S|}$. In the following, we explain how the authors prove that both
of these properties are approximately true.
\paragraph{The cavity-method.} To bound the fluctuation of Eq.\eqref{eq:subsetneighbourhood} around $p^{|S|}$, \cite{Liu2021TestingTF} use the cavity-method. Let us consider $S\subseteq [n-1]$, $G_{n-1}$ sampled from $G(n-1,p,d)$ and its corresponding latent vectors. Let us denote by $\mathcal B_{G_{n-1}}(i,\ell)$ the ball of radius-$\ell$ around a vertex $i\in[n-1]$ in the graph $G_{n-1}$. Fixing all vectors except those in $K:= \bigcup_{i \in S} \mathcal B_{G_{n-1}}(i,\ell-1)$, the cavity method aims at computing the joint distribution of $(X_i)_{i\in S}$ conditional to $(X_i)_{i\notin K}$ and $G_{n-1}$. Informally speaking, we "carve out" a cavity of depth $\ell$ around each vertex $i \in S$ and we fix all latent vectors outside of these cavities as presented with Figure~\ref{fig:cavity-method}. 
\begin{figure}[!ht]
\centering
\includegraphics[scale=0.4]{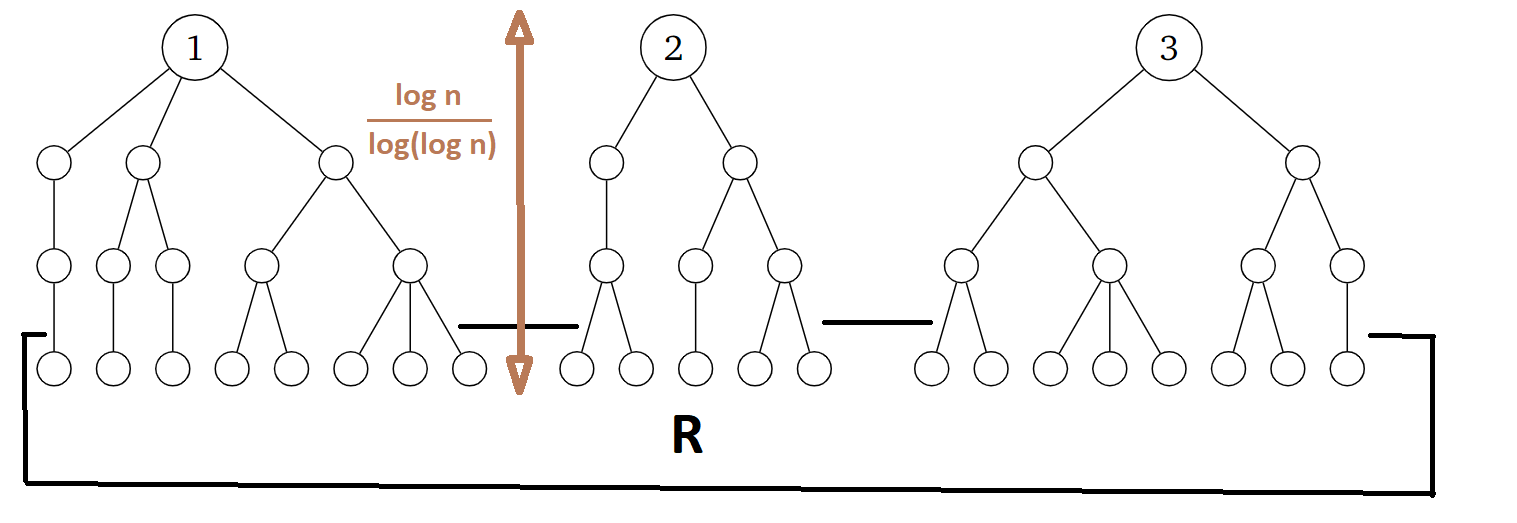}
\caption{Illustration of the cavity method to bound the fluctuation of Eq.\eqref{eq:subsetneighbourhood} around $p^{|S|}$ {\it i.e.,} to bound the deviation of the random variable $\sigma(L_S)$ conditioned on $X_1,\dots, X_{n-1}$ producing $G_{n-1}$. With high probability, the neighbourhood until depth $\ell = \frac{\log n}{\log \log n}$ of vertices in $S$ are disjoint trees. We fix the latent representation of vertices in the set $R:=[n-1] \backslash K$. Using the Belief Propagation algorithm, one can compute the distribution of $(X_i)_{i\in S} \;|\; (X_j)_{j \in R}$ where the latent positions $(X_j)_{j \in [n-1]}$ are sampled according to $\sigma^{G_{n-1}}$. This allows to bound the fluctuation of Eq.\eqref{eq:subsetneighbourhood} around $p^{|S|}$.}
\label{fig:cavity-method}
\end{figure}
The choice of the depth $\ell$ results from the following tradeoff: 
\begin{itemize}
\item We want to choose the depth $\ell$ small enough so that the balls $\mathcal B_{G_{n-1}}(i,\ell)$ for $i \in S$ are all trees and are pairwise disjoint with high probability.
\item We want to choose $\ell$ as large as possible in order to get a bound on the fluctuations of Eq.\eqref{eq:subsetneighbourhood} around $p^{|S|}$ as small as possible.
\end{itemize}
To formally analyze the distribution of the unfixed vectors upon resampling them, the authors set up a constraint satisfaction problem instance over a continuous alphabet that encodes the edges of $G_{n-1}$ within the trees around $S$: each node has a vector-valued variable in $\mathds S^{d-1}$, and the constraints are
that nodes joined by an edge must have vectors with inner product at least $t_{p,d}$. The marginal of the latent vectors $X_i$ for $i\in S$ can be obtained using the Belief-Propagation algorithm. Let us recall that Belief-Propagation computes marginal distributions over labels of constraints satisfaction problems when the constraints graph is a tree. 
\paragraph{A simple analysis of Belief-Propagation.} 
To ease the reasoning, let us suppose that $1\in S$ is such that $\mathcal B_{G_{n-1}}(1,\ell-1)$ is a path. Without loss of generality, we consider that the path is given by Figure~\ref{fig:pathBP}.
\begin{figure}[!ht]
\centering
\begin{tikzpicture}
    { [start chain,
        node distance = 7mm,
        every node/.style = {inner sep = 2mm, on chain, circle,draw}, 
        every join/.style = {-},every node/.append style={join}] 
    \node (1) {1};
    \node (2) {2};
    \node (3) {3};
    \node (4) {4};
    \node (dots) {$\dots$};
    \node (e) {$\ell-1$};
    \node (ef) {$\ell$};
    \draw[-.] (9.2,-0.5) -- (9.2,1);}
    \draw node[above=of e,yshift=-0.9cm,xshift=0.4cm] (K) {$K$};
    \draw node[draw=none,right=of K,xshift=-0.3cm] {$R$};
\end{tikzpicture}
\caption{Simple analysis of the Belief Propagation algorithm when the neighbourhood of vertex $1\in S$ at depth $\ell$ is a path.}
\label{fig:pathBP}
\end{figure}
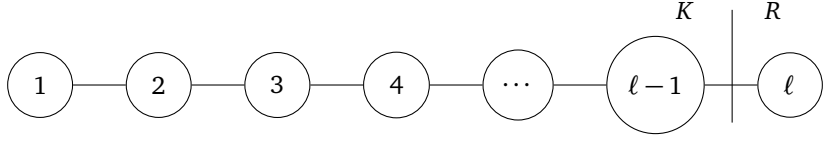
Every vector is passing to its parent along the path a convolution of its own measure (corresponding to its "message") with a cap of measure $p$. Denoting by $P$ the linear operator defined so that for any function $h : \mathds S ^{d-1}\to \mathds R$, \[Ph(x) =\frac1p \int_{\mathrm{cap}(x)} h(y) d\sigma(y),\] the authors prove that for some $a>0$, for any probability measure $\mu$ on $\mathds S^{d-1}$ with density $h$ with respect to $\sigma$, 
\begin{equation}\label{eq:contraction}\mathrm{TV}(Ph,\sigma) \leq \mathcal O_n\big(\frac{\log^a n}{\sqrt d}\big) \mathrm{TV}(\mu,\sigma),\end{equation}which is a contraction result. Since at every step of the Belief Propagation algorithm, a vertex sends to its parent the image by the operator $P$ of its own measure, we deduce from Eq.\eqref{eq:contraction} that the parent receives a measure which is getting closer to the uniform distribution by a multiplicative factor equal to $\frac1{\sqrt d}$. The proof of Eq.\eqref{eq:contraction} relies on the set-cap intersection concentration result (see Lemma~\ref{lemma:cap-cap}). To get an intuition of this connection, let us consider that $h$ is the density of the uniform probability measure $\mu$ on some set $L\subseteq \mathds S^{d-1}$, then
\[Ph(x) = \frac1p \mathds P_{Y \sim \mu}(Y \in \mathrm{cap}(x))=\frac1p \frac{\sigma(L\cap \mathrm{cap}(x))}{\sigma(L)},\]
and we can conclude using Lemma~\ref{lemma:cap-cap} that ensures that with high probability over $x \sim \sigma$, $\sigma(L\cap \mathrm{cap}(x)) = (1\pm \mathcal O_n(\frac{\log ^a n}{\sqrt d}))p\sigma(L)$. Applying Eq.\eqref{eq:contraction} $\ell = \frac{\log n}{\log \log n}$ times for $d$ being some power of $\log n$, one can show that,
\[\mathrm{TV}(P^{\ell}\mu,\sigma)=\mathcal O_n\big[\big(\frac{\log^a n}{\sqrt d}\big)^{\ell}\big] =  o_n\big(\frac{1}{\sqrt n}\big).\]
With this approach, one can prove that the distribution of $(X_i)_{i\in S}$ is approximately $\sigma^{\otimes |S|}$. This allows to bound the fluctuations of Eq.\eqref{eq:subsetneighbourhood} around $p^{|S|}$ which leads to Theorem~\ref{thm:polylog} using Eqs.\eqref{eq:neighb-probas} and \eqref{eq:liu-proof1}.\\ \\
As a concluding remark, we mention that \cite{Liu2021TestingTF} demonstrate a coupling of $G_- \sim G(n, p - o_n(p))$, $G \sim G(n, p,d)$, and $G_+ \sim G(n, p + o_n(p))$ that satisfies $G_- \subseteq G \subseteq G_+$ with high probability. This sandwich-type result holds for a proper choice of the latent dimension and allows to transfer known properties of Erdös-Renyi random graphs to RGGs in the studied regime. For example, the authors use this coupling result to upper bound the probability that the depth-$\ell$ neighborhood of some $i \in[n]$ forms a tree under $G(n, p,d)$ in the sparse regime with $d=\mathrm{polylog}(n)$.   
\end{enumerate}
\clearpage

\bibliography{sample.bib}

\end{document}